\declaretheorem[]{theorem}
\declaretheorem[numberlike=theorem, style=definition]{definition}
\declaretheorem[numberlike=theorem]{lemma}
\declaretheorem[numberlike=theorem]{corollary}
\newcommand{\Oh}{\ensuremath{O}}
\newcommand{\R}{\ensuremath{\mathds{R}}}
\newcommand{\N}{\ensuremath{\mathds{N}}}
\DeclarePairedDelimiter\abs{\lvert}{\rvert}
\DeclarePairedDelimiter\set{\lbrace}{\rbrace} 
\DeclareMathOperator{\ecc}{ecc}
\DeclareMathOperator{\ED}{\mathcal{ED}}
\DeclareMathOperator{\FD}{\mathcal{FD}}
\title{Network Farthest-Point Diagrams\thanks{This research has been partially funded by NSERC and FQRNT. A preliminary version of this work was presented at the 24th Canadian Conference on Computational Geometry~\cite{bose2012farthest} and was part of the Diplomarbeit (Master's thesis) of the fifth author~\cite{grimm2012charting}.}}
\author{Prosenjit Bose\thanks{Computational Geometry Lab, School of Computer Science,
        Carleton University}
        \and Kai Dannies\thanks{Institut für Simulation und Graphik, Fakultät für Informatik,  Otto-von-Guericke-Universität Mag\-de\-burg}
        \and Jean-Lou De Carufel\footnotemark[2]
        \and Christoph Doell\footnotemark[3]
        \and Carsten Grimm\footnotemark[2]\ \footnotemark[3]
        \and Anil Maheshwari\footnotemark[2] 
        \and Stefan Schirra\footnotemark[3]
        \and Michiel Smid\footnotemark[2]}
\begin{document}%
\maketitle%
\begin{abstract}%
 Consider the continuum of points along the edges of a network, i.e., an undirected graph with positive edge weights. We measure distance between these points in terms of the shortest path distance along the network, known as the \emph{network distance}. Within this metric space, we study farthest points. 

We introduce network farthest-point diagrams, which capture how the farthest points---and the distance to them---change as we traverse the network. We preprocess a network \(G\) such that, when given a query point \(q\) on \(G\), we can quickly determine the farthest point(s) from \(q\) in \(G\) as well as the farthest distance from \(q\) in \(G\). Furthermore, we introduce a data structure supporting queries for the parts of the network that are farther away from \(q\) than some threshold \(R > 0\), where \(R\) is part of the query.


We also introduce the \emph{minimum eccentricity feed-link problem} defined as follows. Given a network \(G\) with geometric edge weights and a point \(p\) that is not on \(G\), connect \(p\) to a point \(q\) on \(G\) with a straight line segment \(pq\), called a \emph{feed-link}, such that the largest network distance from \(p\) to any point in the resulting network is minimized. We solve the minimum eccentricity feed-link problem using eccentricity diagrams. In addition, we provide a data structure for the query version, where the network \(G\) is fixed and a query consists of the point \(p\).
\end{abstract}%

\section{Introduction}%
 
We are given a network, i.e., an undirected graph with positive edge weights. We consider the continuum of points along the edges of this network and measure distance between these points in terms of the shortest path distance along the network. Within this metric space, we study farthest points. 

We introduce network farthest-point diagrams, which capture how the farthest points---and the distance to them---change as we traverse the network. We preprocess a network \(G\) such that, when given a query point \(q\) on \(G\), we can quickly determine the farthest point(s) from \(q\) in \(G\) as well as the farthest distance from \(q\) in \(G\). Furthermore, we introduce a data structure supporting queries for the parts of the network that are farther away from \(q\) than some threshold \(R > 0\), where \(R\) is part of the query.

This has applications to location analysis: Think of the network as roads in a city. When choosing the location for a new service facility, an urban engineer might want to know the parts of the network that are close-by (well-served) and the parts that are far away (ill-served). For instance, the farthest distance from a hospital influences the worst case response time of an emergency unit send from that hospital. Depending on the type of facility, we may be interested in locations with minimal farthest distance (network centers) or locations with maximal farthest distance (peripheral points) or points with farthest distance from some given location. 

We use the network farthest-point diagrams to solve the following network extension problem. Given a network \(G\) with geometric edge weights and a point \(p\) that is not on \(G\), connect \(p\) to a point \(q\) on \(G\) with a straight line segment \(pq\), called a \emph{feed-link}, such that the largest distance from \(p\) to any point in the resulting network is minimized. In terms of our example, this corresponds to the task of connecting a hospital to a network of roads minimizing the worst-case emergency unit response time. The main difficulty of feed-link problems \cite{aronov2011connect} stems from allowing feed-links to connect to any location in the network---not just to a few candidate locations---and taking every point on the network into account when measuring the utility of the feed-link.

\subsection{Related Work}

In their original work, \textcite{aronov2011connect} introduced the feed-link problem with this example, i.e., connecting a new hospital to a network of roads. They consider a target function to measure the utility of a feed-link, other than its length. \textcite{aronov2011connect} seek a feed-link that minimizes the worst-case detour one may take from any point on the network by traveling along the roads to the hospital as opposed to flying directly. The detour, or \emph{dilation}, between two points \(p\) and \(q\) on a network is measured as the ratio of the distance of \(p\) and \(q\) via the network and the length of the straight line connecting \(p\) and \(q\). We refer to this problem as the \emph{minimum dilation feed-link problem}. \Cref{fig::sketch_feedlinks} illustrates dilation and the travel time to a farthest point (eccentricity) as target functions for the feed-link problem. See, for instance, \textcite{gruene2006geometric} for a comprehensive summary of dilation and its properties.
\begin{figure}
	\centering
	\begin{subfigure}[b]{.48\linewidth}
		\centering
		\includegraphics[scale=1,page=3]{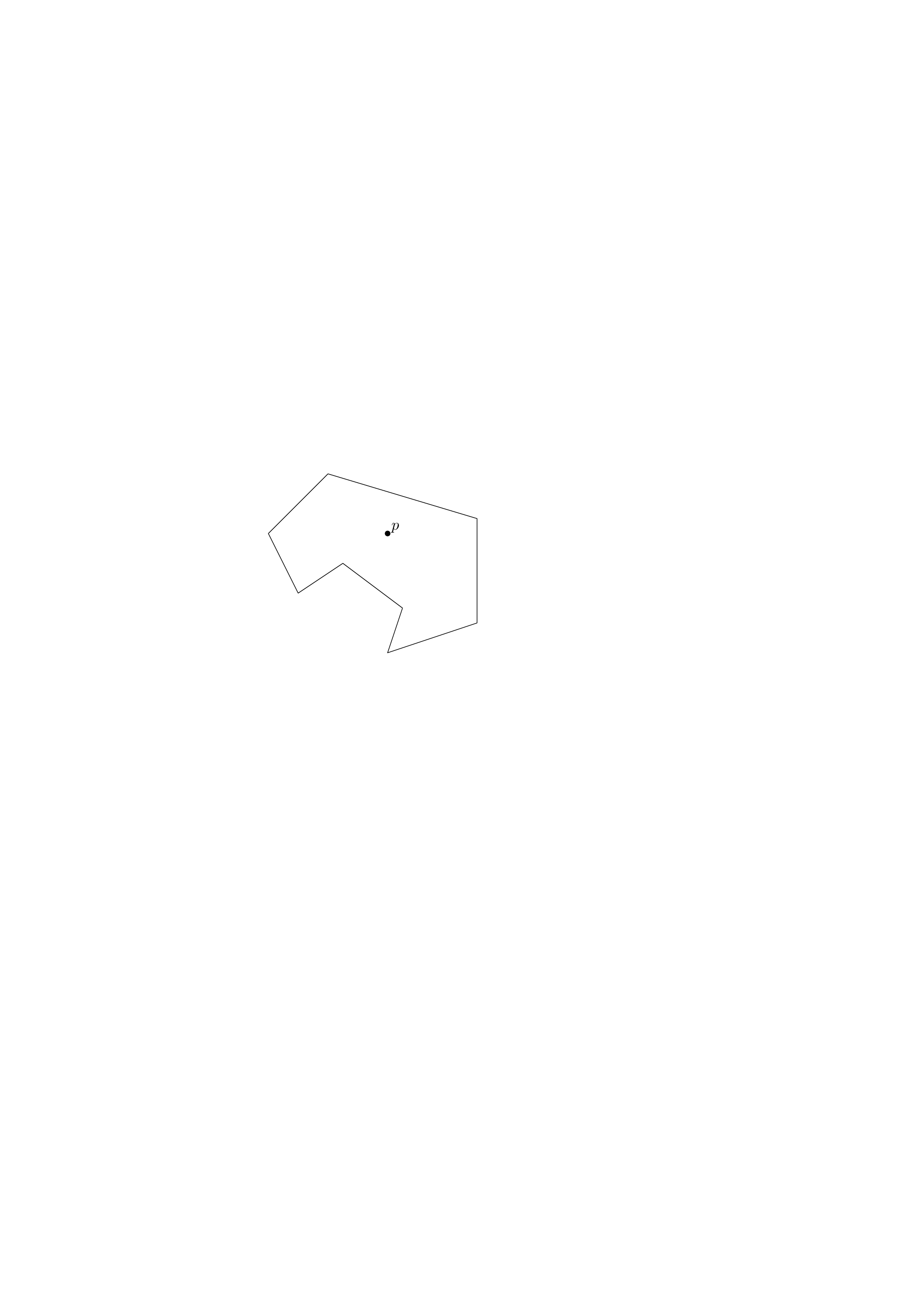}
		\caption{The dilation of \(p\) and a point \(r\) on \(G\) is the ratio between the length of a shortest path (orange) from \(r\) to \(p\) via the extended network \(G+pq\) versus the Euclidean distance of \(p\) and \(r\) (blue, dotted). \label{fig::sketch_feedlinks_dilation}}
	\end{subfigure} \quad
	\begin{subfigure}[b]{.48\linewidth}
		\centering
		\includegraphics[scale=1,page=3]{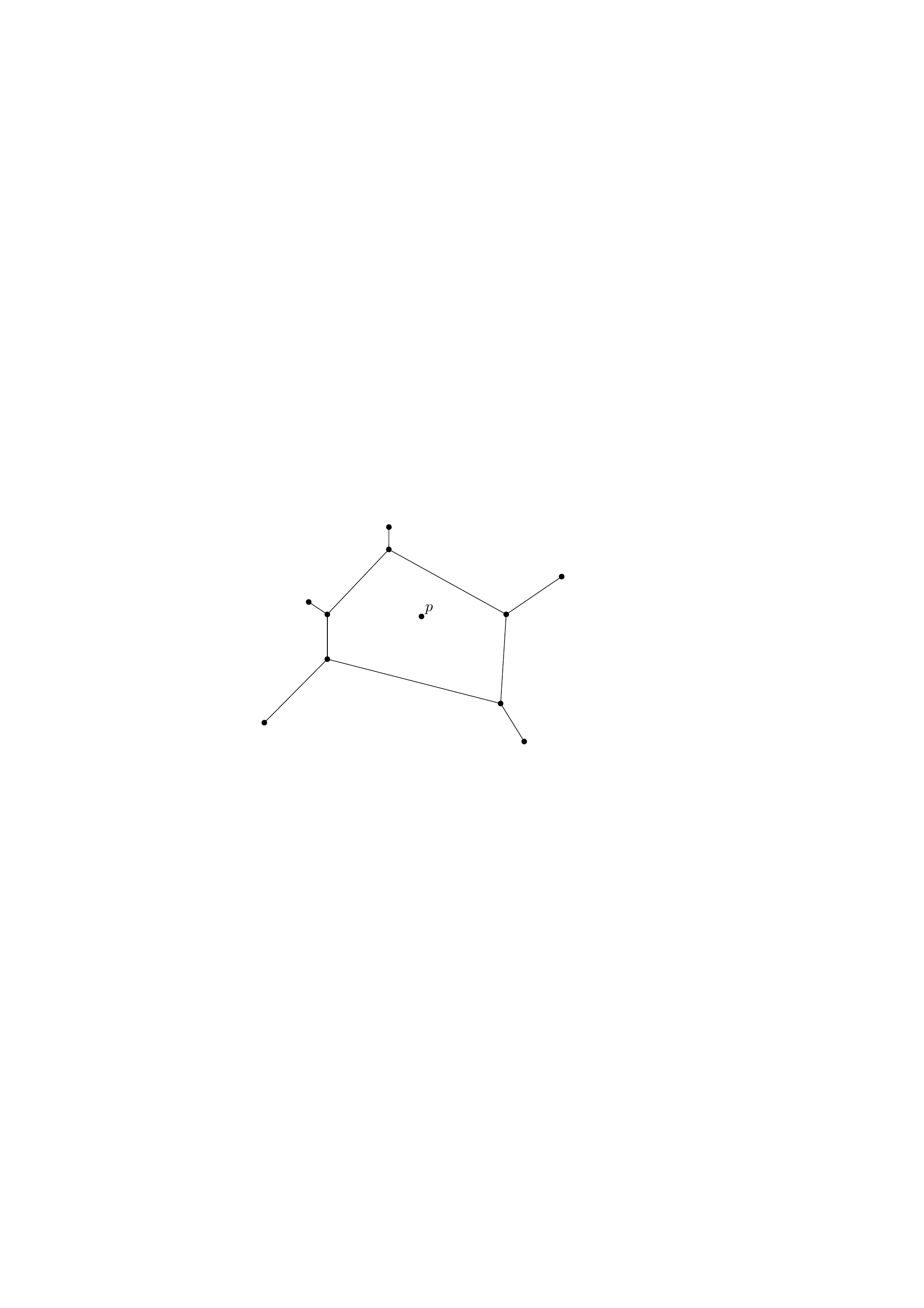}
		\caption{The eccentricity of \(p\) in the extended network \(G+pq\) is the length of the feed-link (light red) plus the eccentricity of \(q\) in \(G\), i.e., length of a shortest path (light green) from \(q\) to one of its farthest points \(\bar q\) in \(G\). \label{fig::sketch_feedlinks_eccentricity}}
	\end{subfigure}
	\caption{An illustration of the minimum dilation feed-link problem \subref{fig::sketch_feedlinks_dilation}, the minimum eccentricity feed-link problem \subref{fig::sketch_feedlinks_eccentricity}, and their target functions. In both cases we have a network \(G\) and a point \(p\) in the plane that is connected to \(G\) via the feed-link \(pq\) (dashed) resulting in an extended network \(G+pq\).   \label{fig::sketch_feedlinks}}
\end{figure}

We evaluate a feed-link with respect to all locations on a network. This means, if we use embedded graphs to model the network, all (uncountably many) points on this embedding---and not just (finitely many) vertices---count as possible farthest points. \textcite{aronov2011connect} point out that the restriction to vertices yields a related feed-link problem where we seek the optimal feed-link to minimize the detour to a new train station in a railway system. In this scenario, we are only interested in the detour of the paths from other train stations to the new one. If we model the stations as vertices, then the target function is the \emph{stretch factor}~\cite{narasimhan2007geometric} of the new station with respect to the extended railway system.

 Among other results,  \textcite{aronov2011connect} solve the minimum dilation feed-link problem for polygonal cycles in \(\Oh(\lambda_7(n)\log n)\) time, where \(\lambda_7(n)\) is the maximum length of a Davenport-Schinzel sequence~\cite{agarwal1989sharp} of order seven in \(n\) symbols. \textcite{agarwal1989sharp} show that \(\lambda_7(n)\) is \emph{almost linear in \(n\)}, more precisely \(\lambda_7(n) \le n \cdot 2^{\Oh{(\alpha(n)^{2}\log\alpha(n))}}\), where \(\alpha(n)\) is the inverse Ackermann function. Davenport-Schinzel sequences occur in the time bound, because \textcite{aronov2011connect} rely on computing certain upper envelopes \cite{agarwal2000davenport,hershberger1989finding}. \textcite{savic2012linear} improved the result for polygonal cycles to \(\Oh(n)\) with their \emph{sliding lever} algorithm. 

The subdivision of a network into parts with common farthest points is the \emph{farthest-point Voronoi diagram}~\cite{oktabe2000spatial} on the metric space formed by the uncountably many points on the network and the network distance. Previously studied network Voronoi diagrams~\cite{oktabe2000spatial,taniar2011spatial,tran2009reverse,erwig2000graph,furutaambulance,hakimi1992partition,okabe2008generalized,kolahdouzan2004voronoi} subdivide a network with respect to only finitely many reference points, e.g., depending on which reference point is closest or farthest. \textcite{oktabe2000spatial} survey various notions of Voronoi diagrams, including some for networks. We refer the reader to \textcite{okabe2008generalized} for more types of network Voronoi diagrams and for further references. 

Creating maps of the farthest points in a network relates to problems from location analysis. For instance, in the \emph{continuous absolute \(1\)-center problem}~\cite{frank1967note} we seek a point in a network with minimal network distance to its farthest points. In terms of our prototype example, this is the problem of identifying the ideal position of a new hospital on a network. \textcite{kincaid2011exploiting} and \textcite{tansel2011discrete} survey related notions and results.

\subsection{Preliminaries and Problem Definition} \label{sec::preliminaries}
A \emph{network} is a simple, weighted, finite, connected, and undirected graph \(G=(V,E)\), where \(V\) is a finite set of points in \(\R^2\), and \(E\) is a set of line segments whose endpoints are in \(V\). Each edge \(e \in E\) has a positive weight \(w_{e} > 0\). We write \(uv\) for an edge with endpoints \(u\) and \(v\). A point \(p\) on an edge \(uv\in E\) \emph{subdivides} \(uv\) into two sub-edges \(up\) and \(pv\) such that \(w_{up} = \lambda w_{uv}\) and \(w_{pv} = (1-\lambda) w_{uv}\) for some \(\lambda \in [0,1]\). 

Consider the weighted shortest path distance \(d_G \colon V \times V \to [0,\infty)\) between vertices of \(G\) with respect to the edge weights \(w_{e}\), \(e \in E\). This can be extended to arbitrary points \(p\) and \(q\) on \(G\) by considering them to be vertices for the sake of evaluating \(d_G(p,q)\). More precisely, if we subdivide the edges containing \(p\) and \(q\) according to the above, then  \(d_G(p,q)\) is the weighted shortest path distance of \(p\) and \(q\) in the subdivided network. An example is shown in \cref{fig::network_distance_example}. We refer to this distance as the \emph{network distance} \cite{aronov2011connect,gruene2006geometric} on \(G\). 
\begin{figure}[ht]
\centering
\begin{subfigure}[t]{.47\linewidth}
\centering\includegraphics[scale=1, page=1]{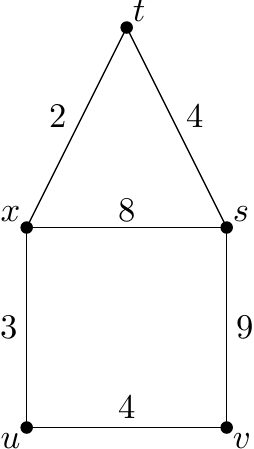}
\subcaption{A network \(G\).\label{fig::network_distance_example_input}}
\end{subfigure}\hspace{0.02\linewidth}
\begin{subfigure}[t]{.47\linewidth}
\centering\includegraphics[scale=1, page=4]{network_distance_example}
\subcaption{Subdivision of \(G\) with respect to \(p\) and \(q\).\label{fig::network_distance_example_distance}}
\end{subfigure} 
\caption{A network \(G\) \protect\subref{fig::network_distance_example_input} with edge weights as indicated. The subdivision of \(G\) according to the positions of \(p = \frac{1}{4}u + \frac{3}{4}v\) and \(q = \frac{1}{2}s + \frac{1}{2}t \) \protect\subref{fig::network_distance_example_distance}. We have \(d_G(p,q)= 10\) achieved on the highlighted path (blue).}\label{fig::network_distance_example}
\end{figure}

\begin{lemma} The network distance \(d_G(\cdot,\cdot)\) is a metric on \(G\), i.e., we have 
\begin{align}
   	d_G(p,q) &\ge 0 \text{ for all } p,q \in G,  \tag{non-negativity} \\
        d_G(p,q) &= 0 \text{ if and only if } p = q, \tag{identity of indiscernibles} \\
        d_G(p,q) &= d_G(q,p) \text{ for all } p,q \in G, \tag{symmetry} \\
        d_G(p,q) &\le d_G(p,r) + d_G(r,q) \text{ for all } p,q,r \in G\enspace . \tag{triangle inequality}
\end{align} 
\end{lemma}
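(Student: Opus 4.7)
My plan is to reduce each of the four axioms to known facts about shortest-path distances in finite weighted graphs, applied to a suitable subdivision of $G$. Since all edges have strictly positive weight and the network is connected, classical graph theory guarantees that $d_G(u,v)$ is a well-defined non-negative real number for vertices $u,v$. For arbitrary points $p,q$ on $G$, I would always work in the subdivided network $G_{pq}$ obtained by inserting $p$ and $q$ as new vertices (splitting their host edges into two sub-edges each, each of positive weight, as described in Section~\ref{sec::preliminaries}); then $d_G(p,q) = d_{G_{pq}}(p,q)$ is a standard shortest-path distance and the usual arguments apply.

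For non-negativity I would observe that every path in $G_{pq}$ has length equal to a sum of positive edge weights, hence is non-negative, and $d_G(p,q)$ as the minimum of such lengths is non-negative as well. For the identity of indiscernibles, the direction $p=q \Rightarrow d_G(p,q)=0$ is immediate from the trivial zero-length path. The reverse direction requires a short case analysis: if $p\neq q$, then in $G_{pq}$ the vertex $p$ is distinct from $q$, so any walk from $p$ to $q$ uses at least one sub-edge and therefore has length at least the minimum sub-edge weight, which is strictly positive. Symmetry follows because $G$ is undirected: reversing any $p$-to-$q$ walk in $G_{pq}$ yields a $q$-to-$p$ walk of the same length, so the sets of path lengths coincide and hence their infima agree.

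The triangle inequality is the only part whose formalization requires a little care, because the three points $p,q,r$ may lie on different edges or even on the same edge. I would introduce the common subdivision $G_{pqr}$, which simultaneously inserts $p$, $q$, and $r$ as vertices; one checks that subdividing an edge twice preserves all distances, so the restrictions $d_{G_{pqr}}(p,r)$, $d_{G_{pqr}}(r,q)$, and $d_{G_{pqr}}(p,q)$ agree with the corresponding $d_G$ values. Then I would take a shortest $p$-$r$ walk $\pi_1$ and a shortest $r$-$q$ walk $\pi_2$ in $G_{pqr}$, concatenate them at $r$ to obtain a $p$-$q$ walk of length $d_G(p,r)+d_G(r,q)$, and conclude that $d_G(p,q)$, being the infimum over all such walks, is at most this sum.

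The main obstacle is the bookkeeping around subdivisions, in particular verifying that shortest-path distances are invariant under introducing a degree-two vertex on an edge and that a single subdivision $G_{pqr}$ can be used to compare all three pairwise distances consistently. Once that invariance lemma is in place, each of the four axioms reduces to a one-line argument on a finite weighted graph.
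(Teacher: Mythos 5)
Your proposal is correct and follows essentially the same route as the paper's proof: positivity of edge weights gives non-negativity and the identity of indiscernibles, undirectedness gives symmetry via path reversal, and concatenating shortest paths at $r$ gives the triangle inequality. The only difference is that you make the subdivision bookkeeping (working in $G_{pq}$ and $G_{pqr}$ and checking distance invariance under inserting degree-two vertices) explicit, whereas the paper argues directly with paths on $G$ and leaves that invariance implicit in its definition of the network distance.
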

\begin{proof} Since all edge weights are positive, all paths on \(G\) have non-negative weight and only paths consisting of a single point have weight zero. Thus, non-negativity and the identity of indiscernibles hold, i.e., we have \(d_G(p,q) \ge 0\) for all \(p, q \in G\) where \(d_G(p,q) = 0\) if and only if \(p = q\) . Since the edges are undirected, any path from \(p\) to \(q\) is also a path from \(q\) to \(p\). Thus, we have symmetry, i.e., \(d_G(p,q) = d_G(q,p)\). Since \(d_G(p,q)\) is the length of a shortest path from \(p\) to \(q\), there cannot be a point \(r\) on \(G\) such that \(d_G(p,q) > d_G(p,r) + d_G(r,q)\). Otherwise, the concatenation of a shortest path from \(p\) to \(r\) with a shortest path from \(r\) to \(q\) yields a path from \(p\) to \(q\) that is shorter than the shortest path from \(p\) to \(q\). Thus, the triangle inequality holds. 
\end{proof}

Let \(S \subset G\) be a subset of the points on \(G\). The \emph{boundary} of \(S\), denoted by \(\partial S\), consists of all points \(p \in S\) such that for each value \(\epsilon > 0\) there exists one point on \(G\) within network distance \(\epsilon\) from \(p\) that is in \(S\) and there exists another point on \(G\)  within network distance \(\epsilon\) from \(p\) that is not in \(S\). For example, the boundary of the set of points on the highlighted path in \cref{fig::network_distance_example_distance} consists of the points \(p\), \(q\), and \(x\).

The following definition generalizes \emph{graph eccentricity}~\cite[pp.~35--36]{harary1969graph}, which is usually introduced with respect to distances between vertices. Refer to \cref{fig::maltese_cross::example_eccentricity} for an illustration of this definition.
 \begin{definition}[Eccentricity] \label{def:eccentricity}
    Let \(G\) be a network and let \(p\) be a point on \(G\). The largest network distance towards \(p\) is called the \emph{eccentricity} of \(p\) with respect to \(G\) and it is denoted by \(\ecc_G(p)\), i.e.,
\begin{align*}
	\ecc_G(p) \coloneqq \max_{q \in G} d_G(p,q).
\end{align*}
A point \(q\) on \(G\) is called \emph{eccentric} to \(p\) if it is farthest from \(p\) with respect to the network distance on \(G\), i.e., if \(d_G(p,q) = \ecc_G(p)\). We say that a point \(q\) is \emph{eccentric} if there is a point \(p\) on \(G\) such that \(q\) is eccentric to \(p\).
\end{definition}%
\begin{figure}[hb]
\centering
\begin{subfigure}[t]{.47\linewidth}
\centering\includegraphics[scale=1, page=1]{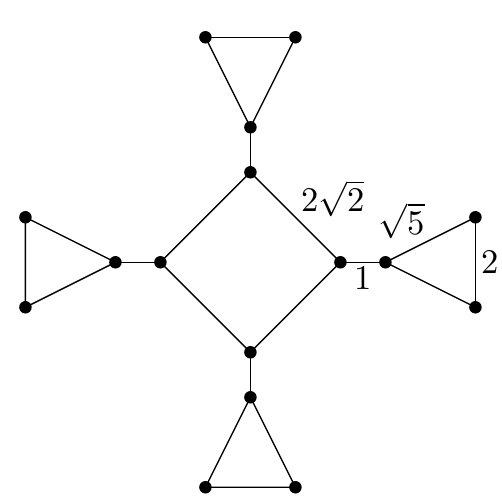}
\subcaption{A network \(G\).\label{fig::maltese_cross::weights}}
\end{subfigure}\hspace{0.02\linewidth}
\begin{subfigure}[t]{.47\linewidth}
\centering\includegraphics[scale=1, page=2]{maltese_cross_example}
\subcaption{A point \(p\) on \(G\) and a point \(\bar p\) eccentric to \(p\).\label{fig::maltese_cross::eccentricity}}
\end{subfigure} 
\caption{A network \(G\) \subref{fig::maltese_cross::weights} with (geometric) edge weights as indicated. A point \(p = \frac{1}{2}u + \frac{1}{2}v\) with its eccentric point \(\bar p = \frac{1}{2}s + \frac{1}{2}t\) is shown \subref{fig::maltese_cross::eccentricity}. The eccentricity of \(p\) is \(\ecc(p) = \frac{3}{2}\sqrt{5} + 4\sqrt{2} + 3\), achieved on the highlighted path (orange). Neither \(p\) nor \(\bar p\) are vertices of \(G\) in this example.}\label{fig::maltese_cross::example_eccentricity}
\end{figure}

In the remainder of this article we will omit the subscript indicating the underlying network \(G\) in all of the above notation when it is clear from the context. Unless stated otherwise, we refer to the network distance whenever we describe distance between points on a network. %

Our terminology for ill-serviced parts of a network is defined as follows. Let \(p\) be a point on a network \(G\), and let \(R \ge 0\). We say a point \(q\) on  \(G\) is  \emph{\(R\)-far} from \(p\), if \(d(p,q) \ge R\). Likewise, we call the set of points on \(G\) that are \(R\)-far from \(p\) the \(R\)-far sub-network of \(G\) from \(p\).  

Alongside with our goal to characterize farthest-point information and changes therein, we aim to design efficient data structures to answer the following types of queries for any point \(p\) on a given network \(G\).%
\begin{enumerate}
    \item What is the eccentricity of \(p\)?
    \item Which points on \(G\) are farthest from \(p\) with respect to the network distance?
    \item What is the \(R\)-far sub-network of \(G\) from \(p\) for some value \(R \ge 0\)? 
\end{enumerate}%
We represent the query point \(p\) as a pair made of an edge \(uv\) and a value \(\lambda \in [0,1]\) with \(p \in uv\) and \(p = (1-\lambda) u + \lambda v\). We refer to a query for the eccentricity as an \emph{eccentricity query}, and we refer to a query for the set of farthest points as \emph{farthest-point-set query}. 

With the notation established above, we can formally define the term feed-link and the minimum eccentricity feed-link problem as follows. An example is shown in \cref{fig::maltese_cross::minimum_feed_link_problem}. 
\begin{figure}[!hb]
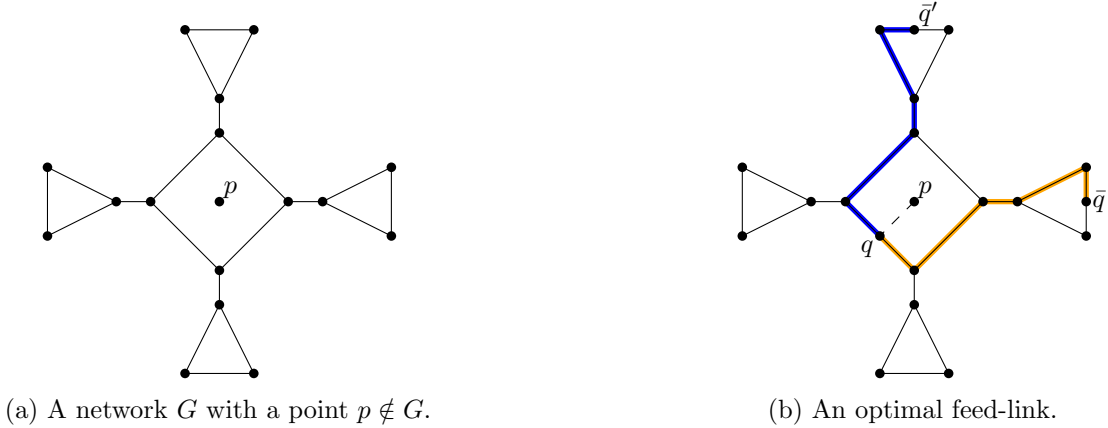

	\centering
	\begin{subfigure}[t]{.47\linewidth}
		\centering
		\includegraphics[scale=1, page=3]{maltese_cross_example}
		\caption{A network \(G\) with a point \(p \notin G\).}\label{fig::maltese_cross::feedlink_example_input}
	\end{subfigure}\hspace{0.02\linewidth}
	\begin{subfigure}[t]{.47\linewidth}
		\centering
		\includegraphics[scale=1, page=5]{maltese_cross_example}
		\caption{An optimal feed-link.}\label{fig::maltese_cross::optimal_feed_link}
	\end{subfigure} 
	\caption{An instance of the minimum eccentricity feed-link problem \subref{fig::maltese_cross::feedlink_example_input}. The anchor \(q\) of an optimal feed-link (dashed) \subref{fig::maltese_cross::optimal_feed_link} balances the distance to the eccentric points \(\bar q\) and \(\bar q'\).}\label{fig::maltese_cross::minimum_feed_link_problem}
\end{figure}
\begin{restatable}[Feed-links and the Minimum Eccentricity Feed-Link Problem]{problem}{mineccfeedlinkprobdef} \label{def::minimum_eccentricity_feed_link_problem}
  Let \(p\in \R^2\) be a point and \(G = (V,E)\) be a straight-line embedded network with geometric edge weights.
	\begin{enumerate}[(i)]
	    \item A straight-line segment \(pq\) with \(q\in G\) is called a \emph{feed-link} connecting \(p\) to \(G\) and \(q\) is called the \emph{anchor} of this feed-link. The network that results from subdividing the edge containing \(q\) at \(q\) and adding the edge \(pq\) is denoted by \(G+pq\). It is referred to as the \emph{extension} of \(G\) by the feed-link \(pq\).
	    \item We call the task of finding a point \(q\) on \(G\) such that the feed-link \(pq\) with anchor \(q\) minimizes the eccentricity of \(p\) with respect to \(G+pq\), i.e.,  the task of finding a point \(q\) on \(G\) that minimizes the expression
		\begin{align}
			\ecc_{G+pq}(p) = \abs{pq} + \ecc_G(q) = \abs{pq} + \max_{r\in G} d_G(q,r), \label{eq::target_function_minimum_eccentricity_feed_link_problem}
		\end{align}
		the \emph{minimum eccentricity feed-link problem}. 	
	\end{enumerate} 
\end{restatable}%


\subsection{Structure and Results}

The farthest-point Voronoi diagram of a set \(S\) of \(n\) points in \(\R^2\) subdivides the plane into regions with a common farthest point among those in \(S\). Given some point \(p \in \R^2\), this diagram can be used to determine the farthest point from \(p\) in \(S\) . We wish to subdivide a network in a similar fashion in order to compute the farthest points from any point \(p\) on a network. However, there are differences between the situation in the plane and on a network. For instance, when given a network, the locations of the farthest points are unknown, as shown in \cref{fig::stationary_vs_non_stationary_farthest_points}.   
\begin{figure}[!ht]
	\centering
	\begin{subfigure}[b]{.49\linewidth}
		\centering
		\includegraphics[scale=1]{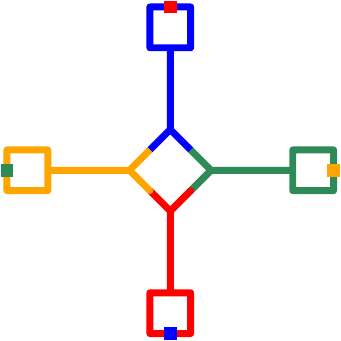}
		\caption{Stationary farthest points.\label{fig::stationary_farthest_points}}
	\end{subfigure} 
	\begin{subfigure}[b]{.49\linewidth} 
		\centering
		\includegraphics[scale=1]{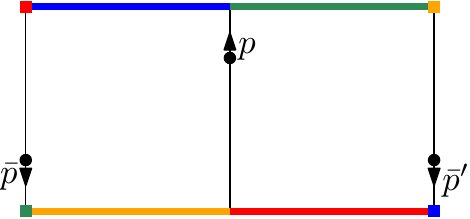}
		\caption{Moving farthest points.\label{fig::non_stationary_farthest_points}} 
	\end{subfigure}
	\caption{Two networks with different farthest point behaviour. In the network in \subref{fig::stationary_farthest_points}, each point has one of the four squares as its farthest point. In the network in \subref{fig::non_stationary_farthest_points}, each point \(p\) on the vertical middle edge has two farthest points \(\bar p\) and \(\bar p'\). Moving \(p\) upwards causes these points to move downwards. 
	 \label{fig::stationary_vs_non_stationary_farthest_points}}
\end{figure}

Overcoming these issues and defining an analogue to the farthest-point Voronoi diagram in networks is one of the contributions of this work. We introduce the new notions of \emph{eccentricity diagrams} in \cref{sec::eccentricity_diagrams} and \emph{network farthest-point diagrams} in \cref{sec::network_farthest_point_diagrams}. Network farthest-point diagrams encode the location of farthest points, whereas eccentricity diagrams capture the network distance to them. We show that, contrary to intuition, these diagrams may have non-linear size. In \cref{sec::far_point_queries}, we design and analyze a data structure for efficient eccentricity, \(R\)-far, and farthest-point-set queries in networks. 

We solve the static and query version of the minimum eccentricity feed-link problem in \cref{sec::feed-links} using eccentricity diagrams. We rephrase the query version as a point location problem in a certain Voronoi diagram whose sites are the sub-edges of the network with minimal eccentricity. Solving the static version of the feed-link problem takes \(\Oh(m^2 \log n)\) time and \(\Oh(m^2)\) work space for a network with \(n\) vertices and \(m\) edges.  \Cref{tab::results} summarizes the asymptotic bounds of all the other results.
\begin{table}[!hb]
    \centering \small
    \begin{tabular}{lccc}
    \toprule
      & \textbf{Query Time} & \textbf{Pre-Processing Time} & \textbf{Space} \\ \midrule
     Eccentricity Query & \(\Oh(\log n)\) & \(\Oh(m^2 \log n)\) & \(\Oh(m^2)\)  \\
     \(R\)-Far Query & \(\Oh(k + \log n)\) & \(\Oh(m^2 \log n)\) & \(\Oh(m^2 \log n)\) \\
     Farthest-Point-Set Query & \(\Oh(k'+\log n)\) & \(\Oh(m^2 \log n)\) & \(\Oh(m^2 \log n)\) \\ \midrule 
     Feed-Link Query &  \(\Oh(\ell)\) &--- & \(\Oh(\ell)\)  \\ 
     Feed-Link Query & \(\Oh(\log \ell)\) & \(\Oh( \ell^{2+\epsilon})\)  & \(\Oh(\ell^2)\) \\
     Feed-Link Query & \(\Oh(\log \ell)\) & expected \(\Oh(\ell^2 \log \ell)\) & \(\Oh(\ell^2)\) \\ \bottomrule
    \end{tabular}
    \caption{Our results for a network with \(n\) vertices, \(m\) edges. For \(R\)-far queries, \(k\) is the number of edges containing \(R\)-far points, and for farthest-point-set queries, \(k'\) is the number of edges containing farthest points. For the results regarding feed-link queries, \(\ell \in \Oh(m^2)\) is the number of sub-edges with locally minimal eccentricity, and we assume that these sub-edges are known a-priori.\label{tab::results}}
\end{table}

\section{Eccentricity Diagrams} \label{sec::eccentricity_diagrams}%

In this section, we describe the distance to farthest points. We begin with computing the distance functions from points on an edge \(uv\) to their farthest points on an edge \(st\). Combining these functions yields the distance from the points on \(uv\) to their farthest points in the entire network. This approach leads us to a representation of the distance to farthest points (eccentricity), which we call the \emph{eccentricity diagram} of a network. From eccentricity diagrams, we derive a data structure for eccentricity queries, which we extend to a data structure for farthest-point-set queries in \cref{sec::network_farthest_point_diagrams}. 

\subsection{The Shape of the Eccentricity Function}

We use a result by \textcite{frank1967note} to compute the eccentricity of all points on a network \(G=(V,E)\). \textcite{frank1967note} seeks a point on a network with smallest distance to its farthest points, i.e., a minimum of the eccentricity. He finds this minimum by determining a point with minimum eccentricity on each edge \(uv\) and then picking a point with the smallest eccentricity among these candidates. \textcite{frank1967note} computes the eccentricity on an edge as follows. Let \(uv\) and \(st\) be edges of \(G\).  We define \(\phi_{uv}^{st} \colon [0,1] \to [0,\infty)\) as the mapping from \(\lambda \in [0,1]\) to the largest network distance from the point \(p\) on \(uv\) with \(p = p(\lambda) = (1-\lambda)u + \lambda v\) to any point \(q\) on edge \(st\), i.e., %
\begin{align}
	\phi_{uv}^{st}(\lambda) \coloneqq \max_{q \in st} d(p(\lambda), q) = \max_{q \in st} d((1-\lambda)u + \lambda v, q) \enspace . \label{eq::definition::phi_uv_st}
\end{align}%
The upper envelope of the functions \(\phi_{uv}^{st} \), over all edges \(st \in E\), is the eccentricity of \(p(\lambda)\) on \(uv\), since%
\begin{align*}
   \ecc(p(\lambda)) = \max_{q \in G} d(p(\lambda),q)  = \max_{st \in E} \max_{q \in st} d(p(\lambda),q) = \max_{st \in E} \phi_{uv}^{st}(\lambda) \enspace .
\end{align*}%
We begin the analysis of the functions \(\phi_{uv}^{st}\) and their upper envelope with an auxiliary lemma.

\begin{lemma} \label{thm::point_to_edge_network_distance}
Let \(G\) be a network, let \(ab\) be an edge of \(G\), and let \(x\) be a point on \(G\) that is not in the interior of the edge \(ab\). The network distance from \(x\) to the farthest point \(\bar x\) from \(x\) among all points on \(ab\) is 
\begin{align}
	d(x,\bar x) = \max_{y \in ab} d(x,y) = \frac{d(x,a) + w_{ab} + d(x,b)}{2}. \label{eq::point_to_edge_network_distance}
\end{align}
\end{lemma}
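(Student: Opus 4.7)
The plan is to parametrize the edge $ab$ by $y = y(\mu) = (1-\mu)a + \mu b$ for $\mu \in [0,1]$ and derive a closed form for $d(x, y(\mu))$, and then maximize over $\mu$. The key structural observation I would use is that, since $x$ is not in the interior of $ab$, any path from $x$ to $y(\mu)$ must enter the edge $ab$ through one of its endpoints $a$ or $b$. Combining this with the shortest-path characterization of the network distance, the path either reaches $a$ with length $d(x,a)$ and then traverses $\mu w_{ab}$ along $ab$, or reaches $b$ with length $d(x,b)$ and traverses $(1-\mu) w_{ab}$. Taking the shorter of these two options, I would argue
\begin{align*}
   d(x, y(\mu)) = \min\bigl\{\, d(x,a) + \mu\, w_{ab},\; d(x,b) + (1-\mu)\, w_{ab}\,\bigr\}.
\end{align*}

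Next I would maximize this over $\mu \in [0,1]$. The first argument of the minimum is linear increasing in $\mu$ and the second is linear decreasing, so the pointwise minimum is a piecewise linear concave ``tent'' function whose maximum is attained exactly where the two arguments coincide. Setting them equal and solving gives
\begin{align*}
   \mu^{*} = \frac{d(x,b) - d(x,a) + w_{ab}}{2\, w_{ab}},
\end{align*}
and substituting back yields the claimed value $\frac{d(x,a) + w_{ab} + d(x,b)}{2}$.

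To complete the argument I would verify $\mu^{*} \in [0,1]$, which is the only genuine obligation since otherwise the maximum would be attained at an endpoint of $ab$ and the formula could fail. For this I would invoke the triangle inequality for the network metric (already established in the preceding lemma): $d(x,b) \le d(x,a) + w_{ab}$ together with the fact that $ab$ itself is a path of length $w_{ab}$ ensures $\mu^{*} \le 1$, and the symmetric inequality $d(x,a) \le d(x,b) + w_{ab}$ gives $\mu^{*} \ge 0$. I expect this endpoint-check to be the only subtle step; the rest is a direct min/max computation. Finally, setting $\bar{x} = y(\mu^{*})$ identifies the farthest point on $ab$ explicitly, which will also be useful downstream when analyzing the functions $\phi_{uv}^{st}$.
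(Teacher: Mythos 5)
Your proof is correct and follows essentially the same route as the paper: both rest on the observation that for $y \in ab$ one has $d(x,y) = \min\{d(x,a)+w_{ay},\, d(x,b)+w_{yb}\}$, and that the maximum over $ab$ is attained where the two path lengths balance. The paper simply asserts this balancing at $\bar x$ and averages the two expressions, whereas you derive it by maximizing the piecewise-linear tent function and check $\mu^{*}\in[0,1]$ via the triangle inequality --- a detail the paper's one-line argument leaves implicit.
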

\begin{proof} Let \(\bar x\) be the farthest point from \(x\) on \(ab\). Then the shortest path from \(x\) to \(\bar x\) via \(a\) has the same length as the one via \(b\), i.e.,
\(
	d(x,\bar x) = d(x,a) + w_{a\bar x}=d(x,b) + w_{b\bar x}
\). This yields
\begin{align*}
	d(x,\bar x) = \frac{2d(x,\bar x)}{2} = \frac{d(x,a) + w_{a\bar x} + d(x,b) + w_{\bar xb}}2, 
\end{align*}
and the result follows as \(w_{ab} = w_{a\bar x} + w_{\bar xb}\).
\end{proof}

The following \lcnamecref{thm::edge_to_edge_network_distance_function} by \textcite{frank1967note} describes the function \(\phi_{uv}^{st}\) for two distinct edges \(uv\) and \(st\).  Refer to \cref{fig::edge_to_edge_distance::illustration} for an illustration of the notation and the result.%
\begin{figure}[ht]
	\centering
	\begin{subfigure}[b]{0.49\linewidth}
		\centering
		\includegraphics[page=1]{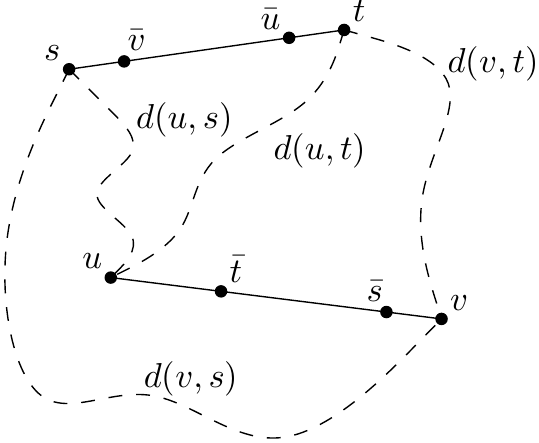}
		\subcaption{Two edges in a network.} \label{fig::edge_to_edge_distance::subdivision}
	\end{subfigure}%
	\begin{subfigure}[b]{0.49\linewidth}
		\centering 
		\includegraphics{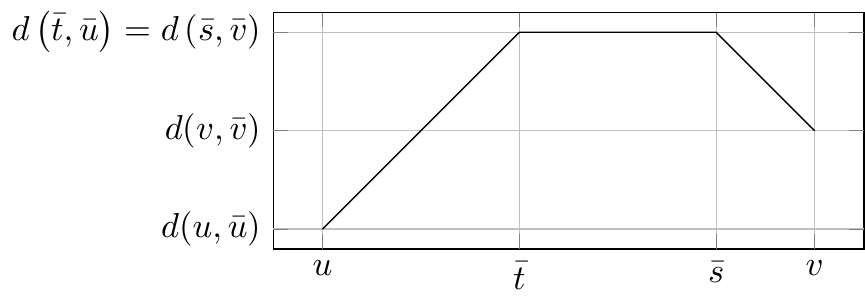}
		\subcaption{The function \(p(\lambda) \mapsto \phi_{uv}^{st}(\lambda) = \max_{q \in st}d(p(\lambda),q)\), where \(p(\lambda) = (1- \lambda) u + \lambda v\) and \(\lambda \in [0,1]\).} \label{fig::plot::edge_to_edge_distance}
	\end{subfigure}
	\caption{An illustration of \cref{thm::edge_to_edge_network_distance_function} and its notation for distinct edges \(uv\) and \(st\).   On the left \subref{fig::edge_to_edge_distance::subdivision}, the edges are subdivided with respect to \(\bar s\), the farthest point from \(s\) on \(uv\); to \(\bar t\), the farthest point from \(t\) on \(uv\); to \(\bar u\), the farthest point from \(u\) on \(st\); and to \(\bar v\), the farthest point from \(v\) on \(st\). The positions of \(\bar s\), \(\bar t\), \(\bar u\), and \(\bar v\) follow from \cref{thm::point_to_edge_network_distance}.  On the right \subref{fig::plot::edge_to_edge_distance}, we see a plot of the network distance from points \(p\) on \(uv\) to their farthest point on \(st\), according to \cref{thm::edge_to_edge_network_distance_function}. The slopes of the ascending and descending segments are equal up to their sign.\label{fig::edge_to_edge_distance::illustration}}
\end{figure}%
\begin{lemma}[\textcite{frank1967note}] \label{thm::edge_to_edge_network_distance_function}
Let \(uv\) and \(st\) be two distinct edges of a network \(G\).  Let \(\bar u\)  (respectively \(\bar v\)) be the farthest point from \(u\) (respectively from \(v\)) on \(st\). Likewise, let \(\bar t\)  (respectively \(\bar s\)) be the farthest point from \(t\) (respectively from \(s\)) on \(uv\). Without loss of generality, we have \(\bar t \in u\bar s\) (otherwise swap \(s\) and \(t\)). Then we have
	\begin{align}
			\phi_{uv}^{st}(\lambda) =
			\begin{dcases}
				\lambda w_{uv} + d(u,\bar u) &\text{, if } p(\lambda) \in u\bar t\\
				d(\bar t,\bar u) &\text{, if } p(\lambda) \in \bar t\bar s \\
				(1-\lambda)w_{uv}  + d(v,\bar v)&\text{, if } p(\lambda) \in \bar sv
			\end{dcases}, \label{eq::phi_uv_st}
		\end{align}
	where \(p(\lambda) = (1-\lambda) u + \lambda v\) and \(\lambda \in [0,1]\).
%
\end{lemma}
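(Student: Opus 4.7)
The plan is to reduce $\phi_{uv}^{st}$ to a sum of two concave piecewise linear functions, then split $[0,1]$ into three regions accordingly. Since $p(\lambda)\in uv$ and $uv\neq st$, the point $p(\lambda)$ is not interior to $st$, so \cref{thm::point_to_edge_network_distance}, applied with $x=p(\lambda)$ and $ab=st$, gives
\[
 \phi_{uv}^{st}(\lambda) = \tfrac{1}{2}\bigl(d(p(\lambda),s) + w_{st} + d(p(\lambda),t)\bigr).
\]
Thus the task reduces to understanding $d(p(\lambda),s)$ and $d(p(\lambda),t)$ as functions of $\lambda$.

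Every path out of $p(\lambda)$ exits along the edge $uv$, so
\[
 d(p(\lambda),s) = \min\bigl(\lambda w_{uv} + d(u,s),\,(1-\lambda)w_{uv} + d(v,s)\bigr),
\]
which is concave piecewise linear in $\lambda$ with slopes $+w_{uv}$ then $-w_{uv}$, breaking at the unique $\lambda_s^{\ast}$ at which the two affine pieces agree. Substituting gives $d(p(\lambda_s^{\ast}),s)=\tfrac{1}{2}(w_{uv}+d(u,s)+d(v,s))$, which by \cref{thm::point_to_edge_network_distance} applied to $s$ and $uv$ equals $d(s,\bar s)$; hence $p(\lambda_s^{\ast})=\bar s$. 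An identical argument produces $\lambda_t^{\ast}$ with $p(\lambda_t^{\ast})=\bar t$.

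The hypothesis $\bar t\in u\bar s$ translates to $\lambda_t^{\ast}\le \lambda_s^{\ast}$. Summing the two concave piecewise linear summands and halving yields a concave piecewise linear $\phi_{uv}^{st}$ with slopes $+w_{uv}$, $0$, and $-w_{uv}$ on $[0,\lambda_t^{\ast}]$, $[\lambda_t^{\ast},\lambda_s^{\ast}]$, and $[\lambda_s^{\ast},1]$ respectively; these three intervals are exactly $u\bar t$, $\bar t\bar s$, and $\bar s v$. The boundary values $\phi_{uv}^{st}(0)=d(u,\bar u)$ and $\phi_{uv}^{st}(1)=d(v,\bar v)$ pin down the first and third pieces of \eqref{eq::phi_uv_st}. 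The main obstacle is identifying the constant on the middle piece with $d(\bar t,\bar u)$, rather than merely with the continuity value $\lambda_t^{\ast} w_{uv}+d(u,\bar u)$. To close this gap, I would compute $d(\bar t,\bar u)$ directly by writing $\bar u=(1-\mu)s+\mu t$: the inequality $\lambda_t^{\ast}\le \lambda_s^{\ast}$ gives $d(\bar t,s)=\lambda_t^{\ast} w_{uv}+d(u,s)$, while the peak condition defining $\lambda_t^{\ast}$ gives $d(\bar t,t)=\lambda_t^{\ast} w_{uv}+d(u,t)$; the minimum of the two routes from $\bar t$ to $\bar u$ via $s$ and via $t$ therefore equals $\lambda_t^{\ast} w_{uv}+d(u,\bar u)$, matching the continuity value and completing the identification.
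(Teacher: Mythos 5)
Your proposal is correct and follows essentially the same route as the paper: both reduce \(\phi_{uv}^{st}(\lambda)\) to \(\tfrac12\bigl(d(p(\lambda),s)+d(p(\lambda),t)+w_{st}\bigr)\) via \cref{thm::point_to_edge_network_distance} and then track whether the shortest paths from \(p(\lambda)\) to \(s\) and to \(t\) leave \(uv\) through \(u\) or through \(v\), the switches occurring exactly at \(\bar s\) and \(\bar t\). Your closing identification of the plateau value with \(d(\bar t,\bar u)\), obtained by comparing the two routes through \(s\) and through \(t\), is a correct and slightly more explicit version of the paper's observation that \(\bar u\) is the farthest point from \(\bar t\) on \(st\).
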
%
The proof of \cref{thm::edge_to_edge_network_distance_function} was omitted by \textcite{frank1967note}. We add it for the sake of completeness.%
\begin{proof} The three cases of \cref{thm::edge_to_edge_network_distance_function} are illustrated in \cref{fig::edge_to_edge_distance::case_distinction}. Case~(3) is symmetric to Case~(1). 
\begin{figure}[ht]
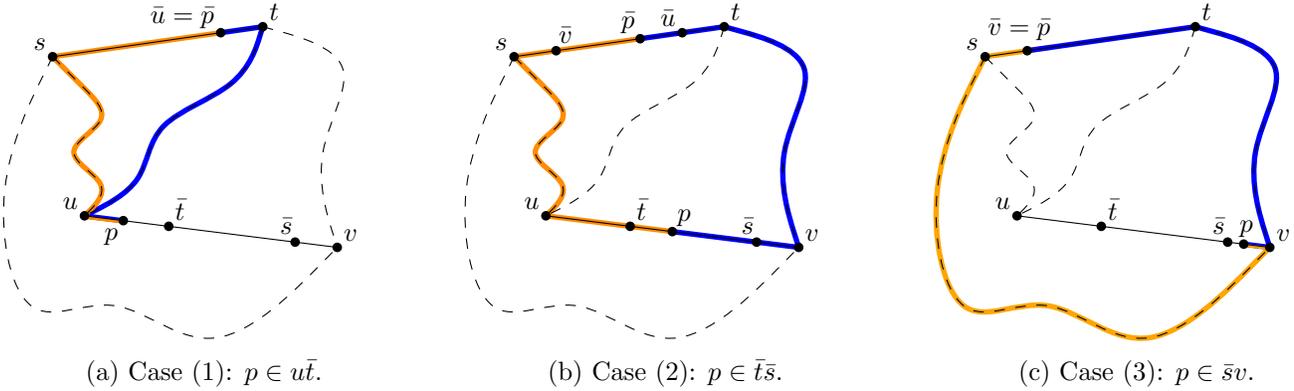

	\begin{subfigure}[t]{.33\linewidth}
		\centering 
		\includegraphics[page=3]{edge_to_edge_distance}
		\subcaption{Case~(1): \(p \in u\bar t\).}\label{fig::edge_to_edge_distance::p_on_u_bar_t}
	\end{subfigure}%
	\begin{subfigure}[t]{.33\linewidth}
		\centering
		\includegraphics[page=4]{edge_to_edge_distance}
		\subcaption{Case~(2): \(p \in \bar t\bar s\).}\label{fig::edge_to_edge_distance::p_on_bar_t_bar_s}
	\end{subfigure}
	\begin{subfigure}[t]{.33\linewidth}
		\centering
		\includegraphics[page=5]{edge_to_edge_distance}
		\subcaption{Case~(3): \(p \in \bar sv\).}\label{fig::edge_to_edge_distance::p_on_bar_s_v}
	\end{subfigure}
	\caption{An illustration of the three cases in \cref{thm::edge_to_edge_network_distance_function} from left to right. The paths from \(p\) to \(\bar p\) entering the edge \(st\) via \(s\) (orange) and via \(t\) (blue) are highlighted.\label{fig::edge_to_edge_distance::case_distinction}}
\end{figure}%

Let \(\bar p\) be the farthest point from \(p = p(\lambda)\) on \(st\). 
\begin{enumerate}[C{a}se (1):]
	\item Let \(p\) be located on \(u\bar{t}\) as shown in \cref{fig::edge_to_edge_distance::p_on_u_bar_t}. 
	
	We show that \(\bar u\) is the farthest point from \(p\) on \(st\), i.e., we show \(\bar p = \bar u\). Since \(p \in u\bar t\), there is a  shortest path from \(p\) to \(t\) that includes the sub-edge \(up\). Therefore, we have \(d(p,t) = w_{up} + d(u,t)\). Likewise, we have \(d(p,s) = w_{up} + d(u,s)\), since \(p\) is on \(u\bar s\), as well. With \cref{thm::point_to_edge_network_distance} we obtain
\begin{align*}
 d(p,\bar p) &= \frac{d(p,s) + d(p,t) + w_{st}}2,  \\
	&= \frac{w_{up} + d(u,s) + w_{up} + d(u,t) + w_{st}}2, \\ 
	&= w_{up} + \frac{d(u,s) + d(u,t) + w_{st}}2, \\ 
	&= w_{up} + d(u,\bar u).
\end{align*}

\item  Let \(p\) be located on \(\bar{t}\bar{s}\) as shown in \cref{fig::edge_to_edge_distance::p_on_bar_t_bar_s}.

    We show \(d(p,\bar p) = d(\bar u, \bar t) = d(\bar s, \bar v)\), which holds because \(\bar p\) moves from \(\bar u\) to \(\bar v\) as \(p\) moves from \(\bar t \) to \(\bar s\).
    
	Since \(p \in \bar tv\) and \(p \in u\bar s\), there is a shortest path from \(p\) to \(t\) that includes \(pv\) and there is a shortest path from \(p\) to \(s\) that includes \(pu\). Therefore, we have \(d(p,t) = w_{pv} + d(v,t)\) and \(d(p,s) = w_{up} + d(u,s)\). Plugging this into \eqref{eq::point_to_edge_network_distance} from \cref{thm::point_to_edge_network_distance} yields
	\begin{align*}
  d(p,\bar p) &= \frac{d(p,s) + d(p,t) + w_{st}}2,  \\
	&=  \frac{w_{up} + d(u,s) + w_{pv} + d(v,t) + w_{st}}2, \\ 
	&= \frac{w_{uv} + d(u,s) + d(v,t) + w_{st}}2.
\end{align*}%
We know from Cases~(1) and~(3) that \(\bar u\) is farthest from \(\bar t\) on \(st\), and that \(\bar v\) is farthest from \(\bar s\) on \(st\). As the above applies to the cases \(p=\bar t\) and \(p=\bar s\), we obtain \(d(p,\bar p) = d(\bar t,\bar u) = d(\bar s, \bar v)\). 


\end{enumerate}%

Summarizing the three cases yields
\begin{align*}
    \phi_{uv}^{st}(\lambda) = \max_{q \in st} d(p(\lambda), q) = 
        \begin{dcases}
			w_{up(\lambda)} + d(u,\bar u) &\text{, if } p(\lambda) \in u\bar t\\
			d(\bar t,\bar u) &\text{, if } p(\lambda) \in \bar t\bar s \\
			w_{p(\lambda)v}  + d(v,\bar v)&\text{, if } p(\lambda) \in \bar sv
	    \end{dcases},
\end{align*} 
where \(p(\lambda) = (1- \lambda) u + \lambda v\), which implies the  claim, since \( w_{up(\lambda)} = \lambda w_{uv}\) and \(w_{p(\lambda)v} = (1-\lambda) w_{uv}\). 
\end{proof}

In the next \lcnamecref{thm::edge_to_self_network_distance_function} we describe the distance from a point \(p\) on edge \(uv\) to its farthest point on \(uv\) itself---and thus the function \(\phi_{uv}^{uv}\). An illustration of \cref{thm::edge_to_self_network_distance_function} and the notation used is shown in \cref{fig::edge_to_self_network_distance_function::illustration}.%

\begin{lemma} \label{thm::edge_to_self_network_distance_function}
	Let \(uv\) be an edge of a network \(G\).  Let \(\bar u\)  (respectively \(\bar v\)) be the farthest point from \(u\) (respectively from \(v\)) on \(uv\). Further, let \(c\) be the midpoint of \(uv\). Then we have  \(\bar u \in cv\) and \(\bar v \in uc\) with \(w_{u\bar v} = w_{v\bar u}\), and 
	\begin{align*}
			\phi_{uv}^{uv}(\lambda) =
			\begin{dcases}
				\frac{w_{uv} + d(u,v)}{2} &\text{, if } p(\lambda) \in u\bar v\\
				(1-\lambda)w_{uv} &\text{, if } p(\lambda) \in \bar vc \\
				\lambda w_{uv}  &\text{, if } p(\lambda) \in c\bar u \\
				\frac{w_{uv} + d(u,v)}{2} &\text{, if } p(\lambda) \in \bar uv\\
			\end{dcases},
		\end{align*}
		where \(p(\lambda) = (1-\lambda) u + \lambda v\) and \(\lambda \in [0,1]\).
		
%
%
%
\end{lemma}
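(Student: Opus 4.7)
The plan is to leverage \cref{thm::point_to_edge_network_distance} twice: once with the endpoints $u$ and $v$ themselves playing the role of the exterior point (to locate $\bar u$ and $\bar v$), and once after subdividing $uv$ at $p = p(\lambda)$ into sub-edges $up$ and $pv$ (to compute the farthest point on $uv$ from $p$ itself).

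First, I would set $D = d(u,v)$ and apply \cref{thm::point_to_edge_network_distance} with $x = u$ and $ab = uv$. Since an endpoint is never in the interior of its own edge, the lemma applies and yields $d(u,\bar u) = (w_{uv} + D)/2$. Reading this distance off along the direct sub-edge gives $w_{u\bar u} = (w_{uv} + D)/2 \ge w_{uv}/2$, so $\bar u \in cv$. Symmetrically $w_{v\bar v} = (w_{uv}+D)/2$ and $\bar v \in uc$, whence $w_{u\bar v} = w_{v\bar u} = (w_{uv} - D)/2$, proving the first assertion.

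Next, I would subdivide $uv$ at $p$ and apply \cref{thm::point_to_edge_network_distance} separately to the sub-edges $up$ and $pv$ with $p$ now an endpoint of each. This produces two candidate farthest points on $uv$, at distances $(d(p,u) + \lambda w_{uv})/2$ and $(d(p,v) + (1-\lambda) w_{uv})/2$ from $p$; their maximum is $\phi_{uv}^{uv}(\lambda)$. The shortest-path distances in the subdivided network are $d(p,u) = \min(\lambda w_{uv},\, (1-\lambda) w_{uv} + D)$ and $d(p,v) = \min((1-\lambda) w_{uv},\, \lambda w_{uv} + D)$, and each minimum switches its active branch precisely at the thresholds $w_{u\bar u}$ and $w_{u\bar v}$ identified in the first step.

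Finally, I would combine these observations into the four regimes stated in the lemma. For $p \in u\bar v$, the direct value still wins for $d(p,u)$ but the detour wins for $d(p,v)$, so the candidate from $pv$ evaluates to $(w_{uv}+D)/2$ and dominates. For $p \in \bar v c$, both distances $d(p,u)$ and $d(p,v)$ are given by the direct values, the two candidates simplify to $\lambda w_{uv}$ and $(1-\lambda) w_{uv}$, and since $\lambda \le 1/2$ the larger is $(1-\lambda) w_{uv}$. The cases $p \in c\bar u$ and $p \in \bar u v$ follow by swapping the roles of $u$ and $v$. The main bookkeeping obstacle is verifying in each regime that the non-selected candidate is indeed no larger than the selected one, but this reduces to the trivial inequalities $(w_{uv}-D)/2 \le (w_{uv}+D)/2$ and $\min(\lambda, 1-\lambda) w_{uv} \le \max(\lambda, 1-\lambda) w_{uv}$.
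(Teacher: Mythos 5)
Your proposal is correct, and it takes a genuinely different route from the paper's proof of the two interior cases. The paper locates $\bar u$ and $\bar v$ exactly as you do, but then argues Case $p \in u\bar v$ by forming the simple cycle consisting of the edge $uv$ and a shortest path $\pi_{uv}$ from $u$ to $v$, observing that the farthest point on that cycle lies at half its length $\tfrac{w_{uv}+d(u,v)}{2}$ and falls on $\bar u v$; and it argues Case $p \in \bar v c$ by a monotonicity argument (distance to $u$ grows along $up$ and distance to $v$ grows along $vp$, so $v$ wins because $w_{up} \le w_{pv}$). You instead subdivide $uv$ at $p$, apply \cref{thm::point_to_edge_network_distance} to each sub-edge $up$ and $pv$ with $p$ as the exterior point, and substitute the explicit tent formulas $d(p,u) = \min(\lambda w_{uv}, (1-\lambda)w_{uv} + d(u,v))$ and $d(p,v) = \min((1-\lambda)w_{uv}, \lambda w_{uv} + d(u,v))$, whose breakpoints are exactly $\bar u$ and $\bar v$; the four regimes then drop out of a mechanical case analysis on which branch of each minimum is active, and your dominance checks are the right ones. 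Your version is arguably cleaner: it reuses \cref{thm::point_to_edge_network_distance} as the single workhorse and avoids the paper's appeal to farthest points on a cycle (which quietly assumes the network distance restricted to $uv \cup \pi_{uv}$ agrees with the cycle's intrinsic distance). What the paper's argument buys in exchange is an explicit description of \emph{where} the farthest point $\bar p$ sits and how it moves from $\bar u$ to $v$ as $p$ moves from $u$ to $\bar v$ --- information that is reused later for the farthest-point diagrams --- whereas your computation delivers only the value $\phi_{uv}^{uv}(\lambda)$, which is all the lemma as stated requires.
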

\begin{figure}[!ht]
	\centering
	\begin{subfigure}[b]{0.47\linewidth}
		\centering
		\includegraphics[page=1]{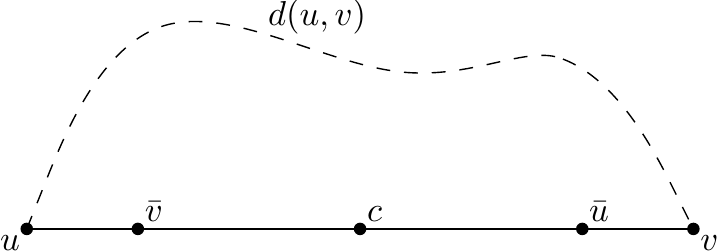}
		\subcaption{An edge in a network.\label{fig::edge_to_self_distance::subdivision}}
	\end{subfigure}\quad%
	\begin{subfigure}[b]{0.5\linewidth}
		\centering
		\includegraphics{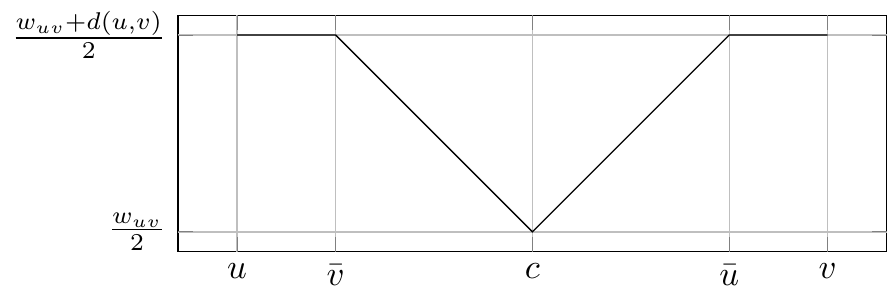}
		\subcaption{The function \(p(\lambda) \mapsto \phi_{uv}^{uv}(\lambda) =  \max_{q \in uv}d(p(\lambda),q)\), where \(p(\lambda) = (1- \lambda) u + \lambda v\) and \(\lambda \in [0,1]\).\label{fig::plot::edge_to_self_distance}}
	\end{subfigure}
	\caption{An illustration of \cref{thm::edge_to_self_network_distance_function} and its notation for an edge \(uv\). On the left \subref{fig::edge_to_self_distance::subdivision}, the edge \(uv\) is subdivided with respect to \(\bar u\), the farthest point from \(u\) on \(uv\); to \(c\), the midpoint of \(uv\); and to \(\bar v\), the farthest point from \(v\) on \(uv\). On the right \subref{fig::plot::edge_to_self_distance}, we see a plot of the network distance from points on \(uv\) to their farthest point on \(uv\) itself, according to \cref{thm::edge_to_self_network_distance_function}. The slopes of the descending and ascending segments are equal up to their sign, and the function \(\phi_{uv}^{uv}\) is symmetric with respect to  \(c\). \label{fig::edge_to_self_network_distance_function::illustration}}
\end{figure}
\begin{proof} First, we show the claims about the positions of \(\bar u\) and \(\bar v\). Using \cref{thm::point_to_edge_network_distance}, we have 
	\begin{align*}
		w_{u\bar u} = d(u,\bar u) = \frac{w_{uv} + d(u,u) + d(u,v)}{2} = \frac{w_{uv} + d(u,v)}{2} = \frac{w_{uv} + d(v,v) + d(u,v)}{2} = d(v,\bar v) = w_{v \bar v},
	\end{align*}
	which implies \(\bar u \in cv\) and \(\bar v \in uc\), since \(0 \le d(u,v) \le w_{uv}\). Furthermore, we have \[w_{u\bar v} = w_{uv} - w_{v\bar v} = w_{uv} - w_{u\bar u} = w_{\bar u v} \enspace .\]
	 Let \(\bar p\) be the farthest point from \(p = p(\lambda)\) on \(uv\). 
    \begin{enumerate}[C{a}se (1):]
		\item Let \(p\) be on \(u\bar{v}\) with \(p\ne\bar v\).
		
		We show that \(d(p,\bar p) = d(u, \bar u) = d(v, \bar v)\) and that \(\bar p\) moves from \(\bar u\) to \(v\) as \(p\) moves from \(u\) to \(\bar v\).
		
		This case requires \(d(u,v) < w_{uv}\), as otherwise \(u = \bar v\). Let \(\pi_{uv}\) be a shortest path from \(u\) to \(v\). The path \(\pi_{uv}\) and the edge \({uv}\) form a simple cycle of length \(w_{uv} + d(u,v)\). The farthest point \(\hat p\) from \(p\) on this cycle has network distance \(d(p,\hat p)=\frac{w_{uv} + d(u,v)}{2} = d(u,\bar u) = d(v,\bar v)\) to \(p\). The points \(u\), \(p\), \(\bar v\), and \(\bar u\) appear in this order along this cycle. Therefore, \(\hat p\) appears between \(\bar u\) and \(v\), which shows \(\hat p = \bar p\).
		
		\item Let \(p\) be on \(\bar v c\).
		
		We show that \(v\) is farthest from \(p\) on \(uv\), i.e., \(\bar p = v\).
		
		If we walk from \(u\) to \(p\) along \(uv\), the distance to \(u\) increases from \(0\) to \(w_{up}\), until we reach \(\bar u\). Hence, for all points \(q \in up\) we have \(d(p,q) \le d(p,u) = w_{up}\). If we walk from \(v\) to \(p\) along \(uv\), the distance to \(v\) increases from \(0\) to \(w_{vp}\), until we reach \(\bar v\).  Hence, for all points \(q' \in pv\) we have \(d(p,q') \le d(p,v) = w_{pv}\). Since \(p \in uc\), we have \(w_{up} \le w_{pv}\) and, thus, infer that \(\bar p = v\) and \(d(p,\bar p ) = w_{pv}\). 
        
\end{enumerate}
The cases, \(p \in c\bar u\) and \(p \in \bar uv\), are symmetric to Case (1) and (2), respectively, because \(w_{u\bar v} = w_{\bar uv}\). In summary,
\begin{align*}
			\phi_{uv}^{uv}(\lambda)  = \max_{q \in uv} d(p(\lambda) , q) =
			\begin{dcases}
				d(u,\bar u) &\text{, if } p(\lambda) \in u\bar v\\
				w_{p(\lambda)v} &\text{, if } p(\lambda) \in \bar vc \\
				w_{up(\lambda)}  &\text{, if } p(\lambda) \in c\bar u \\
				d(v,\bar v) &\text{, if } p(\lambda) \in \bar uv\\
			\end{dcases},
\end{align*}
which implies the claim, since \(w_{p(\lambda)v} = (1-\lambda) w_{uv}\), \(w_{up(\lambda)} = \lambda w_{uv}\), and \(d(u,\bar u) = d(v,\bar v) = \frac{w_{uv} + d(u,v)}{2}\).
\end{proof}


The eccentricity along an edge \(uv\) of a network with \(m\) edges is the upper envelope of \(m-1\) functions of the form described in \cref{thm::edge_to_edge_network_distance_function,fig::plot::edge_to_edge_distance} and one function of the form described in  \cref{thm::edge_to_self_network_distance_function,fig::plot::edge_to_self_distance}. Thus, it is continuous and piece-wise linear. Next, we bound the number of linear pieces. To break ties, we number the functions \(\phi_{uv}^{st}\) and say that the function with higher number is higher wherever two functions coincide. 

\begin{lemma} \label{thm::segments_on_upper_envelope} 
Let \(uv\) be an edge of a network \(G\), and let \(k\) be the number of edges containing farthest points from some point on \(uv\). The eccentricity on \(uv\) consists of \(\Oh(k)\) segments.
\end{lemma}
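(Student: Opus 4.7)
The plan is to realize the eccentricity on $uv$ as the upper envelope of the functions $\phi_{uv}^{st}$, one per edge $st \in E$, and then exploit the fact that every non-constant slope appearing in any of these functions equals $\pm w_{uv}$. As noted in the discussion preceding this lemma, $\ecc(p(\lambda)) = \max_{st \in E} \phi_{uv}^{st}(\lambda)$, so it suffices to count the linear pieces of this upper envelope. A function $\phi_{uv}^{st}$ contributes to the envelope precisely when $st$ contains a farthest point from some point on $uv$; hence at most $k$ functions contribute. By \cref{thm::edge_to_edge_network_distance_function,thm::edge_to_self_network_distance_function}, each contributing function is piecewise linear with $O(1)$ pieces.

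The next step is to stratify the envelope pieces by slope and bound each type. Let $P_+$, $P_0$, and $P_-$ count the envelope pieces of slope $+w_{uv}$, $0$, and $-w_{uv}$, respectively. I claim $P_+ \le k$. Suppose a contributing function $f$ coincides with the envelope at some $\lambda^*$ lying in the ascending phase of $f$. For every other function $g$, throughout $f$'s ascending phase we have $(g-f)'(\lambda) = g'(\lambda) - w_{uv} \le 0$, while $(g-f)(\lambda^*) \le 0$. Hence $g \le f$ for the remainder of $f$'s ascending phase, so $f$ remains on the envelope until this phase ends. Consequently $f$ contributes at most one envelope piece of slope $+w_{uv}$, giving $P_+ \le k$. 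A symmetric argument, sweeping backwards from the last $\lambda$ at which $f$ is on the envelope during its descending phase, shows $P_- \le k$. Finally, since consecutive envelope pieces must have different slopes, between any two constant pieces lies at least one non-constant piece, which gives $P_0 \le P_+ + P_- + 1 \le 2k+1$. Summing yields $P_+ + P_0 + P_- \le 4k + 1 = O(k)$.

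The main obstacle is handling the exceptional function $\phi_{uv}^{uv}$, which by \cref{thm::edge_to_self_network_distance_function} is valley-shaped (two constant flanks surrounding a descent and an ascent) rather than tent-shaped. It nevertheless has only one ascending and one descending maximal phase, so the bounds $P_+, P_- \le k$ remain intact, while its two constant flanks are absorbed into the $P_0$ count via the alternation argument. Degenerate configurations in which several functions coincide on an interval are resolved by the tiebreaking convention---higher-numbered functions dominate---stated just before the lemma, so these do not disturb the linear count.
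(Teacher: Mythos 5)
Your proposal is correct and follows essentially the same route as the paper: both exploit that all non-constant pieces share slope \(\pm w_{uv}\), so each of the at most \(k\) contributing functions places one contiguous ascending and one contiguous descending piece on the envelope, and the constant pieces are then bounded by alternation, giving the same \(4k+1\) count. Your ``once on top during the ascending phase, stays on top until the phase ends'' monotonicity argument is just a restated version of the paper's ``only the part above the highest intersection can appear on the envelope.''
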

\begin{proof}%
%
For each non-constant segment only the part above the highest intersection with the other segments may appear on the upper envelope, due to the common slopes of the segments. Thus, each of the at most \(2k\) non-constant segments contributes at most two bending points to the upper envelope. Since there is no intersection between two constant segments, this accounts for all bending points of the eccentricity function on \(uv\), except for the first and the last one. Therefore, the eccentricity function on \(uv\) has at most \(4k+2\) bending points and, thus, consists of at most \(4k+1 = \Oh(k)\) segments.
\end{proof}


\subsection{The Eccentricity Diagram}

Due to the piece-wise linearity of the eccentricity, it suffices to state its value at the points corresponding to the endpoints of linear segments of the upper envelope of the functions \(\phi_{uv}^{st}\). This describes the eccentricity on the entire network: For a point \(p\) with \(p = (1-\lambda) a + \lambda b\) on a sub-edge \(ab\) with linear eccentricity we have 
\(
    \ecc(p) = (1-\lambda) \ecc(a) + \lambda \ecc(b) 
\). 
This leads us to the following notion, which is illustrated in \cref{fig::sunlet::example_eccentricity_diagram}. 
\begin{definition}[Eccentricity Diagram]\label{def::eccentricity_diagram} 
    Let \(G\) be a network. We call the subdivision of \(G\) with linear eccentricity on every edge and with the minimum number of vertices the \emph{eccentricity diagram} of \(G\) and denote it by \(\ED(G)\). 
\end{definition}
\begin{figure}[!ht]
	\centering
	\begin{subfigure}{0.32\linewidth}
		\centering
		\includegraphics[scale=0.66,page=1]{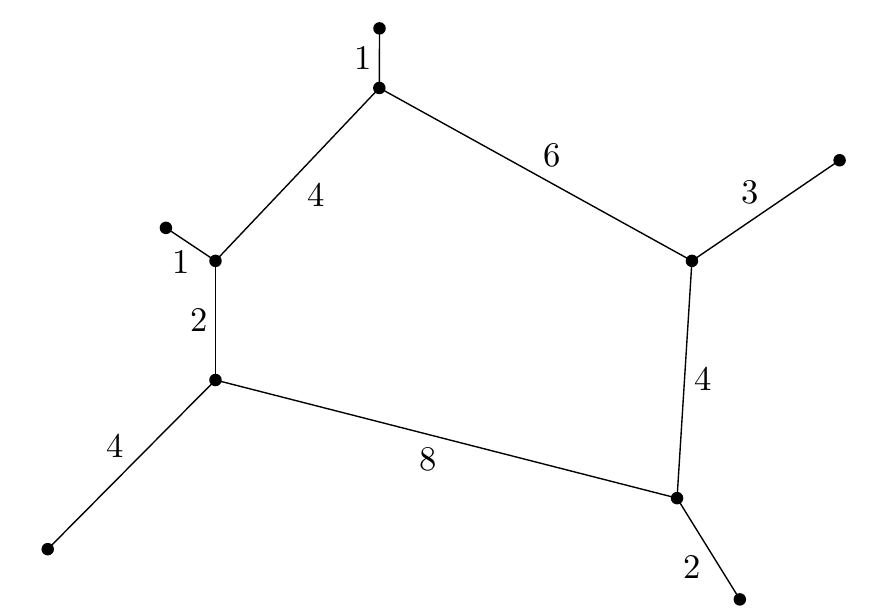}
		\caption{A network. \label{fig::sunlet::network}}
	\end{subfigure}%
	\begin{subfigure}{0.32\linewidth}
		\centering
		\includegraphics[scale=0.66,page=2]{sunlet}
		\caption{Its eccentricity diagram. \label{fig::sunlet::eccentricity_diagram}}	
	\end{subfigure}%
	\begin{subfigure}{0.32\linewidth}
		\centering
		\includegraphics[scale=0.66,page=3]{sunlet}
		\caption{Eccentricity values.\label{fig::sunlet::eccentricity_values}}
	\end{subfigure}
	\caption{From left \subref{fig::sunlet::network} to right \subref{fig::sunlet::eccentricity_values}: A (geometric) network \(G\),  its eccentricity diagram \(\ED(G)\) with sub-edge weights, and its eccentricity diagram with the eccentricity at each vertex of \(\ED(G)\). \label{fig::sunlet::example_eccentricity_diagram}}
\end{figure}
The eccentricity diagram of a network is well-defined and unique, as it can be obtained by subdividing each edge \(uv\) at the finitely many endpoints of the line segments of the eccentricity function on \(uv\). This yields a finite subdivision with the minimum number of additional vertices. An example is shown in \cref{fig::example_brute_force::eccentricity_diagram}.
\begin{figure}[!ht]
	\centering
	\begin{subfigure}{0.54\linewidth}
		\begin{subfigure}{\linewidth}
		\centering
		\includegraphics[page=1,scale=0.8]{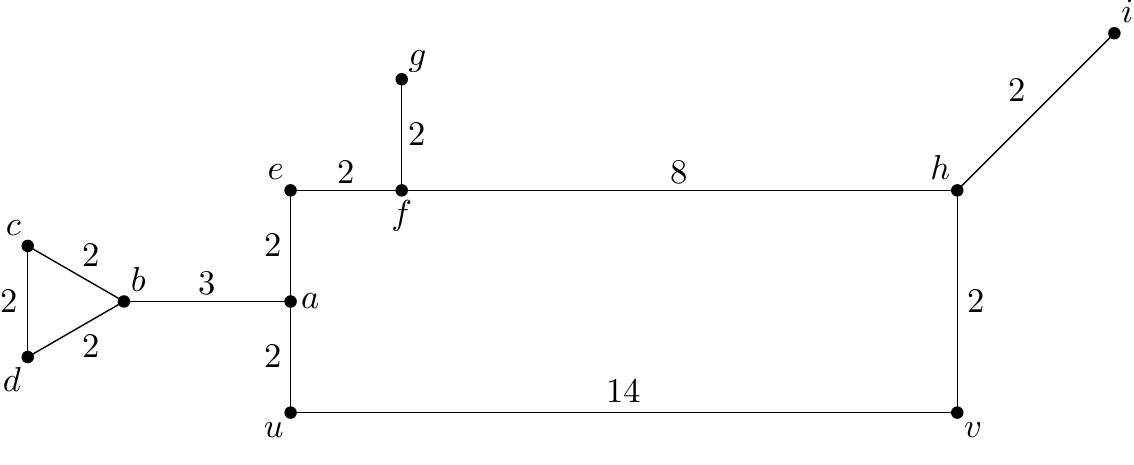}
		\caption{A network.} \label{fig::example_brute_force::input}
		\end{subfigure}
		
		\begin{subfigure}{\linewidth}
		\centering
		\includegraphics[page=4,scale=0.8]{example_brute_force}
		\caption{The subdivision of \(uv\).} \label{fig::example_brute_force::eccentricity_diagram::uv}
	\end{subfigure}%
	\end{subfigure}
	\begin{subfigure}{0.44\linewidth}
	\centering
		\includegraphics[]{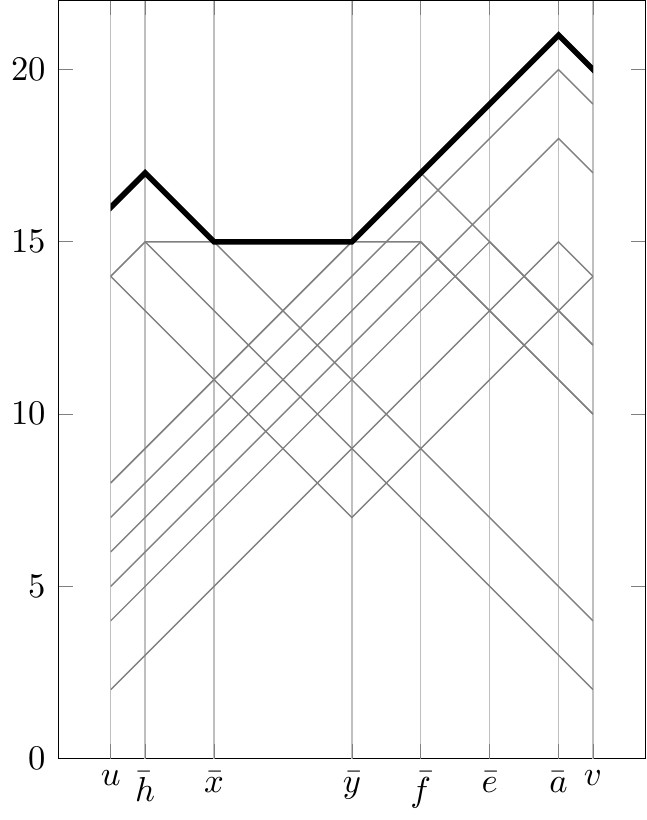}
		\caption{The functions \(p(\lambda) \mapsto \phi_{uv}^{st}(\lambda) = \max_{q \in st} d(p(\lambda),q)\), where \(p(\lambda) = (1-\lambda) u + \lambda v\) and \(\lambda \in [0,1]\), together with their upper envelope.} \label{fig::plot::example_brute_force::upper_envelope}
	\end{subfigure}

	\caption{An example of how the edge \(uv\) of the network in \subref{fig::example_brute_force::input} is subdivided \subref{fig::example_brute_force::eccentricity_diagram::uv} in the eccentricity diagram. In \subref{fig::plot::example_brute_force::upper_envelope} we see a plot of the functions \(p(\lambda) \mapsto \phi_{uv}^{st}(\lambda)\) from points \(p\) on \(uv\) to the network distance to their farthest points on each edge \(st\) of the network. The upper envelope (thick, black) of these functions is the eccentricity on \(uv\). We subdivide the edge \(uv\) at the points \(\bar h\), \(\bar x\), \(\bar y\), and \(\bar a\) to achieve linear eccentricity on each sub-edge.} \label{fig::example_brute_force::eccentricity_diagram}
\end{figure} 


\begin{lemma} \label{thm::eccentricity_diagram_construction}
    The eccentricity diagram of a network \(G = (V,E)\) with \(n\) vertices and \(m\) edges has size \(\mathcal{O}(m^2)\) and can be constructed in \(\mathcal{O}(m^2 \log n)\) time.
\end{lemma}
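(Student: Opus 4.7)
The plan is to establish the size bound directly from \cref{thm::segments_on_upper_envelope} and then construct the diagram by combining one all-pairs shortest-path computation on the vertices with a per-edge upper-envelope computation. For the size, fix an edge \(uv\) and let \(k_{uv}\) denote the number of edges of \(G\) that contain a farthest point from some point of \(uv\). \Cref{thm::segments_on_upper_envelope} guarantees that the eccentricity restricted to \(uv\) is piecewise linear with \(O(k_{uv})\) pieces, and by \cref{def::eccentricity_diagram} the endpoints of these pieces are exactly the subdivision points of \(uv\) in \(\ED(G)\). Since \(k_{uv} \le m\), summing over all edges yields \(\sum_{uv \in E} O(k_{uv}) = O(m^2)\) subdivision vertices and hence \(O(m^2)\) sub-edges.

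For the construction, I would first run Dijkstra's algorithm from each vertex of \(G\) with a binary heap, spending \(O(m \log n)\) time per source and \(O(n m \log n)\) in total; since \(G\) is connected we have \(n \le m+1\), so this pre-computation costs \(O(m^2 \log n)\). With the resulting vertex-to-vertex distance table in hand, \cref{thm::edge_to_edge_network_distance_function,thm::edge_to_self_network_distance_function} let me read off the at most three linear pieces of \(\phi_{uv}^{st}\) in \(O(1)\) time for every ordered edge pair. Then, for each fixed edge \(uv\), I compute the upper envelope of the \(m\) functions \(\{\phi_{uv}^{st} : st \in E\}\) by the standard divide-and-conquer scheme: split the set of functions in half, recursively compute the two envelopes, and merge them by a single linear sweep. \Cref{thm::segments_on_upper_envelope} caps the complexity of every intermediate envelope by \(O(m)\), so the recurrence yields \(O(m \log m)\) time per edge and \(O(m^2 \log m) = O(m^2 \log n)\) time overall. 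The breakpoints of each final envelope are precisely the vertices of \(\ED(G)\) on \(uv\).

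The main obstacle I expect is justifying the linear-time merge step, because the upper envelope of piecewise-linear functions can in general have super-linear complexity. The observation that carries the argument is the uniformity already exploited in the proof of \cref{thm::segments_on_upper_envelope}: along the fixed edge \(uv\), every non-constant piece of every \(\phi_{uv}^{st}\) has slope exactly \(\pm w_{uv}\), and \(\phi_{uv}^{uv}\) itself is symmetric trapezoidal. With only a constant number of possible slope values to track, merging two envelopes reduces to sweeping them in parallel while inserting at most a constant number of new breakpoints per input piece, which validates the divide-and-conquer recurrence and yields the claimed \(O(m^2 \log n)\) construction bound.
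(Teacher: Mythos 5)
Your proposal is correct and follows essentially the same route as the paper: the size bound comes from \cref{thm::segments_on_upper_envelope} summed over all \(m\) edges, and the construction combines an all-pairs shortest-path computation on the vertices with a per-edge upper-envelope computation of the functions \(\phi_{uv}^{st}\). The only differences are in which subroutines are invoked---the paper cites Johnson's APSP algorithm and Hershberger's envelope algorithm as black boxes where you run Dijkstra from every vertex and justify a hand-rolled divide-and-conquer merge via the common \(\pm w_{uv}\) slopes---and both choices meet the same \(\Oh(m^2 \log n)\) bound.
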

\begin{proof}
    The eccentricity function along each edge \(uv\) consists of \(\Oh(k) = \Oh(m)\) segments, where \(k\) is the number of edges containing farthest points from \(uv\). As we subdivide each of the \(m\) edges at the bending points of the eccentricity function, we obtain the bound on the size. 

    We can construct the eccentricity diagram as follows. First, we compute the network distances between all pairs of vertices of \(G\). We can use, for instance,  \citeauthor{johnson1977efficient}'s all-pairs shortest path algorithm~\cite{johnson1977efficient}, which has a running time of \(\Oh(n^2\log(n) +nm) = \Oh(m^2 \log n)\). With this information and \cref{thm::point_to_edge_network_distance,thm::edge_to_edge_network_distance_function,thm::edge_to_self_network_distance_function}, we can determine the functions \(\phi_{uv}^{st}\) for all edges \(uv,st\in E\). Then we can compute the upper envelope of the functions \(\phi_{uv}^{st}\) over all edges \(st \in E\). For instance, \citeauthor{hershberger1989finding}'s Algorithm~\cite{hershberger1989finding} can accomplish this task in \(\Oh(m \log n)\) time per edge \(uv\). The overall construction time is \(\Oh(m^2 \log n)\). 
\end{proof}

%
Next, we establish a lower bound on the size of eccentricity diagrams. The corresponding construction in the proof of \cref{thm::eccentricity_diagram_lower_size_bound} below shows that the bound stated in \cref{thm::eccentricity_diagram_construction} is tight for planar networks. 

\begin{lemma} \label{thm::eccentricity_diagram_lower_size_bound}
 For every \(k,l\in \N\) with \(k \ge 2\) and \(0 < l \le k^2\), there is a network with \(4k\) vertices and \(4k-1+l\) edges whose eccentricity diagram has size \(\Omega(kl)\).  
\end{lemma}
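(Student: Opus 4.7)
The plan is to exhibit, for every valid pair $(k,l)$, an explicit network $G_{k,l}$ on $4k$ vertices and $4k-1+l$ edges in which $\Omega(k)$ designated ``source'' edges each contribute $\Omega(l)$ linear pieces to the eccentricity function; summing gives an eccentricity diagram $\ED(G_{k,l})$ of total size $\Omega(kl)$.

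I would build $G_{k,l}$ in three layers. The first layer consists of $k$ short ``source'' edges $e_1,\dots,e_k$, joined together through a tiny tree-like hub on $2k$ vertices so that the pairwise distance between any two source edges is negligible compared with the upcoming far-side distances. The second layer is a long ``bridge'' attaching the hub to a far cluster built on the remaining $2k$ vertices; the bridge is chosen long enough that, on every source edge, the eccentricity is always realized by a far-side point rather than by a hub-side point or by the source edge itself. The third layer is the far cluster, which carries the far-side spanning tree together with the $l$ ``target'' edges $f_1,\dots,f_l$; pairing the $2k$ far-side vertices into $k$ pairs and treating the targets as extra edges between these pairs allows up to $k^2$ placements, which accommodates any $l \le k^2$. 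A direct edge count shows that this construction uses exactly the required $4k-1+l$ edges.

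For any fixed source edge $e_i = uv$, each target $f_j$ contributes a trapezoidal function $\phi_{uv}^{f_j}$ whose ascending and descending pieces share slope magnitude $w_{uv}$ by \cref{thm::edge_to_edge_network_distance_function}. By choosing target-edge weights so that the $l$ plateau heights $d(\bar t_j,\bar u_j)$ are all distinct and so that the ascending/descending shoulders of the trapezoids interleave at heights strictly above the self-contribution $\phi_{uv}^{uv}$ described in \cref{thm::edge_to_self_network_distance_function}, a standard upper-envelope argument shows that each trapezoid emerges on a non-empty sub-interval of $uv$ and therefore adds at least one new bending point to $\max_j \phi_{uv}^{f_j}$. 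This yields $\Omega(l)$ pieces on $uv$, and because the construction is symmetric in the source index $i$, the same lower bound holds on each of $e_1,\dots,e_k$, producing $\Omega(kl)$ linear pieces in total in $\ED(G_{k,l})$.

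The main obstacle is selecting the target-edge weights so that the required distinctness and non-domination of plateau heights hold \emph{simultaneously} from the viewpoint of every source edge, while respecting the fixed vertex and edge budget. This is precisely where $l \le k^2$ becomes natural: by indexing the $l$ targets as pairs $(a,b)\in\{1,\dots,k\}^2$ and routing the bridge and far-side tree so that the network distance from $e_i$ to $f_{(a,b)}$ depends non-trivially on $i$, $a$, and $b$, one can perturb the far-side edge weights (and the bridge length) so that on every source edge the $l$ trapezoid plateaus land at $l$ distinct heights ordered differently enough that none of them is globally dominated. Verifying that this perturbation is geometrically realizable with straight-line edges and strictly positive weights is the delicate part; once this is in place, the lower bound of $\Omega(kl)$ pieces follows immediately by summing over the $k$ source edges.
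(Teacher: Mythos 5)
Your accounting is the transpose of the paper's ($k$ source edges each carrying $\Omega(l)$ bending points, versus the paper's $l$ added edges each carrying $\Omega(k)$ bending points), which would be fine in principle, but the topology you describe cannot deliver the $\Omega(l)$ pieces per source edge. The fatal step is the single long bridge. If every path from a target edge $f_j = s_jt_j$ to the source edge $uv$ passes through one cut vertex $c$, then $d(t_j,v)-d(t_j,u) = d(c,v)-d(c,u)$ for every far vertex $t_j$, independently of $j$ and of any choice of far-side weights. By \cref{thm::point_to_edge_network_distance}, the shoulder positions $\bar s_j,\bar t_j$ of the trapezoid $\phi_{uv}^{f_j}$ are determined precisely by these differences, so all $l$ trapezoids have \emph{identical} shoulder positions; since their slopes are also all $\pm w_{uv}$, the one with the largest plateau height dominates the others pointwise, and the contribution of the entire far cluster to the eccentricity on $uv$ is a single tent with $O(1)$ pieces. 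No perturbation of far-side or bridge weights can make the shoulders "interleave," because their positions do not depend on those weights at all. Even setting the bridge aside, there is a counting obstruction: a shoulder position on $uv$ is a function of a single far endpoint, of which there are only $2k$, so at most $O(k)$ distinct shoulder positions are available per source edge, which is far short of the $\Omega(l)=\Omega(k^2)$ distinct emerging trapezoids your argument needs.

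The paper avoids this by reversing the roles: the $l$ extra edges $v_iu_j$ are the edges that get finely subdivided, and the subdividing is done by only $k$ designated far vertices $x_1,\dots,x_k$, each attached through a chain of tiny $\epsilon$-edges so that the differences $d(x_r,u_j)-d(x_r,v_i)$ genuinely vary with $r$. This shifts the $k$ tent apexes to $k$ distinct positions along $v_iu_j$, and a short computation shows every other candidate farthest point lies strictly below these tents, so the envelope on each added edge has $\Omega(k)$ pieces and the total is $\Omega(kl)$. If you want to salvage your layout, you must replace the single bridge by $\Omega(k)$ distinct attachment points with controllably different offsets to the two endpoints of each source edge, and you should aim for $\Omega(k)$ (not $\Omega(l)$) pieces per subdivided edge on the $l$ multiplied edges.
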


\begin{proof} Let \(k \ge 2\) and let \(\epsilon  \) be such that \(0 < \epsilon < \frac{1}{k-1}\).  Consider the network \(G_{k,0}\) formed by the black edges of the network depicted in \cref{fig::construction_eccentricity_diagram_size_base}. We obtain the network \(G_{k,l}\) by adding \(l\) edges of the form \(v_iu_j\) with \(i,j \in \set{1,2,\dots,k}\) and weight \(w_{v_iu_j} = k-1\) to \(G_{k,0}\).  All edge weights are positive, since \(k\ge 2\). Thus, the network distance is a metric on \(G_{k,l}\). 
\begin{figure}[!ht]
	\centering
	\includegraphics[scale=1]{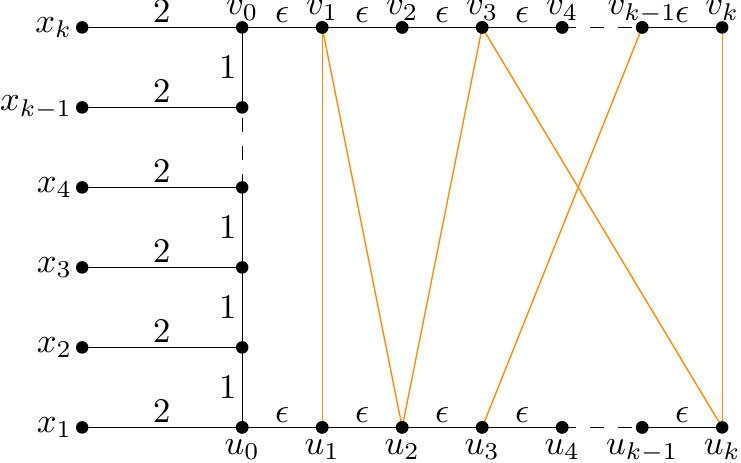}
	\caption{The black edges form the network \(G_{k,0}\). Adding \(l\) orange edges \(v_iu_j\) with edge weight \(w_{v_iu_j} = k-1\) and  \(i,j\in \set{1,2,\dots,k}\) yields a network \(G_{k,l}\).} \label{fig::construction_eccentricity_diagram_size_base}
\end{figure}
\begin{figure}[!ht]
    \centering
    \begin{subfigure}{0.48\linewidth}
		\centering 
		\includegraphics[scale=1]{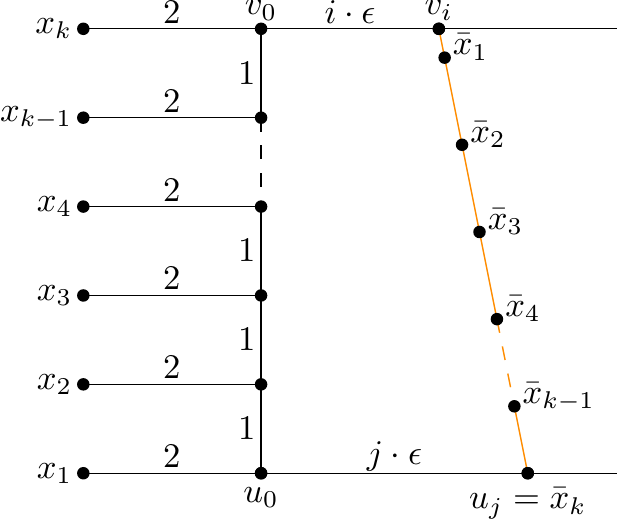}
		\caption{The farthest point \(\bar x_r\) from vertex \(x_r\) on the edge \(v_iu_j\) with \(i\le j\) for all \(r=1,2,\dots,k\).}
		 \label{fig::construction_eccentricity_diagram_size_comp_x}
    \end{subfigure}\quad%
    \begin{subfigure}{0.48\linewidth}
	\centering
		\includegraphics{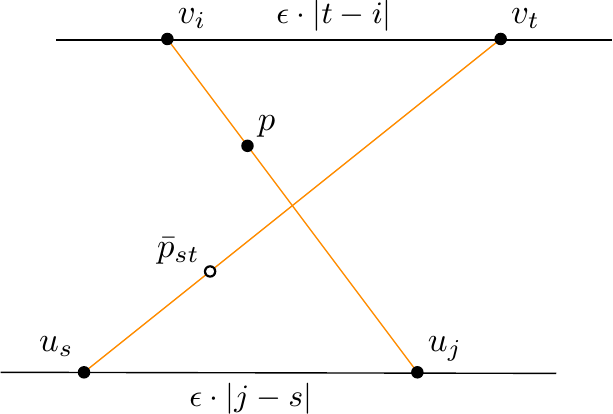}
		\caption{The farthest point \(\bar p_{st}\) from \(p \in v_iu_j\) on another non-\(G_{k,0}\) edge \(u_sv_t\).\label{fig::construction_eccentricity_diagram_size_comp_l}} 
	\end{subfigure}
    \begin{subfigure}{1\linewidth}
    	\centering
	\includegraphics{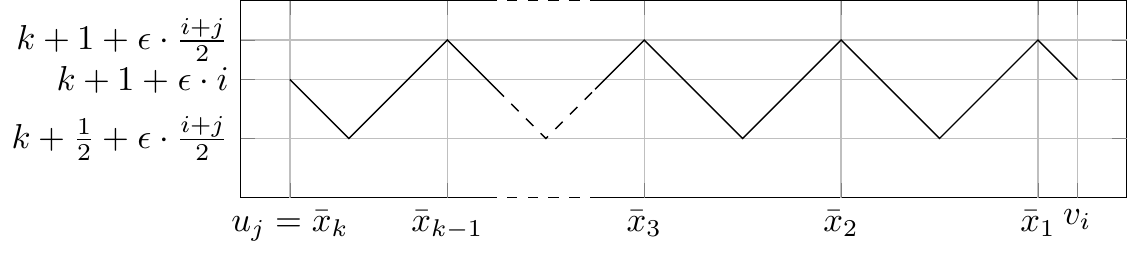}
	\caption{The distance from any point \(p\) on \(u_jv_i\) to the farthest among \(x_1,x_2,\dots,x_k\).}
	\label{fig::construction_eccentricity_diagram_size_envelope} 
	\end{subfigure}
	\caption{An illustration of the arguments in the proof of \cref{thm::eccentricity_diagram_lower_size_bound}. Any non-\(G_{k,0}\) edge \(u_jv_i\) in network \(G_{k,l}\) is subdivided \subref{fig::construction_eccentricity_diagram_size_comp_x} into \(k\) edges in the eccentricity diagram of \(G_{k,l}\). We show that the distance from any point \(p\) on \(u_jv_i\) to the farthest point \(\bar p_{st}\) on another non-\(G_{k,0}\) edge \(u_sv_t\) \subref{fig::construction_eccentricity_diagram_size_comp_l} is smaller than the distance to the farthest point among \(x_1, x_2, \dots, x_k\) \subref{fig::construction_eccentricity_diagram_size_envelope}.  \label{fig::construction_eccentricity_diagram_size_arguments}}
\end{figure}

 \Cref{fig::construction_eccentricity_diagram_size_arguments} illustrates the following arguments. First, consider a non-\(G_{k,0}\) edge \(v_iu_j\) with \(i\le j\). We show that \(v_iu_j\) will be subdivided into at least \(2k-2\) sub-edges in the eccentricity diagram of \(G_{k,l}\), as shown in \cref{fig::construction_eccentricity_diagram_size_comp_x}. For each \(r=1,2,\dots,k\), let \(\bar{x}_r\) be the point on \(v_iu_j\) that is farthest from \(x_r\). Then we have
\(
	d(x_r,\bar x_r)= k+1 + \epsilon\frac{i+j}{2} 
\), 
for each \(r=1,2,\dots,k-1\) and \(\bar x_k = u_j\) with
\(
	d(x_k,\bar x_k) = d(x_k,u_j) = k+1+\epsilon i
\).

A plot of the mapping \(p \mapsto \max_{r=1}^kd(p,x_r)\) from the points \(p\) on \(u_jv_i\) to their farthest point(s) among \(x_1\), \(x_2\), \dots, \(x_r\) is shown in \cref{fig::construction_eccentricity_diagram_size_envelope}. We claim that this function coincides with the eccentricity function, i.e., that \(x_1\), \(x_2\), \dots, \(x_r\) are the only eccentric points from the edge \(u_jv_i\): We observe that the maximum distance to any of the vertices \(x_1\), \(x_2\), \dots, \(x_r\)  is always at least \(k + \frac12 + \epsilon\cdot \frac{i+j}2 > k+\epsilon\). 

On the other hand, all points on the network have distance at most \(k\) from \(p\), except for points on edges incident to any of \(x_1,x_2,\dots,x_k\). Let \(\bar p_{st}\) be the farthest point from \(p\) on the edge \(u_sv_t\) as shown in \cref{fig::construction_eccentricity_diagram_size_comp_l}. Here we have \(d(p,\bar p_{st}) < k\) because 
\begin{align*}
	d(p,\bar p_{st}) &= \frac{d(p,u_s) + d(p,v_t) + w_{u_sv_t}}{2}, \\
	&= \frac{ w_{pu_j} + \epsilon\cdot \abs{j-s} + w_{pv_i} + \epsilon\cdot \abs{i-t} + w_{u_sv_t}}{2}, \\
	&= \frac{w_{u_sv_t} + w_{u_jv_i} + \epsilon\cdot \left(\abs{j-s}+ \abs{i-t}\right) }{2}, \\
	&= k-1 + \epsilon \cdot \frac{\abs{j-s}+ \abs{i-t}}{2}, \\
	&\le k-1 + \epsilon \cdot(k-1) < k.
\end{align*}
Therefore, \cref{fig::construction_eccentricity_diagram_size_envelope} shows the eccentricity along \(v_iu_j\). If \(i=j\) then this function consists of \(2k-2\) segments and \(2k-1\) otherwise. This is true for any of the \(l\) non-\(G_{k,0}\) edges. Hence, there are at least \(4k-1 + l\cdot (2k-2) \in \Omega(lk)\) edges in the eccentricity diagram of \(G_{k,l}\).  
\end{proof}

\begin{corollary}\label{thm::eccentricity_diagram_many_edges_size_bound}  For all \(n, m \in \N\) with  \(m \in \omega(n)\) and \(n \ge 8\), there exists a network  with \(n\) vertices and \(m\) edges whose eccentricity diagram has size \(\Omega(nm)\).
\end{corollary}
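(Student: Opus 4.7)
The plan is to instantiate \cref{thm::eccentricity_diagram_lower_size_bound} with parameters chosen as functions of $n$ and $m$, and then add a small number of subdivision vertices to meet the exact counts.

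Concretely, I would set $k = \lfloor n/4 \rfloor$, so that $k \ge 2$ (since $n \ge 8$), and $\delta \coloneqq n - 4k \in \{0,1,2,3\}$. Then I would set $l = m - n + 1$. The hypothesis $m \in \omega(n)$ guarantees $l \ge 1$ for all sufficiently large $n$, and provided $l \le k^2$ (equivalently $m \le \lfloor n/4 \rfloor^2 + n - 1$) the lemma produces a network $G_{k,l}$ with $4k$ vertices, $4k - 1 + l$ edges, and $|\ED(G_{k,l})| \in \Omega(kl)$.

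To reach exactly $n$ vertices and $m$ edges, I would subdivide $\delta$ arbitrary edges of $G_{k,l}$ by inserting a new vertex at the midpoint of each and splitting the edge weight in half. Every such subdivision adds one vertex and one edge, so the resulting network $G$ has $4k + \delta = n$ vertices and $(4k - 1 + l) + \delta = m$ edges. Because the two new sub-edge weights sum to the original weight, subdivision preserves every network distance, hence the eccentricity function on $G$ coincides with that on $G_{k,l}$. The subdivision vertices and all bending points of $\ED(G_{k,l})$ therefore appear in $\ED(G)$, yielding $|\ED(G)| \ge |\ED(G_{k,l})| \in \Omega(kl)$. Combining $k = \Theta(n)$ with $l = m - n + 1 = (1-o(1))m$ (which follows from $n/m \to 0$, as implied by $m \in \omega(n)$) gives $kl \in \Omega(nm)$.

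The main obstacle is the hypothesis $l \le k^2$ of the lemma: it restricts the direct argument above to $m \le \lfloor n/4 \rfloor^2 + n - 1 = \Theta(n^2)$ with a small constant, which nevertheless covers the principal regime of interest---networks of super-linear but not near-complete edge density. Handling $m$ within a constant factor of $\binom{n}{2}$ would require a strengthened form of the lemma, e.g.\ enriching the base construction to admit a larger bipartite part carrying more than $k^2$ eccentricity-generating edges.
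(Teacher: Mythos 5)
Your argument is essentially the paper's own proof: instantiate \cref{thm::eccentricity_diagram_lower_size_bound} with \(k=\lfloor n/4\rfloor\) and \(l=m-n+1\), then subdivide a few edges to reach the exact vertex and edge counts (your count of \(\delta = n-4k\) subdivisions, each adding one vertex and one edge, is the correct bookkeeping; the paper writes \(m \bmod 4\), which appears to be a typo for \(n \bmod 4\)). The hypothesis \(l\le k^2\) that you flag is a genuine constraint which the paper's proof silently ignores, so for \(m\) exceeding roughly \(\lfloor n/4\rfloor^2\) neither argument applies as written; your explicit acknowledgment of this restriction is a point of added care rather than a defect of your approach.
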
 
\begin{proof}
	Let \(n \ge 8\) and \(m\in \omega(n)\). Consider \(G_{k,0}\) from the proof of \cref{thm::eccentricity_diagram_lower_size_bound} with \(k = \left\lfloor{\frac{n}{4}}\right\rfloor\). By adding \(l= m-n+1\) edges of the form \(v_iu_j\), \(i,j=1,2,\dots,k\), we obtain a network \(G_{k,l}\) with \(k=\Theta(n)\) and \(l=\Theta(m)\). According to \cref{thm::eccentricity_diagram_lower_size_bound}, the eccentricity diagram of \(G_{k,l}\) has size  \(\Omega(k l) = \Omega(nm)\). Subdividing \(m \bmod 4\) edges of \(G_{k,l}\) yields a network with the same property and exactly \(n\) vertices and \(m\) edges.
\end{proof}

\begin{corollary} \label{thm::eccentricity_diagram_planar_size_bound} For all \(n \in \N\) with \(n\ge 8\), there exists a planar network  with \(n\) vertices whose eccentricity diagram has size \(\Omega(n^2)\).
\end{corollary}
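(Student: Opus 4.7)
The plan is to recycle the construction \(G_{k,l}\) from the proof of \cref{thm::eccentricity_diagram_lower_size_bound}, taking \(k = \lfloor n/4 \rfloor\) together with a suitably small value \(l \in \Theta(k)\), and to verify that this instance admits a planar embedding. Applying \cref{thm::eccentricity_diagram_lower_size_bound} will then immediately yield an eccentricity diagram of size \(\Omega(kl) = \Omega(k^2) = \Omega(n^2)\) on \(\Theta(n)\) vertices, which is precisely the bound we seek. Note that the previous \cref{thm::eccentricity_diagram_many_edges_size_bound} is not strong enough here, because planarity forces \(m \in \Oh(n)\) rather than \(m \in \omega(n)\); instead we must exploit the full strength of the quadratic-in-\(k\) size bound of the underlying lemma while keeping the number of added edges linear.

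First I would verify that the base network \(G_{k,0}\), as drawn in \cref{fig::construction_eccentricity_diagram_size_base}, is planar. With \(4k\) vertices and only \(4k-1\) edges, Euler's formula allows up to \(8k-5\) additional edges before planarity is lost, which is comfortably \(\Theta(k)\). The crucial step is then to exhibit a set of \(l \in \Theta(k)\) edges of the form \(v_iu_j\) that can be added to \(G_{k,0}\) without introducing crossings. A natural candidate is a non-crossing bipartite matching or path between the vertices \(v_1,\ldots,v_k\) and \(u_1,\ldots,u_k\), routed through the outer face of the embedding of \(G_{k,0}\); any such choice of linearly many edges will do.

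Having fixed a planar \(G_{k,l}\) with \(l \in \Theta(k)\), I would apply \cref{thm::eccentricity_diagram_lower_size_bound} to conclude that its eccentricity diagram contains \(\Omega(kl) = \Omega(k^2) = \Omega(n^2)\) edges. To hit exactly \(n\) vertices for arbitrary \(n \ge 8\), I would subdivide at most three existing edges of \(G_{k,l}\) with new degree-two vertices; this preserves planarity and changes the eccentricity diagram size only by an additive constant, so the \(\Omega(n^2)\) bound is retained.

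The main obstacle is justifying that \(\Omega(k)\) edges of the form \(v_iu_j\) can really be added without crossings in the layout of \cref{fig::construction_eccentricity_diagram_size_base}. This hinges on the geometric arrangement of the \(v_i\) and \(u_j\) along the outer face of \(G_{k,0}\): provided these two groups appear on a common face in compatible cyclic orders, a non-crossing bipartite subgraph of linear size can be constructed explicitly. Once planarity of \(G_{k,l}\) is secured, the remainder of the argument is just the composition of \cref{thm::eccentricity_diagram_lower_size_bound} with a counting step via Euler's formula.
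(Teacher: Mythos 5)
Your proposal matches the paper's proof: the paper also takes \(k = \lfloor n/4\rfloor\), adds the non-crossing matching \(v_iu_i\) for \(i=1,\dots,k\) (so \(l=k\)) to obtain a planar \(G_{k,k}\), invokes \cref{thm::eccentricity_diagram_lower_size_bound} to get \(\Omega(k^2)=\Omega(n^2)\), and subdivides \(n \bmod 4\) edges to reach exactly \(n\) vertices. The only cosmetic difference is your detour through Euler's formula, which is unnecessary once the explicit non-crossing matching is named.
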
 
\begin{proof}
	Let \(n \ge 8\). Consider \(G_{k,0}\) from the proof of \cref{thm::eccentricity_diagram_lower_size_bound} with \(k = \left\lfloor{\frac{n}{4}}\right\rfloor \in \Theta(n)\). By adding all edges of the form \(v_iu_i\) for \(i=1,2,\dots,k\), we obtain a planar network \(G_{k,k}\). According to \cref{thm::eccentricity_diagram_lower_size_bound}, the eccentricity diagram of this network has size  \(\Omega(k^2) = \Omega(n^2)\). Subdividing \(n \bmod 4\) edges of \(G_{k,k}\) yields a network with the same property and exactly \(n\) vertices.
\end{proof}

\subsection{A Data Structure for Eccentricity Queries}

Assume we are given the eccentricity diagram \(\ED(G)\) of a network \(G\) as well as the eccentricity \(\ecc(x)\) of each vertex \(x\) of \(\ED(G)\). Then we can answer queries for the eccentricity \(\ecc(p)\) of any point \(p\) on \(G\): It suffices to identify the edge \(ab\) of \(\ED(G)\) containing \(p\), since 
\[
    \ecc(p) = \left(1 - \frac{w_{ap}}{w_{ab}} \right) \ecc(a) + \frac{w_{ap}}{w_{ab}} \ecc(b).
\]
Recall that eccentricity queries consist of a point \(p\) on \(G\) and the edge \(uv\) containing \(p\). For each edge \(uv\) of \(G\), we store the vertices \(x \in uv\) of \(\ED(G)\) (e.g., in an array or in a balanced binary search tree) sorted by the fraction \(\frac{w_{ux}}{w_{uv}}\) at which \(x\) subdivides \(uv\). Then we can find the sub-edge \(ab\) containing \(p\) with a binary search for \(\frac{w_{up}}{w_{uv}}\) in \(\Oh(\log n)\) time, as there are at most \(\Oh(m) =\Oh(n^2)\) vertices of \(\ED(G)\) on \(uv\). 
\begin{theorem} \label{thm::eccentricity_diagram_data_structure}
    Given a network \(G\) with \(n\) vertices and \(m\) edges. There is a data structure that can be constructed in \(\Oh(m^2 \log n)\) time and has size \(\Oh(m^2)\) supporting queries for the eccentricity of any point \(p\) on \(G\) in \(\Oh(\log n)\) time, provided that the edge \(uv\) of \(G\) containing \(p\) is given. 
\end{theorem}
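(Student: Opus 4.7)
The plan is to directly realize the data structure already sketched in the paragraphs preceding the theorem, using the eccentricity diagram \(\ED(G)\) from \cref{thm::eccentricity_diagram_construction} as the backbone and annotating it with precomputed eccentricity values at its vertices. First I would invoke \cref{thm::eccentricity_diagram_construction} to build \(\ED(G)\) in \(\Oh(m^2 \log n)\) time and \(\Oh(m^2)\) space. As a byproduct, Johnson's all-pairs shortest paths already computed during that construction provide \(d(u,v)\) for every pair of vertices \(u,v\in V\); combining these values with \cref{thm::point_to_edge_network_distance} gives \(\ecc(x)\) at every original vertex \(x\in V\) in \(\Oh(m)\) time per vertex (taking a maximum over all edges), hence \(\Oh(nm)\) time overall. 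For the remaining vertices of \(\ED(G)\), which lie in the interior of original edges, I would simply read off the eccentricity from the upper envelope already constructed in the proof of \cref{thm::eccentricity_diagram_construction}; these values are available at the endpoints of the linear pieces at no extra asymptotic cost.

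Next I would organize, for each original edge \(uv \in E\), the at most \(\Oh(m)\) subdivision points that \(\ED(G)\) places on \(uv\) into a sorted array keyed by the parameter \(\lambda \in [0,1]\) with \(p=(1-\lambda)u+\lambda v\). Summed over all edges, these arrays use \(\Oh(m^2)\) space by \cref{thm::eccentricity_diagram_construction}, and they can be built in \(\Oh(m^2 \log n)\) time (the sorting step is dominated by the construction of \(\ED(G)\) itself, since the bending points are already produced in order along \(uv\) by the upper-envelope algorithm of Hershberger).

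To answer a query with a point \(p\) on edge \(uv\), I would compute \(\lambda = w_{up}/w_{uv}\) and binary-search the array for \(uv\) to locate the sub-edge \(ab\) of \(\ED(G)\) containing \(p\). Since the array has \(\Oh(m) = \Oh(n^2)\) entries, this takes \(\Oh(\log n)\) time. Because \(\ecc\) is linear on \(ab\) by the very definition of \(\ED(G)\) (\cref{def::eccentricity_diagram}), the answer follows from the interpolation formula
\begin{align*}
  \ecc(p) = \Bigl(1 - \tfrac{w_{ap}}{w_{ab}}\Bigr)\,\ecc(a) + \tfrac{w_{ap}}{w_{ab}}\,\ecc(b),
\end{align*}
which is evaluated in \(\Oh(1)\) additional time. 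Putting these pieces together yields the claimed \(\Oh(\log n)\) query time, \(\Oh(m^2)\) space, and \(\Oh(m^2 \log n)\) preprocessing time.

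There is no substantial obstacle here; the main point is bookkeeping: verifying that the preprocessing cost of sorting and annotation does not exceed the cost of building \(\ED(G)\) itself, and that \(\Oh(m)=\Oh(n^2)\) keeps \(\log\) of the array size within \(\Oh(\log n)\). Both are immediate once \cref{thm::eccentricity_diagram_construction} is in hand.
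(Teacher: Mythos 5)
Your proposal matches the paper's construction exactly: build \(\ED(G)\) via \cref{thm::eccentricity_diagram_construction}, store the eccentricity at each of its vertices, keep the subdivision points of each original edge \(uv\) sorted by the parameter along \(uv\), binary-search for the sub-edge \(ab\) containing the query point, and linearly interpolate between \(\ecc(a)\) and \(\ecc(b)\). The paper presents this same argument (including the \(\Oh(m)=\Oh(n^2)\) bound on subdivision points per edge that keeps the binary search at \(\Oh(\log n)\)) in the paragraph preceding the theorem, so your proof is correct and takes essentially the same approach.
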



	
\section{Network Farthest-Point Diagrams} \label{sec::network_farthest_point_diagrams}

Apart from the network distance towards farthest points, we are also interested in their location: We seek to query for the set of all farthest points from any point \(g\) on a network \(G\). This suggests that we subdivide the network into parts with a common set of farthest points and then find the part containing \(g\). In order to analyze this approach, we formally define this subdivision, which is the farthest-point Voronoi diagram whose metric space consists of all points on \(G\) and the network distance \(d_G(\cdot, \cdot)\) and whose sites are all points on \(G\).

\subsection{Farthest-Point Network Voronoi Link Diagrams}

\begin{definition}[Farthest-Point Network Voronoi Link Diagram]
Let \(G\) be a network. 
\begin{enumerate}[(i)]
\item Let \(g\) be a point on \(G\). The set of points on \(G\) to whom \(g\) is a farthest point is denoted by \(\mathcal{V}_{\text{far-net}}(g)\), i.e.,
\[
        \mathcal{V}_{\text{far-net}}(g) := \set{p \in G \colon \forall g'  \in G \colon d(p,g') \le d(p,g)}.
\]
We call $\mathcal{V}_{\text{far-net}}(g)$ the \emph{farthest-point network Voronoi link cell} of \(g\). 
\item We obtain the \emph{farthest-point network Voronoi link diagram} of \(G\) by subdividing \(G\) at each boundary point of the non-empty farthest-point network Voronoi link cells, i.e., at all points in the set 
    \(
       \mathcal{S} \coloneqq \bigcup_{g \in G} \partial \mathcal{V}_{\text{far-net}}(g)
    \).
    \item We say that a farthest-point network Voronoi link diagram is \emph{finite} if and only if \(\mathcal{S}\) is finite.
\end{enumerate}
\end{definition}

Finite farthest-point network Voronoi link diagrams are, by definition, (finite) networks with finitely many vertices. Infinite farthest-point network Voronoi link diagrams, on the other hand, are \emph{infinite networks} with infinitely many vertices and degenerate edges that are reduced to single points, as shown in \cref{fig::example_twocycle_cfpnvd}. Traditionally, Voronoi diagrams are determined by a finite set of sites. The \emph{farthest-point Voronoi diagram} \cite[Section~3.3]{oktabe2000spatial} subdivides the plane into regions with a common farthest point among finitely many points in the plane. The \emph{network Voronoi link diagram} \cite[Section~3.8]{oktabe2000spatial} subdivides a network into parts with common closest vertices among finitely many vertices. The farthest-point network Voronoi link diagram differs from other Voronoi diagrams in at least two ways. First, we are unaware which points are the sites, i.e., which points on a network are farthest points. Second, there are infinitely many farthest points when the farthest-point network Voronoi link diagram is infinite, as depicted in \cref{fig::examples_cfpnvd}. We cannot immediately apply known methods to produce farthest-point network Voronoi link diagrams, and we cannot use these diagrams for farthest-point-set queries in the infinite case.%
\begin{figure}[!hb]
\centering
\begin{subfigure}[b]{.49\linewidth}
\centering
\includegraphics{example_quadlet_ecc_diag}
\caption{Finite diagram.} \label{fig::example_quadlet_ecc_diag}
\end{subfigure} 
\begin{subfigure}[b]{.49\linewidth} \centering
\includegraphics{example_twocycle_cfpnvd}
\caption{Infinite diagram.} \label{fig::example_twocycle_cfpnvd}
\end{subfigure}
\caption{The farthest-point network Voronoi link diagrams for two networks. Parts that have a common farthest point (square) are indicated in colour. In the finite case \subref{fig::example_quadlet_ecc_diag}, the network is subdivided into regions with a fixed farthest point. In the infinite case  \subref{fig::example_twocycle_cfpnvd}, we have a different behaviour on the vertical edges (black): When the point \(p\) is moved upwards, its two farthest points \(\bar p\) and \(\bar p'\) move downwards accordingly. No two points on this edge have a common farthest point.\vspace*{-0.5cm}} \label{fig::examples_cfpnvd}
\end{figure}%

Characterizing the finiteness of farthest-point network Voronoi diagrams reveals how we can avoid the difficulties of the infinite case. The following auxiliary lemma will help us with this characterization. 

\begin{lemma}  \label{thm::no_multiple_cells_along_edges}
    Given an edge \(uv\) of a network \(G\) and a point \(g\) on \(G\). The set of points on \(uv\) that have \(g\) as a farthest point, i.e., the set \(\mathcal{V}_\text{far-net}(g) \cap uv\), is a (possibly empty) interval on \(uv\). 
\end{lemma}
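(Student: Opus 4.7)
The plan is to fix any edge $e_g$ of $G$ that contains $g$, exploit the inequality chain
\(
d(p,g) \le \max_{q \in e_g} d(p,q) = \phi_{uv}^{e_g}(p) \le \ecc(p),
\)
and then handle the two equality conditions separately. The chain shows that $p \in \mathcal{V}_\text{far-net}(g) \cap uv$ if and only if both $d(p,g) = \phi_{uv}^{e_g}(p)$ (so that $g$ realises the farthest point on $e_g$ from $p$) and $\phi_{uv}^{e_g}(p) = \ecc(p)$ (so that $\phi_{uv}^{e_g}$ attains the upper envelope of the family $\{\phi_{uv}^{st}\}_{st \in E}$ at $p$). Writing $A$ and $B$ for the sets of $p \in uv$ satisfying these two conditions respectively, the task reduces to proving that $A \cap B$ is an interval on $uv$.

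Next I would establish that $A$ is always an interval. Invoking \cref{thm::edge_to_edge_network_distance_function} when $e_g \ne uv$, the farthest point on $e_g$ from $p$ equals the fixed point $\bar u$ throughout the ascending region $[u,\bar t]$, sweeps continuously from $\bar u$ to $\bar v$ across the plateau $[\bar t,\bar s]$, and equals the fixed point $\bar v$ throughout the descending region $[\bar s,v]$. A direct case analysis---depending on whether $g$ equals $\bar u$, equals $\bar v$, lies strictly between them in the sweeping range, or lies outside $[\bar u,\bar v]$ on $e_g$---shows that $A$ is empty, a single point, or one of the two end-regions, and in the last case $A$ lies in a region where $\phi_{uv}^{e_g}$ is linear with slope exactly $+w_{uv}$ or $-w_{uv}$. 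For $e_g = uv$ the analogous conclusion follows from the four-region decomposition of \cref{thm::edge_to_self_network_distance_function}: a non-degenerate $A$ arises only for $g = u$ (giving $A = [c,\bar u]$, located in the ascending part) or $g = v$ (giving $A = [\bar v, c]$, located in the descending part).

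Finally I would show that $A \cap B$ is itself an interval via a slope comparison. If $A$ is empty or a single point the claim is immediate, so assume $A$ is non-degenerate. On $A$, the function $\phi_{uv}^{e_g}$ has constant slope $+w_{uv}$ (the case of slope $-w_{uv}$ is symmetric), while by \cref{thm::edge_to_edge_network_distance_function,thm::edge_to_self_network_distance_function} every other $\phi_{uv}^{e}$ is piecewise linear with slopes in $\set{-w_{uv}, 0, +w_{uv}}$. Hence on $A$ the difference $\phi_{uv}^{e_g} - \phi_{uv}^{e}$ has slopes in $\set{0, +w_{uv}, +2w_{uv}}$ and is therefore non-decreasing, so $\set{p \in A \colon \phi_{uv}^{e_g}(p) \ge \phi_{uv}^{e}(p)}$ is a sub-interval of $A$. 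Intersecting these sub-intervals over all $e \in E$ expresses $A \cap B$ as an intersection of sub-intervals of $A$, which is itself an interval on $uv$.

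The main obstacle is the exhaustive case analysis needed to pin down $A$ as an interval contained in a monotone piece of $\phi_{uv}^{e_g}$. The case $e_g = uv$ governed by \cref{thm::edge_to_self_network_distance_function} is the most delicate: its two additional plateau regions (where the farthest point on $uv$ coincides with an endpoint of $uv$) must be excluded from the non-degenerate sub-cases, so that $g = u$ and $g = v$ are confirmed to land strictly inside the ascending or descending portion of $\phi_{uv}^{uv}$ where the slope argument above applies.
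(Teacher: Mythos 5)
Your reduction to $A\cap B$ and the final ``intersection of intervals is an interval'' step are sound, but the case analysis that is supposed to pin down $A$ omits exactly the case in which your slope argument breaks: $g=\bar u=\bar v$. When the two farthest points from $u$ and from $v$ on $e_g$ coincide (e.g.\ take $e_g=st$ in a $K_4$ with unit weights, so that every $p\in uv$ has the midpoint of $st$ as its farthest point on $st$), the plateau $\bar t\bar s$ of $\phi_{uv}^{e_g}$ degenerates to a point and $A$ is the \emph{union} of the two end-regions, i.e.\ all of $uv$ --- not ``one of the two end-regions.'' On this $A$ the function $\phi_{uv}^{e_g}$ is a tent with slopes $+w_{uv}$ then $-w_{uv}$, so it is not linear, the difference $\phi_{uv}^{e_g}-\phi_{uv}^{e}$ is not monotone, and your claim that each set $\set{p\in A\colon \phi_{uv}^{e_g}(p)\ge \phi_{uv}^{e}(p)}$ is an ``upper'' sub-interval fails as stated. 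The gap is repairable: on the ascending half of the tent every slope of $\phi_{uv}^{e}$ is at most $+w_{uv}$, so the difference is non-decreasing there, and symmetrically it is non-increasing on the descending half; hence the difference is unimodal and its superlevel set is still an interval. But this extra case must be argued explicitly, and your parenthetical description of the $e_g=uv$ case also has the regions reversed (the farthest point sweeps on $u\bar v$ and $\bar u v$ and is pinned to an endpoint on $\bar v c$ and $c\bar u$), even though your conclusion for that case is correct.

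For comparison, the paper avoids this bookkeeping entirely with a short contradiction argument: if $g$ were farthest from $a$ and $b$ but not from some $c\in ab$, then (after ruling out $g\in ab$ via \cref{thm::edge_to_self_network_distance_function}) one walks from $a$ towards the farthest point from $g$ on $ab$; along this walk the distance to $g$ grows by exactly $w_{ac}$, which is the maximum possible growth of $d(\cdot,q)$ for \emph{any} $q$, so no competitor $g'$ with $d(a,g')\le d(a,g)$ can satisfy $d(c,g')>d(c,g)$. That Lipschitz-type argument needs no decomposition into the sets $A$ and $B$ and no slope case analysis. Your envelope-based route can be completed, but only after the degenerate-plateau case above is handled.
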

\begin{proof}
	Assume, for the sake of a contradiction, that the statement is false. Then, there are points \(a\) and \(b\) on \(uv\) such that \(g\) is eccentric to \(a\) and \(b\) but not to some point \(c\) on \(ab\). Let \(\bar g'\) be a farthest point from \(c\) in \(G\). \Cref{fig::sketch_cell_interval_on_edge} shows a sketch of this (impossible) constellation. 
	\begin{figure}[ht]
		\centering
		\includegraphics{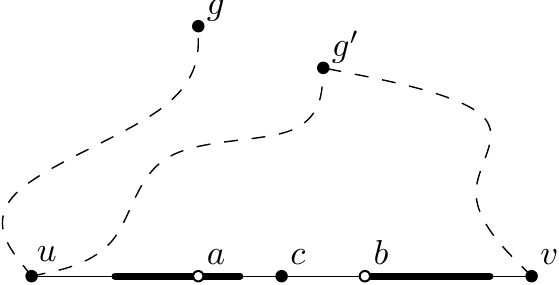}
		\caption{The impossible constellation where the points on an edge \(uv\) that have \(g\) as farthest point consists of two disjoint intervals (thick, black). There would be two points \(a\) and \(b\) in these intervals such that the farthest point \(\bar c\) of the midpoint \(c\) of \(ab\) is different from \(g\).} \label{fig::sketch_cell_interval_on_edge}
	\end{figure}
	
	We first argue why \(g\) cannot be contained in \(ab\). \Cref{thm::edge_to_self_network_distance_function} implies that if \(g \in uv\) and if \(g\) is farthest from both \(a\) and \(b\), then we have either \(g=u\) or \(g=v\). Recall that no two points on sub-edges with constant value of \(\phi_{uv}^{uv}\)  have the same farthest points on \(uv\), and that all points on the sub-edges of \(uv\) with increasing (respectively decreasing) value of \(\phi_{uv}^{uv}\) have \(v\) (respectively \(u\)) as their farthest point on \(uv\). We have \(g\notin ab\) in either case. 
	
	
	Let \(\bar g\) be the farthest point from \(g\) on \(ab\). Without loss of generality, \(c\) is located on \(a\bar g\). As we walk from \(a\) to \(c\) along \(uv\), the network distance to any point \(q\) in the network can change by at most \(w_{ac}\), i.e., for all \(q\) on \(G\) we have 
	\(
		\abs{d(c,q) - d(a,q)} \le w_{ac}
	\). The distance to \(g\) increases by exactly this amount as the network distance to \(g\) is increasing on the sub-edge \(a\bar g\), which includes the sub-edge \(ac\), i.e, \(d(c,g) - d(a,g) = w_{ac}\).
	
	We assume that \(g\) is eccentric to \(a\) and \(b\) but not to \(c\), i.e., \(d(c,g) < d(c,g')\), \(d(a,g') \le d(a,g) = \ecc(a)\), and \(d(b,g') \le d(b,g) = \ecc(g)\) for some point \(g'\) on \(G\). Therefore, the increase in the network distance to \(g'\) on \(ac\) must be strictly higher than the increase in network distance to \(g\) on \(ac\), i.e.,
	\(
		d(c,g') - d(a,g') > d(c,g) - d(a,g)
	\).
	 This contradicts the assessment that the network distance to \(g\) already achieves the maximum possible increase of \(w_{ac}\) along \(ac\). Therefore, \(g'\) cannot exist and the claim follows.
\end{proof}

\begin{theorem} \label{thm::characterization_of_infiniteness} 
Let \(G\) be a network. The farthest-point network Voronoi link diagram of \(G\) is finite if and only if there is no edge \(ab\) in the eccentricity diagram \(\ED(G)\) of \(G\) such that the eccentricity is constant on \(ab\).
\end{theorem}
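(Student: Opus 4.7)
The plan is to prove the biconditional by exploiting the structural description of the functions $\phi_{ab}^{st}$ given in \cref{thm::edge_to_edge_network_distance_function,thm::edge_to_self_network_distance_function}. Throughout, write $F(p) := \{g \in G : d(p,g) = \ecc(p)\}$ for the set of farthest points from $p$; by definition, $\mathcal{S}$ consists precisely of the $p$ at which $F$ jumps. The central dichotomy I would lean on is that each linear piece of $\phi_{ab}^{st}$ is either non-constant, in which case the associated farthest point on $st$ is one of the \emph{fixed} vertices $\bar u, \bar v, u, v$, or constant, in which case the associated farthest point on $st$ varies continuously and injectively with $p$ (this is case~(2) of \cref{thm::edge_to_edge_network_distance_function}, together with the analogous end-regions of \cref{thm::edge_to_self_network_distance_function}).

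For the contrapositive of the ``if'' direction, I would suppose $\ED(G)$ contains an edge $ab$ on which $\ecc \equiv c$. A non-constant linear piece can coincide with the constant line $y = c$ at only one point, so, since there are finitely many functions $\phi_{ab}^{st}$, there exists an open sub-interval $I \subseteq ab$ on which some $\phi_{ab}^{st}$ is constantly $c$. On $I$ we are in case~(2), so $p \mapsto \bar p$ is injective into $st$. For any two distinct $p_1, p_2 \in I$ the farthest points $\bar{p_1} \neq \bar{p_2}$ are distinct, and because $\bar{p_1}$ is the unique farthest point from $p_1$ on $st$, one has $d(p_2, \bar{p_1}) < c = \ecc(p_2)$. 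Hence $\bar{p_1} \in F(p_1) \setminus F(p_2)$ and $p_1 \in \partial \mathcal{V}_{\text{far-net}}(\bar{p_1})$. Since $p_1$ ranges over a continuum, $\mathcal{S}$ is uncountable.

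For the other direction, I would assume every edge $ab$ of $\ED(G)$ has non-constant linear eccentricity of slope $\sigma \neq 0$. Any $\phi_{ab}^{st}$ that coincides with the envelope on all of $ab$ must do so via one of its non-constant linear pieces, and hence contributes a single \emph{fixed} farthest point independent of $p$. Let $F^{\star}$ be the union of these fixed farthest points over such $st$. Then $F(p) = F^{\star}$ for every $p \in \text{int}(ab)$ except at finitely many exceptional points $p^{\star}$ where some non-contributing $\phi_{ab}^{st}$ crosses the envelope. Since each $\phi_{ab}^{st}$ has $\Oh(1)$ linear pieces, each meeting the single linear envelope at $\Oh(1)$ points, there are only $\Oh(m)$ such crossings per edge of $\ED(G)$. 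Each crossing adds at most one element to $F(p^{\star})$, contributing one boundary point in $\mathcal{S}$.

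The step I expect to require the most care is the second direction, specifically the bookkeeping that shows no $\mathcal{V}_{\text{far-net}}(g)$ can have an infinite discrete intersection with $\text{int}(ab)$. The key observation to verify is that for every $g$, membership $g \in F(p)$ is controlled entirely by whether some specific $\phi_{ab}^{st}$ achieves the envelope at $p$, and each such achievement is a piecewise-linear event with only $\Oh(1)$ transitions per pair $(ab, st)$. Granting this, $\mathcal{S} \cap \text{int}(ab)$ is finite for every edge of $\ED(G)$; combined with the finitely many vertices of $\ED(G)$, the set $\mathcal{S}$ is finite precisely when no edge of $\ED(G)$ has constant eccentricity.
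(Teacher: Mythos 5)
Your proof is correct, but it follows a genuinely different route from the paper's in both directions. For the direction ``constant edge $\Rightarrow$ infinite diagram'' you argue directly: a constant stretch of the envelope must be realized on some open sub-interval by a constant piece of some \(\phi_{uv}^{st}\), along which the farthest point moves injectively, so you exhibit an uncountable family of boundary points. The paper instead proves the contrapositive: a finite diagram forces finitely many eccentric points \(g_1,\dots,g_k\), and the eccentricity, being the upper envelope of the finitely many strictly increase-then-decrease functions \(d(\cdot,g_i)\) along each edge, can never be locally constant. For the converse the paper argues geometrically---on a sub-edge with strictly increasing eccentricity every shortest path to a farthest point exits through \(u\), so every eccentric point is eccentric to a vertex, and \cref{thm::no_multiple_cells_along_edges} then bounds the boundary points per cell per edge---whereas you count envelope-achievement events of the piecewise-linear functions \(\phi_{uv}^{st}\) directly. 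Your counting argument has the side benefit of an explicit \(\Oh(m)\)-per-edge bound on \(\mathcal S\) (which anticipates the size of the network farthest-point diagram), while the paper's version yields the stronger structural fact that in the finite case all eccentric points are eccentric to vertices. Two small imprecisions to tidy up: the inequality \(d(p_2,\bar p_1)<c\) follows from uniqueness of the farthest point from \(p_2\) on \(st\) (not from \(p_1\)), which holds by the unimodality underlying \cref{thm::point_to_edge_network_distance}; and your dichotomy ``coincides with the envelope on all of \(ab\)'' versus ``crosses at isolated points'' omits a non-constant piece of matching slope that coincides with the envelope on a proper sub-interval of \(ab\)---but such a piece still contributes a \emph{fixed} farthest point and only two transition points, so the bookkeeping in your final paragraph already absorbs this case.
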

\begin{proof} Let the farthest-point network Voronoi link diagram of \(G\) be finite.  First, we show that there are only finitely many points on \(G\) that are farthest from some point on \(G\). We then infer that no sub-edge with constant eccentricity exists in \(G\), and thus, no edge with constant eccentricity in \(\ED(G)\).

Each of the finitely many boundary points of the farthest-point network Voronoi link cell has at most one farthest point per each edge of \(G\). Since there is no change to the set of farthest points within the farthest-point network Voronoi link cells, there are only finitely many points on \(G\) that are farthest from some point on \(G\). Let \(g_1, g_2, \dots, g_k\) be these points. As we walk along an edge \(uv\) of \(G\), the network distance to \(g_i\) strictly increases from \(u\) to \(\bar g_i\), the farthest point from \(g_i\) on \(uv\), and then strictly decreases until \(v\), as illustrated in \cref{fig::sketch_characterization_infinity_ii_to_iii}. Since the eccentricity on \(uv\) is the upper envelope of the network distances to the points \(g_1, g_2, \dots, g_k\), the eccentricity cannot be locally constant on \(uv\) and there is no sub-edge of \(G\) with constant eccentricity.
\begin{figure}[ht]
\centering
\begin{subfigure}[b]{0.49\linewidth}
	\centering
	\includegraphics[scale=0.8]{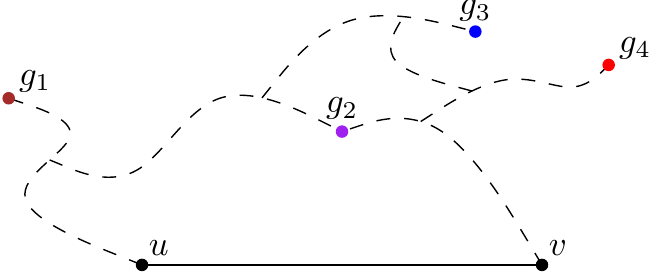}
	\caption{Network}\label{fig::sketch_characterization_infinity_ii_to_iii_network}
\end{subfigure}
\begin{subfigure}[b]{0.49\linewidth}\centering
	\includegraphics[scale=0.8]{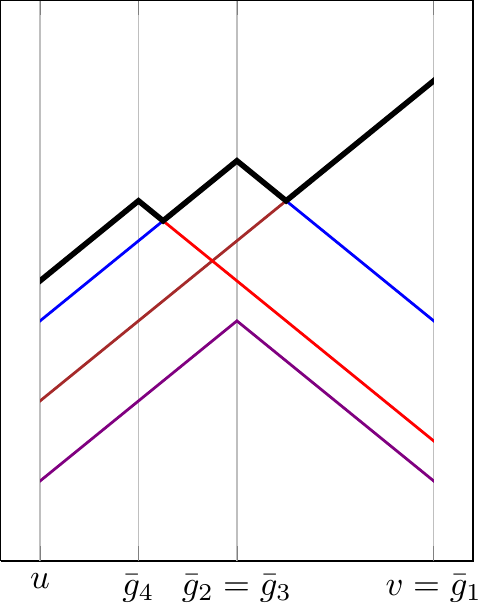}
	\caption{Distances and envelope.}\label{fig::sketch_characterization_infinity_ii_to_iii_distances}
\end{subfigure}
\caption{On the left \subref{fig::sketch_characterization_infinity_ii_to_iii_network}, a sketch of a network  with finitely many farthest points \(g_1, g_2, g_3\), and \(g_4\). On the right \subref{fig::sketch_characterization_infinity_ii_to_iii_distances}, a plot  of the network distances (coloured) to the points \(g_i\) on edge \(uv\). The upper envelope (black) of these finitely many non-constant functions cannot be locally constant on \(uv\).}\label{fig::sketch_characterization_infinity_ii_to_iii}
\end{figure}
\item Conversely, let there be no edge in the eccentricity diagram of \(G\) with constant eccentricity. Let \(ab\) be an edge of \(\ED(G)\) that is a sub-edge of edge \(uv\) of \(G\). The eccentricity strictly increases or strictly decreases from \(a\) to~\(b\). Assume, without loss of generality, that the former is true as shown in \cref{fig::must_leave_uv_via_u}. 
\begin{figure}
	\centering
	\begin{subfigure}{0.45\linewidth}
	\centering
	\includegraphics[scale=0.8]{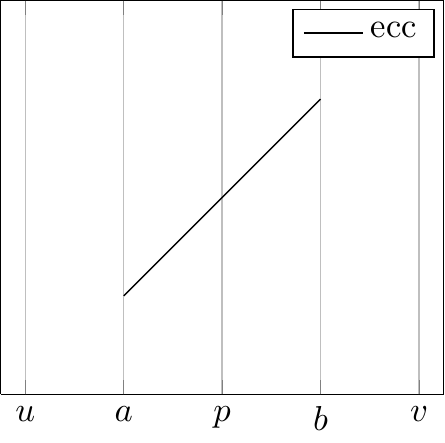}
	\end{subfigure}	
	\begin{subfigure}{0.45\linewidth}
		\centering 
		\includegraphics[scale=0.8]{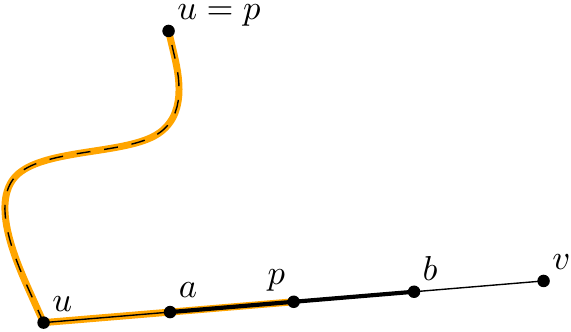}
	\end{subfigure}	
	\caption{A sub-edge \(ab\) of an edge \(uv\) of \(G\) with strictly increasing eccentricity. Any shortest path (orange) from a point \(p \in ab\) to a farthest point \(\bar p\) from \(p\) leaves \(uv\) via \(u\). Thus, \(\bar p\) is also a farthest point of \(u\).} \label{fig::must_leave_uv_via_u}
\end{figure}
Now, let \(p\) be a point on \(ab\), and let \(\bar p\) be a farthest point from \(p\) in \(G\). Every shortest path from \(p\) to \(\bar p\) leaves \(uv\) through \(u\). Thus, any farthest point from \(p\) is also a farthest point from \(u\). Therefore, there cannot be more non-empty farthest-point network Voronoi link cells than vertices of \(G\). Due to \cref{thm::no_multiple_cells_along_edges}, these have at most two boundary points in the interior of each edge. Hence, the number of boundary points is finite and so is the farthest-point network Voronoi link diagram.\end{proof}

\subsection{Network Farthest-Point Diagrams}
\Cref{thm::characterization_of_infiniteness} shows that there are two ways in which farthest points change as we move along an edge of a network. First, the farthest points can remain the same. This happens if we move along a sub-edge with ascending or descending eccentricity. Second, the farthest points can all move staying at the same distance. This happens if we move along a sub-edge with constant eccentricity. In both cases, the edges containing farthest points change at most finitely often. Knowing the edges containing farthest points suffices to reconstruct the farthest points. We obtain a finite representation of the farthest-point network Voronoi link diagram by subdividing the edges of the eccentricity diagram depending on which edges of the network contain farthest points. 
\begin{definition}[Network Farthest-Point Diagram] \label{def::farthest_point_diagram} 
 Let \(G\) be a network. Consider the subdivisions of the eccentricity diagram \(\ED(G)\) of \(G\) such that the points in the  interior of every edge \(uv\) of the subdivision have a common set of edges of \(G\) containing their farthest points, i.e., for all edges \(e\) of \(G\) and all \(p,q \in uv \setminus \set{u,v}\)
\[ 
	\exists \bar p \in e \colon d(p,\bar p) = \ecc(p) \iff \exists \bar q \in e \colon d(q,\bar q) = \ecc(q).
\] 
Among these subdivisions, we call the one with the least number of additional vertices the \emph{network farthest-point diagram} of \(G\) and denote it by \(\FD(G)\). 
\end{definition}
%
Network farthest-point diagrams are well-defined and unique: We subdivide an edge of the eccentricity diagram wherever the set of edges containing farthest points changes. This occurs at most \(\Oh(m)\) times, i.e., when one of the functions \(\phi_{uv}^{st}\) joins with or departs from the eccentricity function, as shown in \Cref{fig::example_brute_force_fps}.
\begin{figure}[ht]
	\centering
	\begin{subfigure}{0.59\linewidth}
		\begin{subfigure}{\linewidth}
		\centering
		\includegraphics[page=1,scale=0.8]{example_brute_force}
		\caption{A network.} \label{fig::example_brute_force::input::again}
		\end{subfigure}
		
		\begin{subfigure}{\linewidth}
		\centering
		\includegraphics[page=6,scale=0.8]{example_brute_force}
		\caption{The subdivision of \(uv\).} \label{fig::example_brute_force::network_farthest_point_diagram::uv}
	\end{subfigure}%
	\end{subfigure}
	\begin{subfigure}{0.39\linewidth}
	\centering
		\includegraphics[]{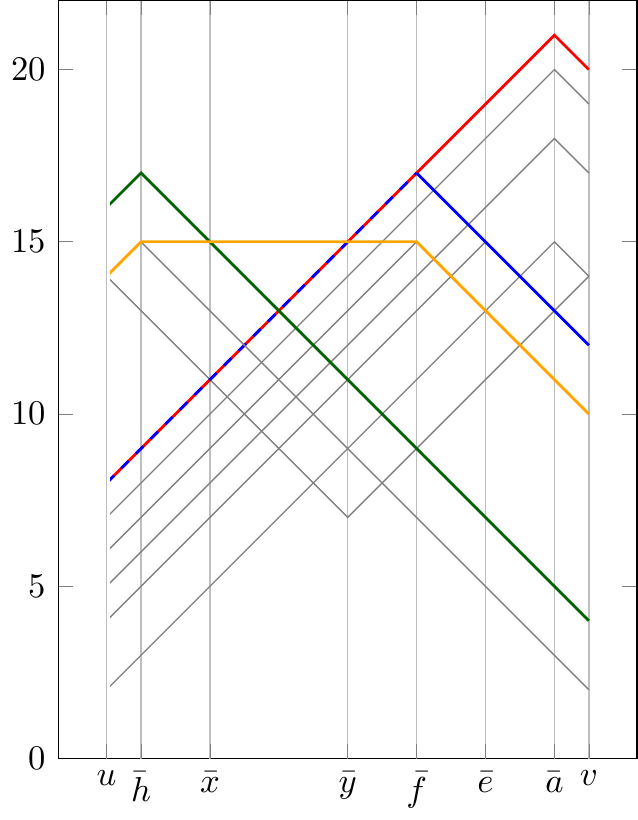}
		\caption{The functions \(\phi_{uv}^{st}\)  and their upper envelope.} \label{fig::plot::example_brute_force_fps::upper_envelope}
	\end{subfigure}
	
	\caption{An example for the network farthest-point diagram for the network in \subref{fig::example_brute_force::input::again}.  The upper envelope  of the functions \(\phi_{uv}^{st}\), which is shown in \subref{fig::plot::example_brute_force_fps::upper_envelope}, reveals which edges contain farthest points. On the left \subref{fig::example_brute_force::network_farthest_point_diagram::uv} shows the network farthest-point diagram on \(uv\). The farthest points from \(uv\) are located at the dot(ted segments) of matching colour: For instance, the sub-edge \(xy\) contains the farthest point for each point on the sub-edge \(\bar x \bar y\), and vertex \(i\) is the farthest point for all points on the sub-edge \(ux\). \label{fig::example_brute_force_fps}} 
\end{figure}


Storing the edges containing farthest points with every edge of the network farthest-point diagram yields a data structure for farthest-point-set queries of size \(\Oh(m^3)\), since there are \(\Oh(m)\) farthest points for each of the \(\Oh(m^2)\) edges of \(\FD(G)\). We answer a farthest-point-set query for a point \(p\) on  edge \(uv\) as follows. First, we identify the edge \(ab\) in \(\FD(G)\) containing \(p\), using binary search. A list of the edges containing farthest points from \(p\) is stored with \(ab\). For each edge \(st\) in this list, we compute the farthest point from \(p\) on \(st\). This takes constant time per edge using \cref{thm::point_to_edge_network_distance}. Thus, if \(p\) has \(k\) farthest points, we can report them in \(\Oh(k + \log n)\) time.  
%
%
%
\begin{theorem} \label{thm::naive_brute_force_farthet_points_data_structure}
    Given a network \(G\) with \(n\) vertices and \(m\) edges. There is a data structure with size and construction time \(\Oh(m^3)\) supporting farthest-point-set queries on \(G\) in \(\Oh(k+\log n)\) time, where \(k\) is the number of reported farthest points. 
\end{theorem}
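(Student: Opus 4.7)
The plan is to construct the data structure directly from the network farthest-point diagram $\FD(G)$, combined with a per-edge point-location index analogous to \cref{thm::eccentricity_diagram_data_structure} and the closed-form formula of \cref{thm::point_to_edge_network_distance} for the farthest point on a single edge.

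First I would build $\FD(G)$ in two stages. Stage one computes $\ED(G)$ using \cref{thm::eccentricity_diagram_construction} in $\Oh(m^2\log n)$ time, along with the functions $\phi_{uv}^{st}$ for all pairs of edges. Stage two refines $\ED(G)$ by further subdividing each edge $uv$ at the parameters where the set of functions $\phi_{uv}^{st}$ attaining the upper envelope changes; by \cref{thm::segments_on_upper_envelope} this adds only $\Oh(m)$ extra vertices per edge, so $\FD(G)$ still has $\Oh(m^2)$ sub-edges. With each sub-edge I store the list of edges of $G$ containing farthest points for its interior points. Each such list has at most $m$ entries, so the total storage summed over $\Oh(m^2)$ sub-edges is $\Oh(m^3)$, and the construction cost is dominated by writing out these lists.

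Next I would index, for each original edge $uv$ of $G$, the subdivision points of $\FD(G)$ lying on $uv$ by their fraction $\lambda\in[0,1]$, storing them in a balanced binary search tree exactly as in \cref{thm::eccentricity_diagram_data_structure}. Given a query $p$ on $uv$, represented by the pair $(uv,\lambda)$, a binary search locates the sub-edge $ab$ of $\FD(G)$ containing $p$ in $\Oh(\log n)$ time. Then I iterate over the stored list of edges of $G$ attached to $ab$: for each edge $st$ in the list, the farthest point $\bar p$ from $p$ on $st$ is returned in $\Oh(1)$ time by plugging $d(p,s)$, $d(p,t)$, and $w_{st}$ into \cref{thm::point_to_edge_network_distance}, where $d(p,s)$ and $d(p,t)$ are obtained in $\Oh(1)$ from the precomputed all-pairs vertex distance matrix together with $\lambda$ and $w_{uv}$. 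Reporting $k$ farthest points thus takes $\Oh(k)$ additional work, for a total query time of $\Oh(k+\log n)$.

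The main obstacle will be the bookkeeping during the refinement from $\ED(G)$ to $\FD(G)$. Beyond tracking which single function $\phi_{uv}^{st}$ is topmost on each piece of the upper envelope, I must also detect ties and the constant-eccentricity plateaus from \cref{thm::edge_to_edge_network_distance_function}, where an entire sub-edge of $st$ rather than a single point contributes to the farthest-point set. I would address this by augmenting Hershberger's upper envelope algorithm \cite{hershberger1989finding} with records, on every linear piece, of the complete set of functions attaining the maximum, plus a flag distinguishing constant plateaus from strictly monotone pieces; the former record is what defines the subdivision of $\FD(G)$, and the latter tells the query algorithm whether to emit a single vertex or a whole sub-edge of $st$ as part of the reported farthest-point set.
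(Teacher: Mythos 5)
Your proposal matches the paper's construction essentially verbatim: build \(\FD(G)\) by refining \(\ED(G)\) where the set of envelope-attaining functions \(\phi_{uv}^{st}\) changes, store with each of the \(\Oh(m^2)\) sub-edges its \(\Oh(m)\)-size list of edges containing farthest points, locate the query's sub-edge by binary search, and recover each farthest point in constant time via \cref{thm::point_to_edge_network_distance}. The extra bookkeeping you describe for ties and constant plateaus is a reasonable elaboration of details the paper leaves implicit, but it does not change the argument.
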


\section{A Data Structure for Eccentricity, \texorpdfstring{$R$}{R}-Far, and Farthest-Point-Set Queries} \label{sec::far_point_queries}

When analyzing the location of a service facility in a network of roads, we may want to determine the part of the network that is ill-served, i.e., farther away from the facility than some critical threshold \(R\). We introduce a data structure for \(R\)-far queries, which consists of a point \(p\) on a network \(G\), and a value \(R > 0\). We seek the part of \(G\) with network distance at least \(R\) to the query point \(p\). 


We develop a data structure for \(R\)-far queries from a fixed edge \(uv\) and then build this data structure for every edge. To answer \(R\)-far queries from edge \(uv\) we perform two tasks: The first task is to compute the part of an edge \(st\) consisting of the \(R\)-far points from query point \(p\), i.e., the points \(q\) on \(st\) with \(d(p,q) \ge R\). The second task is to identify those edges of the network that contain \(R\)-far points without inspecting all edges. 

Consider a point \(p(\lambda) \) on edge \(uv\) with \(p(\lambda) =(1- \lambda) u + \lambda v\) for some \(\lambda \in [0,1]\), and consider another edge \(st\). Let \(\bar p(\lambda)\) be the farthest point from \(p(\lambda)\) on edge \(st\).  The set of \(R\)-far points on \(st\) is the sub-edge of points \(q\) with \(w_{q\bar p(\lambda)} \le \phi_{uv}^{st}(\lambda) - R\), since the distance to \(p\) decreases from \(\bar p\) to \(v\) and from \(\bar p\) to \(u\). 


We rephrase the task to find all edges containing \(R\)-far points as ray shooting problem. An edge \(st\) contains \(R\)-far points from \(p(\lambda)\) if and only if \(\bar p(\lambda)\) is \(R\)-far, i.e., if and only if \(d(p(\lambda), \bar p(\lambda)) = \phi_{uv}^{st}(\lambda) \ge R\). Thus, we seek the functions \(\phi_{uv}^{st}\), for all edges \(st\), whose height at \(\lambda\) is at least \(R\), or, in other words, who are intersected by a vertical ray \(\vec r\) that shoots upwards from the point \((\lambda,R)\), as shown in \Cref{fig::example_brute_force_fps::r_far}. We solve this ray shooting problem separately for the line segments of the functions \(\phi_{uv}^{st}\) with a common slope (\(w_{uv}\), zero, or \(-w_{uv}\)). We use segment trees \cite{bentley1977solutions,deberg2008more} and exploit that line segments with a common slope have a vertical order.
\begin{figure}[ht]
	\centering
	\begin{subfigure}{0.59\linewidth}
		\begin{subfigure}{\linewidth}
		\centering
		\includegraphics[page=1,scale=0.8]{example_brute_force}
		\caption{A network.} \label{fig::example_brute_force::input::again::andagain}
		\end{subfigure}
		
		\begin{subfigure}{\linewidth}
		\centering
		\includegraphics[page=7,scale=0.8]{example_brute_force}
		\caption{The \(R\)-far portion of the network.} \label{fig::example_brute_force::r_far::uv}
	\end{subfigure}%
	\end{subfigure}
	\begin{subfigure}{0.39\linewidth}
	\centering
		\includegraphics[]{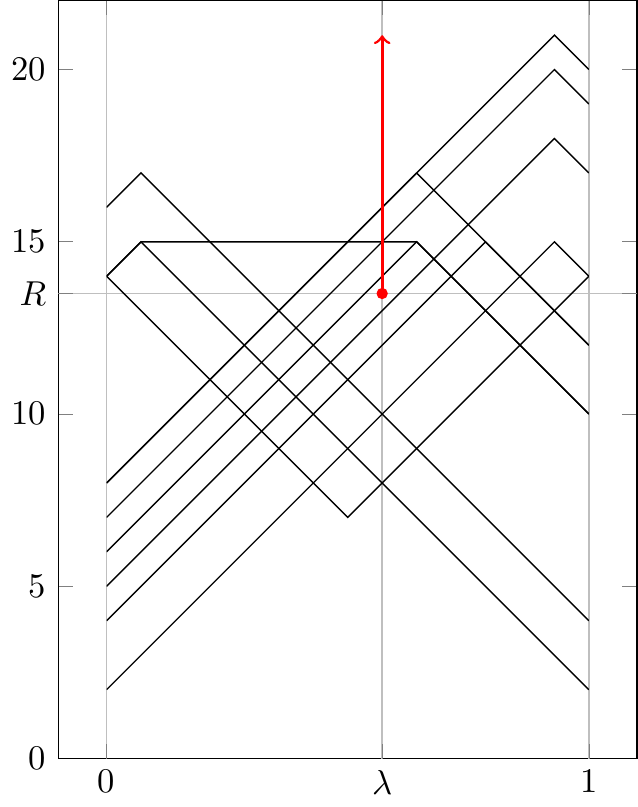}
		\caption{The functions \(\phi_{uv}^{st}\), for all edges \(st\), and an upwards shooting ray with apex \((\lambda,R)\).} \label{fig::plot::example_brute_force_fps::r_far_ray}
	\end{subfigure}
	
	\caption{An example for the correspondence between an \(R\)-far query in the network and a vertical ray stabbing query. On the upper left \subref{fig::example_brute_force::input::again::andagain}, we see a network \(G\). On the lower left \subref{fig::example_brute_force::r_far::uv}, we see the answer (red) to an \(R\)-far query at \(p(\lambda) = (1-\lambda) u + \lambda v\) on edge \(uv\) of \(G\) with \(\lambda = \frac{8}{14}\) and \(R = 13.5\). On the right \subref{fig::plot::example_brute_force_fps::r_far_ray}, we see the corresponding ray stabbing query. The functions \(\phi_{uv}^{st}\) that are stabbed by the ray (red) contain \(R\)-far points from the query point \(p(\lambda)\). \label{fig::example_brute_force_fps::r_far}} 
\end{figure}

\begin{lemma}[\cite{bentley1977solutions,deberg2008more}]
    Let \(L\) be a set of \(N\) vertically ordered line segments in \(\R^2\). There is a data structure that reports all \(k\) line segments in \(L\) that are intersected by a vertical ray shooting upwards from a query point \((x,y)\) in \(\Oh(k + \log N)\) time. This data structure has construction time and size \(\Oh(N \log N)\).
\end{lemma}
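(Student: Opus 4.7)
The plan is to use a standard segment tree on the \(x\)-coordinates of the segment endpoints, augmented with a secondary vertical search structure at each node and fractional cascading along root-to-leaf paths.

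\textbf{Segment-tree skeleton.} First I would sort the \(2N\) endpoint \(x\)-coordinates in \(\Oh(N \log N)\) time and construct a segment tree over the resulting elementary \(x\)-intervals, which has \(\Oh(N)\) nodes and depth \(\Oh(\log N)\). Each segment \(s \in L\) is inserted into its \(\Oh(\log N)\) canonical nodes---the nodes whose \(x\)-slab is contained in the \(x\)-projection of \(s\) but whose parent's slab is not---giving a total storage of \(\Oh(N \log N)\). By definition of the canonical nodes, the segments stored at a node \(v\) each cross every vertical line inside \(v\)'s slab in full.

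\textbf{Secondary structure.} Because \(L\) is \emph{vertically ordered}, any two segments that simultaneously span a common vertical strip admit a consistent height ranking, and together these induce a single total order \(\prec\) on \(L\). At each node \(v\) I store the list of its canonical segments sorted by \(\prec\); this list is a subsequence of the global order and determines heights monotonically at every \(x\) in \(v\)'s slab. Building these lists during the segment distribution phase costs \(\Oh(N \log N)\) time and space. I would then install fractional-cascading bridges between each parent list and each of its two child lists, which is possible precisely because the child list is a subsequence of the parent's; the bridges are built in linear time in the total list length, i.e., \(\Oh(N \log N)\).

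\textbf{Query and bottleneck.} Given a query \((x,y)\), I walk from the root to the leaf whose slab contains \(x\); the segments crossing the vertical line through \(x\) are exactly the canonical segments at the \(\Oh(\log N)\) nodes on this path. A single binary search at the root locates the position of height \(y\) in the root's sorted list in \(\Oh(\log N)\) time, after which the fractional-cascading pointers let me update the position to each child's list in \(\Oh(1)\) time per step. At each visited node I report the suffix of the sorted list whose segments have height at least \(y\) at \(x\), charging \(\Oh(1)\) to each output segment. Summing yields the \(\Oh(k + \log N)\) query bound. The only genuinely delicate step is verifying that \(\prec\) is compatible with the canonical-list order at every node so that fractional cascading is applicable; once this subsequence property is established from the vertically-ordered hypothesis, the rest of the construction, analysis, and query procedure is textbook.
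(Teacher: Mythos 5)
The paper does not actually prove this lemma; it is imported from the cited references on segment trees, so there is no internal proof to compare your attempt against. On its own merits, your skeleton is the standard and correct one: a segment tree on the \(2N\) endpoint \(x\)-coordinates, canonical subsets of total size \(\Oh(N\log N)\), each canonical list sorted by the global vertical order \(\prec\) (which is well defined exactly because the input is vertically ordered), and the observation that the answer at each node on the root-to-leaf search path is a contiguous block of that node's sorted list.

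The gap is the step you yourself flag as the crux: in a segment tree the canonical list of a child is \emph{not} a subsequence of the canonical list of its parent. A segment is assigned to a node \(v\) precisely when \(v\)'s slab is contained in the segment's \(x\)-projection but the parent's slab is not; consequently a segment stored at \(v\) is stored at neither \(v\)'s parent nor \(v\)'s children, and the canonical subsets along any root-to-leaf path are pairwise disjoint. (The containment you are invoking is the one enjoyed by the associated structures of a \emph{range tree}, where a node's canonical subset is the union of its children's.) So the fractional-cascading bridges cannot be justified by a subsequence property; you would need the general Chazelle--Guibas catalog-graph machinery with augmented catalogs, which introduces its own complication here, namely that samples propagated upward from one child's catalog need not span the sibling's half of the slab, so their comparison with the query point is not even defined on the other branch. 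Fortunately none of this is needed. Since all segments stored at a visited node span the vertical line through \(x\) and are consistently ordered by height there, the segments hit by the upward ray form a prefix of the canonical list sorted by decreasing height; you can simply scan each list from its top element and stop at the first segment lying below \((x,y)\). This costs \(\Oh(1+k_v)\) at node \(v\), hence \(\Oh(\log N + k)\) summed over the \(\Oh(\log N)\) nodes of the search path, with no binary search and no cascading at all. With that replacement, your construction, space, and preprocessing bounds all stand and the lemma is proved.
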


Our data structure for \(R\)-far queries on an edge \(uv\) consists of three segment trees \(T_\text{asc}\), \(T_\text{con}\), and \(T_\text{des}\) containing the ascending, constant, and descending line segments of the functions \(\phi_{uv}^{st}\) for all edges \(st\) in decreasing vertical order. These segment trees allow us to compute the \(k\) edges of the network containing \(R\)-far points from a query point \(p(\lambda) = (1- \lambda) u + \lambda v\) in \(\Oh(k + \log n)\) time with a ray shooting query for the ray shooting upwards from \((\lambda,R)\). 

\begin{lemma} \label{thm::data_structure::R_far_queries}
    Let \(uv\) be an edge in a network with \(n\) vertices and \(m\) edges. There is a data structure that supports \(R\)-far queries from points on \(uv\) in \(\Oh(k + \log n)\) time, where \(k\) is the number of edges containing \(R\)-far points from the query point. This data structure has size and construction time \(\Oh(m \log n)\). 
\end{lemma}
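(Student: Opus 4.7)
The plan is to assemble the data structure directly from the three segment trees $T_\text{asc}$, $T_\text{con}$, $T_\text{des}$ described in the paragraph preceding the lemma, so the proof amounts to verifying that (a) the total number of stored line segments is $\Oh(m)$, (b) segments within each tree admit a vertical order so that the cited segment tree of \cite{bentley1977solutions,deberg2008more} applies, and (c) the ray stabbing query on $(\lambda, R)$ reports exactly the edges containing an $R$-far point from $p(\lambda)=(1-\lambda)u+\lambda v$.

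For (a) and (b), I would invoke \cref{thm::edge_to_edge_network_distance_function} and \cref{thm::edge_to_self_network_distance_function}: each function $\phi_{uv}^{st}$ with $st \ne uv$ decomposes into at most three linear pieces of slope $w_{uv}$, $0$, or $-w_{uv}$, and $\phi_{uv}^{uv}$ contributes four pieces whose slopes lie in the same set. Summing over all $m$ edges $st$, each of the three trees receives $N \in \Oh(m)$ pieces. Two distinct pieces stored in the same tree share a common slope, so if their $x$-intervals overlap they are either disjoint (well-defined vertical order) or coincide; by infinitesimally tie-breaking coincident segments, we obtain a vertical order on each tree. The segment tree of \cite{bentley1977solutions,deberg2008more} then gives construction time and space $\Oh(N \log N) = \Oh(m\log m) = \Oh(m\log n)$ since $m \le \binom{n}{2}$.

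For (c), observe that a segment $\sigma$ of $\phi_{uv}^{st}$ is stabbed by the vertical ray from $(\lambda, R)$ going upward if and only if $\lambda$ lies in the $x$-range of $\sigma$ and $\sigma(\lambda) \ge R$. Since $\phi_{uv}^{st}$ is a (single-valued) piecewise linear function, at most one of its pieces is stabbed, and this happens exactly when $\phi_{uv}^{st}(\lambda) \ge R$. By the definition of $\phi_{uv}^{st}$ in \eqref{eq::definition::phi_uv_st}, the latter is equivalent to the existence of a point $q \in st$ with $d(p(\lambda), q) \ge R$, i.e., to $st$ containing an $R$-far point from $p(\lambda)$. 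Therefore, issuing the ray stabbing query in all three trees and taking the union of the reported segments' source edges yields precisely the set of edges containing $R$-far points, with each such edge reported exactly once.

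For the running time, each tree answers its ray stabbing query in $\Oh(k_i + \log N) = \Oh(k_i + \log n)$ time where $k_i$ is its contribution to the output, and $k_1+k_2+k_3=k$. Summing over the three trees yields $\Oh(k + \log n)$ total query time. The only subtlety worth flagging is the handling of ties, both between overlapping segments with equal slope (broken consistently with \cref{thm::segments_on_upper_envelope}) and at the boundary value $\phi_{uv}^{st}(\lambda) = R$ (included in the output, as $R$-far means $\ge R$); neither affects the asymptotic bounds. This is the main but still minor obstacle, as the essential structural work has already been done in \cref{thm::edge_to_edge_network_distance_function} and \cref{thm::edge_to_self_network_distance_function}.
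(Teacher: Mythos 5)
Your proposal is correct and follows exactly the construction the paper itself uses: the same three segment trees $T_\text{asc}$, $T_\text{con}$, $T_\text{des}$ holding the constant-slope pieces of the functions $\phi_{uv}^{st}$ in vertical order, the same reduction of ``edge $st$ contains an $R$-far point'' to ``$\phi_{uv}^{st}(\lambda) \ge R$'', and the same vertical ray-stabbing query from $(\lambda,R)$ with the cited $\Oh(N\log N)$-size, $\Oh(k+\log N)$-query segment tree. The paper leaves these verifications implicit in the surrounding discussion; your write-up merely makes the segment count, the existence of the vertical order among parallel pieces, and the tie-breaking explicit, which is consistent with the paper's argument.
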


We can use this data structure for eccentricity queries and farthest-point set queries on edge \(uv\), as well. For eccentricity queries, we keep track of the maximum heights at \(\lambda\) of the line segments stored at the heads of the lists encountered as we follow the search path for \(\lambda\). The maximum height is the eccentricity of \(p(\lambda)\), since it is the greatest value among \(\phi_{uv}^{st}(\lambda)\) for all edges \(st\) of \(G\). This query takes \(\Oh(\log n)\) time, since we only have to inspect the heads of the \(\Oh(\log n)\) lists along the search path. For farthest-point set queries, we first determine the eccentricity of \(p(\lambda)\) and then perform a \(R\)-far query with \(R = \ecc(p(\lambda))\). We obtain our final data structure for all three types of queries by building the data structure from \cref{thm::data_structure::R_far_queries} for each edge of the network.

\begin{theorem}
Given a network \(G\) with \(n\) vertices and \(m\) edges.  There is a data structure with  size and construction time \(\Oh(m^2 \log n)\) supporting eccentricity queries, \(R\)-far queries, and farthest-point-set queries from any query point \(p\) on \(G\). Let \(k\) denote the number of edges containing \(R\)-far points from \(p\), and let \(k'\) denote the number of farthest points from \(p\) in \(G\). Using the data structure, an eccentricity query takes \(\Oh(\log n)\) time, an \(R\)-far query takes \(\Oh(k + \log n)\) time,  and a farthest-point-set query takes \(\Oh(k' + \log n)\) time. 
\end{theorem}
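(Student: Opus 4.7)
The plan is to build, for each edge \(uv\) of \(G\), an instance of the per-edge data structure from \cref{thm::data_structure::R_far_queries}, and to route every query to the instance belonging to the edge that contains the query point \(p\). Since the per-edge structure has size and construction time \(\Oh(m\log n)\), summing over the \(m\) edges yields the claimed \(\Oh(m^2\log n)\) bound on the overall size and preprocessing time. As preprocessing also requires the network distances between vertices (needed to instantiate the functions \(\phi_{uv}^{st}\) via \cref{thm::edge_to_edge_network_distance_function,thm::edge_to_self_network_distance_function}), I would first run \citeauthor{johnson1977efficient}'s all-pairs shortest-path algorithm in \(\Oh(m^2\log n)\) time, as already done in \cref{thm::eccentricity_diagram_construction}. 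This table is also kept around to support \(\Oh(1)\)-time evaluations of \cref{thm::point_to_edge_network_distance}.

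For an \(R\)-far query at point \(p=(1-\lambda)u+\lambda v\), I would forward the query to the three segment trees \(T_{\text{asc}}, T_{\text{con}}, T_{\text{des}}\) for \(uv\) and answer the vertical ray-shooting query with apex \((\lambda,R)\) in \(\Oh(k+\log n)\) time, where \(k\) is the number of edges of \(G\) that contain an \(R\)-far point. This is exactly the guarantee of \cref{thm::data_structure::R_far_queries}.

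The eccentricity query is handled in \(\Oh(\log n)\) time by the following observation, which is the one subtle point of the argument. As we descend along the search path for \(\lambda\) in any one of \(T_{\text{asc}}, T_{\text{con}}, T_{\text{des}}\), we encounter \(\Oh(\log n)\) canonical lists, each of which stores line segments of a common slope that are vertically ordered. Because the order is vertical, the highest segment at abscissa \(\lambda\) in each list is the one at its head, so taking the maximum height at \(\lambda\) of the heads of the \(\Oh(\log n)\) lists visited (and doing this in all three trees) gives \(\max_{st\in E}\phi_{uv}^{st}(\lambda)=\ecc(p)\) in \(\Oh(\log n)\) time without ever enumerating the individual segments.

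Finally, a farthest-point-set query is reduced to the previous two. I would first compute \(R^\star:=\ecc(p)\) in \(\Oh(\log n)\) time as above, then issue an \(R^\star\)-far query to obtain the \(k'\) edges \(st\) of \(G\) that contain a farthest point from \(p\); for each reported edge I would invoke \cref{thm::point_to_edge_network_distance} to recover the actual farthest point on \(st\) in \(\Oh(1)\) time, using the precomputed all-pairs distances. The total query time is \(\Oh(k'+\log n)\). The only mildly delicate step in the whole argument is the eccentricity observation above; everything else is a direct product of \cref{thm::data_structure::R_far_queries} repeated over all \(m\) edges.
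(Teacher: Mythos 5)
Your proposal is correct and follows essentially the same route as the paper: build the per-edge structure of \cref{thm::data_structure::R_far_queries} for every edge, answer eccentricity queries by taking the maximum height at \(\lambda\) over the heads of the \(\Oh(\log n)\) vertically ordered canonical lists along the search path, and reduce farthest-point-set queries to an eccentricity query followed by an \(R\)-far query with \(R=\ecc(p)\). The only addition you make explicit---keeping the all-pairs distance table to recover the actual farthest point on each reported edge via \cref{thm::point_to_edge_network_distance}---is likewise how the paper handles the reporting step.
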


\section{The Minimum Eccentricity Feed-Link Problem} \label{sec::feed-links}

In this section, we solve the feed-link problem.

\mineccfeedlinkprobdef*

We treat two versions of the minimum eccentricity feed-link problem. In the static version, we have a fixed network \(G\) and a fixed point \(p\) that we wish to connect to \(G\). Here, we seek the optimal feed-link for \(G\) and \(p\). In the query version of the problem, we have a fixed network \(G\) and a query consists of a point \(p\) that we wish to connect to \(G\). Here, we seek a data structure that can answer queries of this type efficiently. 

\subsection{The Static Version}

The dependence upon the eccentricity \(\ecc_G(q)\) of the anchor point \(q\) in \eqref{eq::target_function_minimum_eccentricity_feed_link_problem} shows the  connection between farthest-point information on networks and the minimal eccentricity feed-link problem. This dependence determines necessary conditions on the optimal feed-link. 

\begin{lemma} \label{thm::optimal_anchor_is_local_minimum_of_eccentricity}
    Let \(G\) be a geometric network. Let the point \(q\) on \(G\) be the anchor of an optimal feed-link for the point \(p \in \R^2\). Then the eccentricity on \(G\) has a local minimum at \(q\). Furthermore, if \(q\) is located on an edge \(uv\) of \(\ED(G)\) with constant eccentricity, then \(q\) is the closest point to \(p\) on \(uv\) with respect to Euclidean distance. 
\end{lemma}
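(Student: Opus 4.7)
The plan is to reduce both assertions to a local analysis of the feed-link cost $f(q) = \abs{pq} + \ecc_G(q)$ around the candidate anchor. The structural ingredient I would exploit is that, in the geometric setting ($w_{uv} = \abs{uv}$), the eccentricity along any edge of $G$ is piecewise linear in arc length with slopes drawn from $\{-1, 0, +1\}$, which is immediate from \cref{thm::edge_to_edge_network_distance_function,thm::edge_to_self_network_distance_function} after reparametrizing from $\lambda$ to arc length $s = \lambda w_{uv}$. At the same time, $\abs{pq}$ is $1$-Lipschitz in $q$ along any edge, with directional derivative attaining $+1$ only when the edge is collinear with the ray from $p$ through $q$ and oriented away from $p$.

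For the first part I would argue by contradiction. Suppose $q$ is optimal but $\ecc_G$ is not at a local minimum at $q$; then along some incident edge $e$ the eccentricity strictly decreases away from $q$, at rate exactly $-1$ by the slope analysis above. Parametrize the motion as $q(t)$ for small $t \ge 0$; then
\[
    f(q(t)) = \abs{pq(t)} + \ecc_G(q) - t \le \abs{pq} + t + \ecc_G(q) - t = f(q),
\]
with strict inequality whenever the triangle bound $\abs{pq(t)} \le \abs{pq} + t$ is strict, i.e., whenever $e$ is not collinear with the ray from $p$ through $q$ oriented away from $p$. The non-collinear subcase already delivers $f(q(t)) < f(q)$, contradicting optimality. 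The main obstacle is the collinear subcase, where $f$ stays constant along the perturbation: I would keep sliding along $e$ until the slope of $\ecc_G$ changes at a bending point of $\ecc_G$ or at an endpoint of $e$; beyond this terminal point the slope of $\ecc_G$ lies in $\{0,+1\}$ while $\abs{pq}$ still grows at rate $+1$, so further motion strictly worsens $f$ and the terminal point is itself optimal. Iterating this at any vertex of $G$ where another collinear slope-$(-1)$ edge might still lead away, the chain eventually terminates at a local minimum of $\ecc_G$ sharing the same optimal $f$-value, giving the claim.

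For the second part, suppose $q$ lies on an edge $uv$ of $\ED(G)$ on which $\ecc_G$ takes the constant value $c$. Then $f(q') = \abs{pq'} + c$ for every $q' \in uv$, so minimizing $f$ over $uv$ is equivalent to minimizing the Euclidean distance $\abs{pq'}$; this admits a unique minimizer on the segment $uv$, namely the foot of the perpendicular from $p$ onto $uv$ when it lies on the segment and the nearer endpoint otherwise. Optimality of $q$ forces $q$ to coincide with this closest point, since any strictly closer point of $uv$ would yield a strictly smaller $f$-value. Apart from the collinear degeneracy in the first part, every step reduces to a direct Lipschitz comparison between $\abs{pq}$ and $\ecc_G(q)$.
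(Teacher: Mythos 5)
Your proof is correct and follows essentially the same route as the paper's: the paper restricts attention to the edge of \(\ED(G)\) containing \(q\), uses that on a non-constant sub-edge the eccentricity changes at unit rate in Euclidean arc length (so \(\ecc_G(q) = \ecc_G(u) + \abs{uq}\)), and applies the triangle inequality \(\abs{pq} \ge \abs{pu} - \abs{uq}\) to conclude that the endpoint \(u\) is at least as good an anchor, with the constant-eccentricity case handled exactly as in your second part. The only real difference is that you explicitly confront the degenerate collinear case where \(\abs{pq} = \abs{pu} - \abs{uq}\), whereas the paper silently concludes \(q = u\) from a non-strict inequality; your treatment is, if anything, more careful, though note that in that degenerate case both arguments really only establish that \emph{some} optimal anchor lies at a local minimum of the eccentricity, not that the given \(q\) does.
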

\begin{proof} Let \(uv\) be the edge of \(\ED(G)\) containing the optimal anchor \(q\) of a feed-link from \(p\) to \(G\).
\item Case (1): Let the eccentricity be increasing on \(uv\) with \(\ecc_G(u) < \ecc_G(v)\). Then we have%
	\begin{align*}
		\ecc_{G+pq}(p) &= \abs{pq} + \ecc_G(q) \\
		&\ge \abs{pu} - \abs{uq} + \ecc_G(q) &\\
		&= \abs{pu} - \abs{uq} + \ecc_G(u) + \abs{uq} &{\text{ as the eccentricity is increasing on } uv,}  \\
		&= \abs{pu} + \ecc_G(u) &\\
		&= \ecc_{G+pu}(p) \enspace .
	\end{align*}
	Therefore, the optimal anchor among all points on \(uv\) is \(u\), i.e., \(q = u\). 
	
\item Case (2): Let the eccentricity be constant on \(uv\), and let \(q'\) be the closest point from \(p\) on \(uv\).%
\begin{align*}
	\ecc_{G+pq}(p) &= \abs{pq} + \ecc_G(q) &\\
	&= \abs{pq} + \ecc_G(q') &{\text{as the eccentricity is constant on } uv,} \\
	&\ge \abs{pq'} + \ecc_G(q') &\text{ by choice of } q', \\
	& = \ecc_{G+pq'}(p) \enspace .
\end{align*}
Therefore, the optimal anchor among all points on \(uv\) is \(q'\), i.e., \(q = q'\).
\end{proof}

Using \cref{thm::optimal_anchor_is_local_minimum_of_eccentricity}, we can solve the minimum eccentricity feed-link problem as follows. First, we compute the eccentricity diagram \(\ED(G)\) of the network \(G\). Second, we read the sub-edges of \(G\) with locally minimal eccentricity from \(\ED(G)\). These sub-edges are the edges of \(\ED(G)\) with constant eccentricity and the vertices of \(\ED(G)\)---which we treat as sub-edges reduced to single points---whose neighbours in \(\ED(G)\) have greater eccentricity. Let \(S\) be the set of sub-edges with locally minimal eccentricity. Third, we determine a candidate \(q\) for the optimal anchor on each sub-edge with locally minimal eccentricity. Among these candidates we select the one with the lowest value of \(\abs{pq} + \ecc_G(q)\). An example is show in \cref{fig::sunlet::minimum_eccentricity_feed_link::example}. The first step takes \(\Oh(m^2 \log n)\) time as discussed in \cref{sec::eccentricity_diagrams}. The second step can be done alongside with the computation of the eccentricity diagram. The third step takes \(\Oh(\abs{S})\) time, where \( \abs{S} \in  \Oh(m^2)\). 
\begin{figure}[!ht]
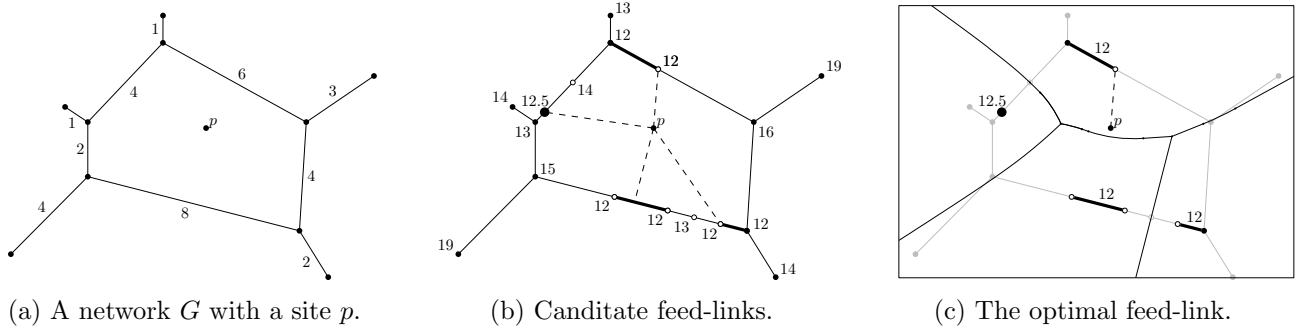

	\centering
	\begin{subfigure}{0.32\linewidth}
		\centering
		\includegraphics[scale=0.6, page=6]{sunlet}
		\subcaption{A network \(G\) with a site \(p\).\label{fig::sunlet::minimum_eccentricity_feed_link::instance}}
	\end{subfigure}%
	\begin{subfigure}{0.32\linewidth}
		\centering
		\includegraphics[scale=0.6, page=7]{sunlet}
		\subcaption{Canditate feed-links.\label{fig::sunlet::minimum_eccentricity_feed_link::candidates}}
	\end{subfigure}
%
	\begin{subfigure}{0.32\linewidth}
		\centering
		\includegraphics[scale=0.6, page=9]{sunlet}
		\subcaption{The optimal feed-link. \label{fig::sunlet::minimum_eccentricity_feed_link::apollonius_diagram}}
	\end{subfigure}%
	\caption{From left \subref{fig::sunlet::minimum_eccentricity_feed_link::instance} to right \subref{fig::sunlet::minimum_eccentricity_feed_link::apollonius_diagram}: An instance of the minimum eccentricity feed-link problem. The candidates for optimal feed-links (dashed) and local minima (thick, black) of the eccentricity function.
The subdivision of the plane into regions with a common sub-edge containing an optimal feed-link.\label{fig::sunlet::minimum_eccentricity_feed_link::example}}
\end{figure} 

\begin{theorem} 
	Given a geometric network \(G\) with \(n\) vertices and \(m\) edges, and a point \(p \in \R^2\). Assume we are given the \(\ell\) sub-edges of \(G\) with locally minimal eccentricity. Then we can solve the minimum eccentricity feed-link problem with respect to \(G\) and \(p\) in \(\Oh(\ell)\) time.
\end{theorem}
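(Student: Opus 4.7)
The plan is to turn Lemma~\ref{thm::optimal_anchor_is_local_minimum_of_eccentricity} into a constant-work-per-candidate algorithm. That lemma tells us the optimal anchor $q$ must lie on one of the $\ell$ sub-edges of $G$ with locally minimal eccentricity, and moreover it pins down \emph{where} on each such sub-edge $q$ must sit: on a sub-edge reduced to a single vertex of $\ED(G)$, the candidate is forced; on a sub-edge with constant eccentricity, the candidate is the Euclidean-closest point of that sub-edge to $p$. So the algorithm is simply to enumerate the $\ell$ sub-edges, compute the unique candidate anchor on each, evaluate $|pq|+\ecc_G(q)$, and keep the minimum.

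The key observation that makes each step constant-time is that along a sub-edge with locally constant eccentricity, the value $\ecc_G(q)$ is the same at every point and is already attached to that sub-edge as part of the input. Thus evaluating $|pq|+\ecc_G(q)$ for the winning candidate on a constant-eccentricity sub-edge reduces to (i) computing the orthogonal projection of $p$ onto the underlying line segment, clamped to the segment's endpoints, and (ii) a single Euclidean distance computation; both are $O(1)$. For the degenerate (point) sub-edges, the candidate is the vertex itself, its eccentricity is available, and the evaluation is again $O(1)$. Summing over the $\ell$ sub-edges yields the claimed $\Oh(\ell)$ bound.

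Correctness then follows from Lemma~\ref{thm::optimal_anchor_is_local_minimum_of_eccentricity}: any optimal anchor either lies at a point where the eccentricity is locally minimal (a vertex of $\ED(G)$ whose $\ED$-neighbours have strictly greater eccentricity, handled as a point sub-edge) or lies on a constant-eccentricity edge of $\ED(G)$ at the Euclidean-closest point to $p$ on that edge. Since our enumeration considers every sub-edge in $S$ and evaluates the correct candidate on each, the minimum over all these candidates is the true minimum of the target function \eqref{eq::target_function_minimum_eccentricity_feed_link_problem}.

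The only step requiring any care is the constant-eccentricity case: we must verify that clamping the foot of the perpendicular from $p$ to the line through the sub-edge to the sub-edge's endpoints indeed yields the Euclidean-closest point on the sub-edge, which is a standard elementary calculation and thus not the main obstacle. There is no algorithmic obstacle here beyond bookkeeping, since the preceding sections already supply $\ED(G)$ with per-edge eccentricity values, and the set $S$ of locally minimal sub-edges is assumed as input; the theorem therefore rests almost entirely on the structural characterization of Lemma~\ref{thm::optimal_anchor_is_local_minimum_of_eccentricity}.
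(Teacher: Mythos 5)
Your proposal is correct and follows essentially the same route as the paper: the paper likewise uses \cref{thm::optimal_anchor_is_local_minimum_of_eccentricity} to restrict the optimal anchor to the \(\ell\) sub-edges with locally minimal eccentricity, computes in constant time the unique candidate on each (the vertex itself for a degenerate sub-edge, the Euclidean-closest point for a constant-eccentricity sub-edge), and takes the minimum of \(\abs{pq}+\ecc_G(q)\) over these candidates in \(\Oh(\ell)\) total time. No gaps.
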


\begin{corollary} \label{thm::minimum_eccentricity_feed_link_problem_static_solution}
    Given a geometric network \(G\) with \(n\) vertices and \(m\) edges, and a point \(p \in \R^2\). We can solve the minimum eccentricity feed-link problem with respect to \(G\) and \(p\) in \(\Oh(m^2 \log n)\) time.
\end{corollary}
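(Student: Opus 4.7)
The plan is to derive this corollary as a straightforward combination of the preceding theorem (solving the problem in $\Oh(\ell)$ time given the sub-edges of locally minimal eccentricity) with the construction time of the eccentricity diagram from \cref{thm::eccentricity_diagram_construction}. The only missing ingredient is an argument that, once $\ED(G)$ is known, the $\ell$ sub-edges of \(G\) with locally minimal eccentricity can be extracted without changing the asymptotic running time.

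First I would invoke \cref{thm::eccentricity_diagram_construction} to build \(\ED(G)\) in \(\Oh(m^2 \log n)\) time, obtaining a diagram of size \(\Oh(m^2)\). Because the eccentricity is linear on each edge of \(\ED(G)\), its slope and direction on every edge can be read directly from the eccentricity values stored at the endpoints. This reduces the identification of the sub-edges with locally minimal eccentricity, as described in the paragraph preceding the previous theorem, to a purely local check at each edge and vertex of \(\ED(G)\): an edge \(ab\) qualifies exactly when \(\ecc_G(a)=\ecc_G(b)\), and a vertex \(q\) of \(\ED(G)\) qualifies (as a degenerate sub-edge) exactly when every edge of \(\ED(G)\) incident to \(q\) has strictly greater eccentricity at its other endpoint. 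Scanning all edges and all vertices of \(\ED(G)\) once performs this classification in \(\Oh(m^2)\) time and produces the set \(S\) of sub-edges with \(\abs{S}=\ell \in \Oh(m^2)\).

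With \(S\) in hand, the previous theorem solves the minimum eccentricity feed-link problem for the input point \(p\) in \(\Oh(\ell) = \Oh(m^2)\) additional time, by computing for each sub-edge in \(S\) its candidate anchor (the endpoint if the eccentricity strictly increases/decreases across it, or the Euclidean foot of the perpendicular from \(p\) onto the sub-edge if the eccentricity is constant, exactly as dictated by \cref{thm::optimal_anchor_is_local_minimum_of_eccentricity}) and taking the candidate that minimizes \(\abs{pq}+\ecc_G(q)\).

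Summing the three stages gives the total running time
\[
\Oh(m^2\log n) + \Oh(m^2) + \Oh(m^2) = \Oh(m^2\log n),
\]
which yields the claimed bound. There is no real obstacle: the work is dominated by the eccentricity-diagram construction, and the only point one must be slightly careful about is the classification of degenerate sub-edges (isolated local minima at vertices of \(\ED(G)\)), which the paragraph preceding the theorem already handles explicitly.
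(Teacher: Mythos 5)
Your proposal is correct and matches the paper's own argument: the paper likewise builds \(\ED(G)\) in \(\Oh(m^2\log n)\) time, reads off the locally minimal sub-edges (constant-eccentricity edges of \(\ED(G)\) plus vertices all of whose neighbours have greater eccentricity), and then applies the \(\Oh(\ell)\)-time theorem with \(\ell\in\Oh(m^2)\). The only cosmetic difference is that the paper extracts the set \(S\) alongside the diagram construction rather than in a separate linear scan, which does not affect the bound.
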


\subsection{The Query Version}

Now we address the query version, where the network \(G\) is fixed and a query consists of the point \(p\). Throughout the following let \(S\) be the set of sub-edges of \(G\) with locally minimal eccentricity, and let \(\ell = \abs{S}\). 

Using the solution for the static problem, we can create a data structure with size \(\Oh(\ell)\) and construction time \(\Oh(m^2 \log n)\) that answers queries for the optimal feed-link in \(\Oh(\ell)\) time. This data structure consists of the set \(S\), and we obtain it by computing the eccentricity diagram and recording the local minima of the eccentricity. We improve the query time---at the expense of space consumption and construction time---by rephrasing the minimum eccentricity feed-link problem as a point location problem in a special type of Voronoi diagram.

We denote the Euclidean distance between a point \(p \in \R^2\) and a segment \(s \in S\) by \(d_2(p,s)\), i.e., \(d_2(p,s) = \min_{q \in s} \abs{pq}\). By definition, the eccentricity with respect to the network \(G\) is constant on all segments \(s \in S\). We write \(\ecc(s)\) to denote the eccentricity of the points on \(s\). With this notation, the optimal feed-link is the \emph{closest} sub-edge \(s \in S\) with respect to the additively weighted Euclidean distance \(d_2(p,s) + \ecc(s)\). 

Conversely, consider the Voronoi diagram of the line segments in \(S\) with respect to the additively weighted Euclidean distance where the weight of a segment \(s \in S\) is its eccentricity \(\ecc(s)\). This diagram splits the plane into regions whose points have a common closest segment in \(S\) with respect to the additively weighted Euclidean distance. In other words, the points in each region have their anchor of an optimal feed-link on a common sub-edge in \(S\). \Cref{fig::sunlet::minimum_eccentricity_feed_link::apollonius_diagram} shows an example of this kind of Voronoi diagram. 

\begin{definition}
	Let \(S\) be a set of line segments in the plane with weights \(w_s \in \R\) for each \(s \in S\). 
	\begin{enumerate}[(i)]
		\item The \emph{additively weighted distance} of a point \(p \in \R^2\) and a line segment \(s \in S\) is the Euclidean distance \(d_2(p,s)\) of \(p\) and \(s\) plus the weight \(w_s\) of the line segment \(s\).
		\item We call the set of points \(p \in \R^2\) to whom a line segment \( s \in S\) has the lowest additively weighted distance among all line segments in \(S\) the \emph{additive weight Voronoi cell} of \(s\), and denote it by \(V_+(s)\), i.e.,
		\[
			V_+(s) \coloneqq \set*{p \in \R^2 \colon \forall s' \in S \colon d_2(p,s) + w_s \le d_2(p,s') + w_{s'} }.
		\] 
		\item We call the subdivision of the plane into the set \(\bigcup_{s\in S} \partial V_+(s)\) and the connected regions of 
		\[\R^2 \setminus \left(\bigcup_{s\in S} \partial V_+(s)\right),\]  the \emph{additively weighted Voronoi diagram} of the line segments in \(S\) with respect to the weights \(w_s\), \(s \in S\). 
	\end{enumerate}
\end{definition}

Voronoi diagrams of points with additive weights \cite[Section~3.1.2]{oktabe2000spatial} and Voronoi diagrams of line segments \cite[Section~3.5]{oktabe2000spatial} have received considerable attention in the literature and are thus well studied concepts. To the best of the authors' knowledge, the following quotation is the only direct mentioning: 
\begin{quote}
\enquote{In general, the Voronoi diagram of segments where each segment carries an additive weight is not a well-behaved Voronoi diagram: Voronoi regions can be disconnected, and the diagram can have quadratic complexity.}---\textcite{cheong2006throwing}
\end{quote}%

The comprehensive study of additively weighted Voronoi diagrams of line segments is beyond the scope of this work. Nonetheless, we summarize a few observations about this type of Voronoi diagram. \Cref{fig::bisector_of_weighted_segments::bisector_with_examples} shows the (ill-behaved) additively weighted Voronoi diagram of two line segments. 
\begin{figure}
    \centering
    \begin{subfigure}[t]{0.48\linewidth}
        \centering
        \includegraphics[scale=0.9,page=3]{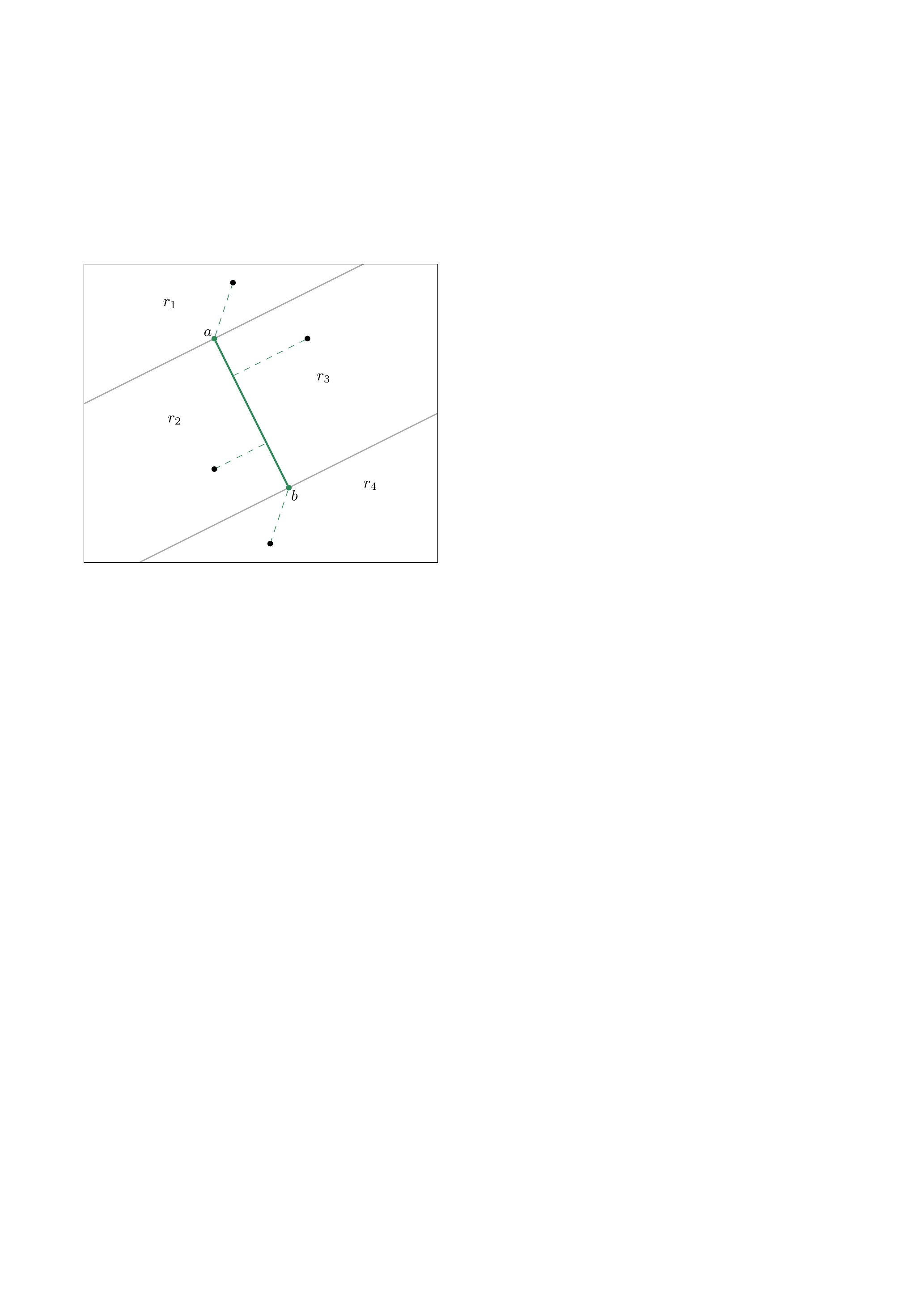}
        \caption{Two line segments and their weights.\label{fig::bisector_of_weighted_segments::segments_and_weights}}
    \end{subfigure} 
    \begin{subfigure}[t]{0.48\linewidth}
        \centering
        \includegraphics[scale=0.9,page=4]{bisector_of_weighted_segments}
        \caption{A part of the subdivision of the plane into regions (green) with smaller additively weighted Euclidean distance to \(ab\) than to \(cd\) and their complement (red). \label{fig::bisector_of_weighted_segments::bisector}}
    \end{subfigure}
    
    \caption{The bisector \subref{fig::bisector_of_weighted_segments::bisector} of two line segments with additive weights \subref{fig::bisector_of_weighted_segments::segments_and_weights}. The visible part of this bisector consists of line segments, parabolic arcs, and hyperbolic arcs. The region of points closer to the green segment are non-convex and disconnected; this misbehaving bisector splits the plane in three. \label{fig::bisector_of_weighted_segments::bisector_with_examples}}
\end{figure}

\begin{theorem}
The additively weighted Voronoi diagram of \(\ell\) line segments is an planar subdivision of size \(\Theta(\ell^2)\) whose edges are parts of lines (lines, rays, line segments), parts of parabolas (parabolas, parabolic rays, parabolic arcs), and parts of hyperbolas (hyperbolas, hyperbolic rays, hyperbolic arcs). 
\end{theorem}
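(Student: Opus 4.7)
The plan is to establish the three asserted properties---planarity, $\Theta(\ell^2)$ complexity, and the edge classification---by dissecting pairwise bisectors and then counting. I would first nail down the local structure (which gives both the edge shapes and an $O(\ell^2)$ upper bound), and then produce an extremal example for the matching $\Omega(\ell^2)$ lower bound.

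For the edge shapes, I would fix two weighted segments $s_i = a_i b_i$ with weights $w_i$, $i=1,2$, and split the plane according to which feature of each segment realizes the unweighted Euclidean distance. Each $s_i$ partitions $\mathbb{R}^2$ into (a) two half-planes beyond its endpoints where $d_2(p,s_i) = \abs{p a_i}$ or $\abs{p b_i}$, and (b) the perpendicular strip over $s_i$ where $d_2(p,s_i) = d(p, \mathrm{line}(s_i))$. Overlaying the two partitions yields a refinement of constant size in which the bisector equation $d_2(p,s_1)+w_1 = d_2(p,s_2)+w_2$ reduces, cell by cell, to exactly one of three templates. Endpoint versus endpoint gives $\abs{p e_1} - \abs{p e_2} = w_2 - w_1$, whose locus is a branch of a hyperbola with foci $e_1, e_2$, degenerating to the perpendicular bisector (a line) when $w_1=w_2$. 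Endpoint versus supporting line gives $\abs{p e} - d(p,\ell) = w_2 - w_1$, whose locus is a parabola with focus $e$ and directrix obtained from $\ell$ by a parallel translation of magnitude $\abs{w_2-w_1}$. Line versus line gives $d(p,\ell_1) - d(p,\ell_2) = w_2 - w_1$, which is linear in the coordinates of $p$ and hence a line or a pair of lines. Every pairwise bisector is therefore a connected chain of $O(1)$ arcs from the listed families, so the diagram is a planar subdivision whose edges are exactly of the stated types.

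For the $O(\ell^2)$ upper bound I would use the standard overlay argument: the additively weighted Voronoi diagram is contained in the arrangement of the $\binom{\ell}{2}$ pairwise bisectors, each of constant complexity; this arrangement has $O(\ell^2)$ vertices, edges, and faces by Euler's formula for planar graphs, and the Voronoi diagram is a subgraph of it.

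The main obstacle will be the $\Omega(\ell^2)$ lower bound, since additive weights can make cells disconnected and one must demonstrate that this happens often enough. I would construct an instance with two groups of $\ell/2$ nearly parallel short segments, one group carrying small weights and the other carefully tuned larger weights, positioned so that the cell $V_+(s)$ of each light-weight segment is sliced into $\Omega(\ell)$ components by the influence of segments from the other group. Mimicking known quadratic-complexity constructions for additively weighted Voronoi diagrams of points, I would tune the weights and positions so that each bisector piece actually appears on the boundary of $V_+(s)$ rather than being masked by a third, closer segment; controlling this masking across all $\Omega(\ell^2)$ intended components is the technically delicate step.
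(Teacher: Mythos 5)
Your bisector case analysis is sound and matches what the paper illustrates for two segments: splitting the plane by which feature (endpoint or supporting line) of each segment realizes the distance yields constantly many cells, in each of which the bisector piece is a hyperbolic arc, a parabolic arc, or a straight piece. For what it is worth, the paper does not actually prove this theorem---it presents it as a summary of observations, reading the edge types off the two-segment example in its figure and attributing the quadratic complexity to the quoted remark of \textcite{cheong2006throwing}---so here you are supplying detail the paper omits.

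The genuine gap is in your upper bound. The arrangement of the \(\binom{\ell}{2}\) pairwise bisectors does \emph{not} have complexity \(O(\ell^2)\): it consists of \(\Theta(\ell^2)\) constant-complexity curves, any two of which may cross, so the arrangement has up to \(\Theta(\ell^4)\) vertices; Euler's formula only transfers a vertex bound to edges and faces, it does not produce one. Containment in that arrangement therefore yields \(O(\ell^4)\), not \(O(\ell^2)\), and even counting only Voronoi vertices (points weighted-equidistant from three sites) gives a priori \(O(\ell^3)\). The standard route---and the one the paper itself leans on for the construction algorithm---is to view the diagram as the minimization diagram of the \(\ell\) lifted distance functions \(p \mapsto d_2(p,s)+w_s\) in \(\R^3\); the lower envelope machinery of \textcite{agarwal1996overlay} then gives \(O(\ell^{2+\epsilon})\), and a clean \(O(\ell^2)\) requires a further argument exploiting the specific shape of these surfaces rather than a generic overlay bound. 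Your lower bound is likewise only a plan: you correctly identify the masking of intended bisector pieces by third sites as the delicate step, but that step is precisely where the construction has to be carried out (or, as the paper does, delegated to the cited quadratic-complexity example).
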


Using common techniques from planar point location \cite{snoeyink2004point,edelsbrunner1986optimal}, we determine the region that contains a query point \(p\) in \(\Oh(\log \ell)\) time. The storage requirement and construction time of this data structure are linear in the size of the planar subdivision, provided that the subdivision is monotone. We can make the subdivision monotone in \(\Oh(\ell^2 \log \ell)\) time using plane sweep. 

We briefly discuss the construction of additively weighted Voronoi diagrams of line segments. The algorithms for abstract Voronoi diagrams \cite{klein2009abstract} only work for (generalizations of) Voronoi diagrams whose Voronoi cells are connected, a property that is violated in our case. Despite the lack of existing theory, we can compute the desired diagram using the relationship between Voronoi diagrams and lower envelopes in three dimensions via lifting maps. \textcite{setter2010constructing} provide a comprehensive review of this observation by \textcite{edelsbrunner1986voronoi} and its practical implications. In a nutshell the idea is as follows: consider the lower envelope of the graphs of the distance functions of all sites of the Voronoi diagram. The projection of this envelope onto the plane yields the desired Voronoi diagram.  \textcite{agarwal1996overlay} provide a divide-and-conquer algorithm that computes the lower envelope of the (weighted) distance functions of \(\ell\) sites in \(\Oh(\ell^{2+\epsilon})\) time. The randomized version of this algorithm, which was proposed by \textcite{setter2010constructing}, accomplishes the same task in \(\Oh(\ell^2\log \ell)\) expected time. 

\begin{theorem} \label{thm::data_structure_minimum_eccentricity_feed_link_query}
 Let \(G\) be a geometric network with \(n\) vertices and \(m\) edges. Furthermore, let \(\ell\) be the number of sub-edges of \(G\) with locally minimal eccentricity. There is a data structure that can perform queries for a minimum eccentricity feed-link for any point \(p \in \R^2\) in \(\Oh(\log \ell)\) time. This data structure has a space requirement of \(\Oh(\ell^{2})\). It can be constructed in \(\Oh(\ell^{2+\epsilon})\) time or, alternatively, in   \(\Oh(\ell^2 \log \ell)\) expected time, both provided that the eccentricity diagram of \(G\) is known a-priori. 
\end{theorem}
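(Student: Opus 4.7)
The plan is to materialize the connection between the minimum eccentricity feed-link problem and the additively weighted Voronoi diagram of line segments as described just before the statement. First, assuming the eccentricity diagram \(\ED(G)\) is available, I would scan through it in \(\Oh(m^2)\) time to extract the set \(S\) of the \(\ell\) sub-edges of \(G\) with locally minimal eccentricity; each such sub-edge is either a constant-eccentricity edge of \(\ED(G)\) or a vertex of \(\ED(G)\) whose two neighbours both have strictly larger eccentricity. Alongside each \(s \in S\), I would store its (constant) eccentricity \(\ecc_G(s)\), which will serve as the additive weight \(w_s\).

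Next, I would invoke \cref{thm::optimal_anchor_is_local_minimum_of_eccentricity} to reduce the feed-link query to a nearest-site query. For any query point \(p \in \R^2\), that lemma forces the optimal anchor \(q\) to lie on some \(s \in S\); moreover, if \(s\) has positive length the optimal \(q\) is the Euclidean closest point of \(s\) to \(p\). Consequently the optimal objective value over \(s\) is exactly the additively weighted Euclidean distance \(d_2(p,s) + w_s\), and the sub-edge realizing the minimum feed-link eccentricity is precisely the site of \(S\) whose additively weighted Voronoi cell \(V_+(s)\) contains \(p\).

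I would then build the additively weighted Voronoi diagram of \(S\) using the lifting-map/lower-envelope approach cited in the preceding paragraph: view each \(d_2(\cdot,s)+w_s\) as a bivariate surface, take the lower envelope of the \(\ell\) surfaces, and project it to the plane. Invoking the Agarwal--Schwarzkopf--Sharir divide-and-conquer algorithm gives construction in \(\Oh(\ell^{2+\epsilon})\) time; the randomized variant of Setter et al.\ gives expected \(\Oh(\ell^2 \log \ell)\) time. Either way the output is a planar subdivision of size \(\Oh(\ell^2)\) whose edges are pieces of lines, parabolas, and hyperbolas. To prepare it for point location I would sweep the subdivision once, cutting each non-monotone face at its horizontal tangents, which costs an additional \(\Oh(\ell^2 \log \ell)\) and preserves the \(\Oh(\ell^2)\) size bound.

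For queries, I would feed the resulting monotone subdivision to a standard planar point-location structure (Kirkpatrick refinement, or trapezoidal map à la Edelsbrunner--Guibas--Stolfi/Snoeyink), obtaining \(\Oh(\ell^2)\) space and \(\Oh(\log \ell)\) query time. A query point \(p\) locates the face belonging to a unique \(s \in S\); computing the closest point of \(s\) to \(p\) is then \(\Oh(1)\), and the feed-link \(pq\) together with its eccentricity \(|pq|+w_s\) can be reported within the stated bound. The main obstacle is the one flagged by the quotation from \textcite{cheong2006throwing}: the cells \(V_+(s)\) need not be connected and the bisectors mix three curve types, so the abstract Voronoi machinery of Klein does not apply; relying on the lifted-surface lower envelope rather than a direct incremental construction, and monotonizing the output before invoking point location, is what makes the claimed time and space bounds go through.
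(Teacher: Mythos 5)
Your proposal is correct and follows essentially the same route as the paper: reduce the query to locating \(p\) in the additively weighted Voronoi diagram of the \(\ell\) locally minimal sub-edges (with eccentricities as weights), build that diagram via the lower-envelope/lifting-map approach of Agarwal et al.\ or its randomized variant by Setter et al., monotonize the resulting \(\Oh(\ell^2)\)-size subdivision, and apply standard planar point location for \(\Oh(\log \ell)\) queries. The paper presents this as the surrounding discussion rather than a formal proof, but the content is the same.
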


\section{Conclusions and Future Work} \label{sec::future_work}

We introduced new notions to capture farthest-point information in networks as well as data structures to store and access this information efficiently. We seek to improve the bounds on the construction time and space requirements in future work. For instance, our approach ignores any structure that the network might have and requires all pairs shortest path distances.

We presented the feed-link problem that kindled this research alongside with a first solution for its static and query version. The feed-link problem can be extended in many ways. For instance, we could connect several sites simultaneously to a network minimizing the largest distance to the nearest site. Further, we could require the extension of a planar network to be planar as well or add other restrictions such as obstacles. \textcite{aronov2011connect} discuss the latter two for the minimum dilation feed-link problem. 

Finally, the additively weighted Voronoi diagram of line segments demands more investigation, because of its relation to the query version of the feed-link problem. 

\printbibliography

\end{document}